\newcommand{\ETH}{$\mathsf{ETH}$ }
\newcommand{\NP}{$\mathsf{NP}$ }
\newcommand{\F}{{\textbf{F}}}
\newcommand{\tw}{{\rm{tw}}}
\newcommand{\pw}{{\rm{pw}}}
\newcommand{\K}{{\textbf{P}}}
\newcommand{\FPT}{$\mathsf{FPT}$ }
\newtheorem{theorem}{Theorem}
\newtheorem{lemma}[theorem]{Lemma}
\newtheorem{observation}[theorem]{Observation}
\newtheorem{remark}[theorem]{Remark}
\newtheorem{conjecture}[theorem]{Conjecture}
\newtheorem{corollary}[theorem]{Corollary}
\newtheorem{proposition}[theorem]{Proposition}
\theoremstyle{definition}
\newtheorem{definition}[theorem]{Definition}
\begin{document}

\title{$\mathcal{P}$-matchings Parameterized by Treewidth}

\author{Juhi~Chaudhary}

\author{Meirav~Zehavi}

\affil{\small Department of Computer Science, Ben-Gurion~University~of~the~Negev, 
Beer-Sheva, 
Israel\\ \texttt{juhic@post.bgu.ac.il,  meiravze@bgu.ac.il}}

\date{}
\maketitle
\begin{abstract}
  A \emph{matching} is a subset of edges in a graph $G$ that do not share an endpoint. A matching $M$ is a \emph{$\mathcal{P}$-matching} if the subgraph of $G$ induced by the endpoints of the edges of $M$ satisfies property $\mathcal{P}$. For example, if the property $\mathcal{P}$ is that of being a matching, being acyclic, or being disconnected, then we obtain an \emph{induced matching}, an \emph{acyclic matching}, and a \emph{disconnected matching}, respectively. In this paper, we analyze the problems of the computation of these matchings from the viewpoint of Parameterized Complexity with respect to the parameter \emph{treewidth}.

\end{abstract}

 \textbf{Keywords:}
 Matching, Treewidth, Parameterized Algorithms, (Strong) Exponential Time \\Hypothesis.

\section{Introduction}
Matching in graphs is a central topic of Graph Theory and Combinatorial Optimization \cite{lovaz}. Matchings possess both theoretical significance and practical applications, such as the assignment of new physicians to hospitals, students to high schools, clients to server clusters, kidney donors to recipients \cite{manlove}, and so on. Additionally, the field of competitive optimization games on graphs has witnessed substantial growth in recent years, where matching serves as a valuable tool for determining optimal solutions or bounds in such games \cite{bachstein,goddard1}. The study of matchings is closely related to the concept of \emph{edge colorings} as well \cite{baste2020approximating,basteurm,vizing}, and the minimum number of matchings into which the edge set of a graph $G$ can be partitioned is known as the \emph{chromatic index} of $G$ \cite{vizing}. 

Given a graph $G$, \textsc{Maximum Matching} is the problem of finding a matching of maximum size (number of edges) in $G$. A matching $M$ is said to be a \emph{$\mathcal{P}$-matching} if $G[V_{M}]$ (the subgraph of $G$ induced by the endpoints of edges in $M$) has property $\mathcal{P}$, where $\mathcal{P}$ is some graph property. The problem of deciding whether a graph admits a $\mathcal{P}$-matching of a given size has been investigated for
many different properties \cite{basteurm1,francis, goddard, golumbic, panda1, panda, stockmeyer}. If the property $\mathcal{P}$ is that of being a graph, a disjoint union of $K_{2}'s$, a forest, a connected graph, a disconnected graph, or having a unique perfect matching, then a $\mathcal{P}$-matching is a \emph{matching} \cite{micali}, an \emph{induced matching} \cite{stockmeyer}, an \emph{acyclic matching} \cite{goddard}, a \emph{connected matching}\footnote{This name is also used for a different problem where we are asked to find a matching $M$ such that every pair of edges in $M$ has a common edge \cite{cameron1}.}\cite{goddard}, a \emph{disconnected matching}\footnote{In this paper, we are using a different (more general) definition for disconnected matching than the one mentioned in \cite{goddard}.} \cite{goddard}, and a \emph{uniquely restricted matching} \cite{golumbic}, respectively. Notably, only the optimization problem corresponding to matching \cite{micali} and connected matching \cite{goddard} are polynomial-time solvable for a general graph, while the decision problems corresponding to other above-mentioned variants of matching are $\mathsf{NP}$-$\mathsf{complete}$ \cite{goddard,golumbic,stockmeyer}. 

Given a graph $G$ and a positive integer $\ell$, the \textsc{Induced Matching} problem asks whether $G$ has an induced matching of size at least $\ell$. The concept of induced matching was introduced by Stockmeyer and Vazirani as the “risk-free” marriage problem in 1982 \cite{stockmeyer}. Since then, this concept, and the corresponding \textsc{Induced Matching} problem, have been studied extensively due to their wide range of applications and connections to other graph problems \cite{cameron, cooley, klemz, Ko, moser, stockmeyer}. Similarly, the \textsc{Acyclic Matching} problem considers a graph $G$ and a positive integer $\ell$, and asks whether $G$ contains an acyclic matching of size at least $\ell$. Goddard et al. \cite{goddard} introduced the concept of acyclic matching, and since then, it has gained significant popularity in the literature \cite{baste2020approximating,baste2018degenerate,furst2019some,panda1,panda}. For a fixed $c\in \mathbb{N}$, a matching $M$ is \emph{$c$-disconnected} if $G[V_{M}]$ has at least $c$ connected components. In the \textsc{$c$-Disconnected Matching} problem, given a graph $G$ and a positive integer $\ell$, we seek to determine if $G$ contains a $c$-disconnected matching of size at least $\ell$. In the \textsc{$c$-Disconnected Matching} problem, if $c$ is a part of the input, then the problem that arises is known as the \textsc{Disconnected Matching} problem. Goddard et al. \cite{goddard} introduced the concept of disconnected matching along with several other variations of matching and asked about the complexity of determining the maximum size of a matching whose vertex set induces a disconnected graph, which is a restricted version of $c$-disconnected matching studied in this paper. 

Similar to the chromatic index, there is a corresponding notion of edge coloring for other variants of matching also. For example, the \emph{strong chromatic index} is the minimum number of induced matchings (also known as \emph{strong matchings}) into which the edge set of $G$ can be partitioned \cite{faudree}. The \emph{uniquely restricted chromatic index} \cite{basteurm} and the \emph{acyclic chromatic index} \cite{baste2018degenerate} are defined similarly in the literature. While Vizing’s famous theorem \cite{vizing} states that the chromatic index of a simple graph $G$ is either $\rm{\Delta(G)}$ or $\rm{\Delta(G)}+1$, where $\rm{\Delta(G)}$ denotes the maximum degree of a vertex in $G$, two famous open conjectures due to Alon, Sudakov, and Zaks \cite{alon}, and due to Erd\"os and Ne\v set\v ril \cite{erdos} concern upper bounds on the acyclic chromatic index and strong chromatic index in terms of $\rm{\Delta(G)}$, respectively.

The parameter considered in this paper is \emph{treewidth}, a structural parameter that indicates how much a graph resembles a tree. Robertson and Seymour introduced the notion of treewidth in their celebrated work on graph minors \cite{robertson}, and since then, over 4260 papers on google scholar consider treewidth as a parameter in the context of Parameterized Complexity. In practice also, graphs of bounded treewidth appear in many different contexts; for example, many probabilistic networks
appear to have small treewidth \cite{bod}. Thus, concerning the problems studied in this paper, after the \emph{solution size}, treewidth is one of the most natural parameters. In fact, many of the problems investigated in this paper have already been analyzed with respect to treewidth as the parameter.

Formally, the decision versions of the problems associated with the $\mathcal{P}$-matchings studied in this paper are defined below:
\bigskip

\noindent\fbox{ \parbox{160mm}{
		\noindent \underline{\textsc{Induced Matching:}}\\
		\textbf{Input:} An undirected graph $ G$ with $|V(G)|=n$ and a positive integer $\ell$.\\
		\textbf{Question:} Does there exist a set $M\subseteq E(G)$ of cardinality at least $\ell$ such that $G[V_{M}]=M$?}}
		\medskip 

		\noindent\fbox{ \parbox{160mm}{
		\noindent \underline{\textsc{Acyclic Matching:}}\\
		\textbf{Input:} An undirected graph $ G$ with $|V(G)|=n$ and a positive integer $\ell$.\\
		\textbf{Question:} Does there exist a set $X\subseteq V(G)$ of cardinality at least $\ell$ such that $G[X]$ is a forest and $G[X]$ contains a perfect matching?}}
		\medskip

		\noindent\fbox{ \parbox{160mm}{
		\noindent \underline{\textsc{$c$-disconnected Matching:}}\\
		\textbf{Input:} An undirected graph $ G$ with $|V(G)|=n$ and a positive integer $\ell$.\\
		\textbf{Question:} Does there exist a set $M\subseteq E(G)$ of cardinality at least $\ell$ such that $G[V_{M}]$ has at least $c$ connected components for some fixed integer $c\geq 1$?}}
		\medskip

		\noindent\fbox{ \parbox{160mm}{
		\noindent \underline{\textsc{Disconnected Matching:}}\\
		\textbf{Input:} An undirected graph $ G$ with $|V(G)|=n$ and two positive integers $\ell$ and $c$.\\
		\textbf{Question:} Does there exist a set $M\subseteq E(G)$ of cardinality at least $\ell$ such that $G[V_{M}]$ has at least $c$ connected components?}}
		\medskip 
		
\section{Related Work}\label{related}	
In what follows, we present a brief survey of algorithmic results concerning the variants of matchings discussed in this paper.

\medskip

\noindent \textbf{Induced Matching.}
The \textsc{Induced Matching} problem exhibits different computational complexities depending on the class of graphs considered. It is known to be $\mathsf{NP}$-$\mathsf{complete}$ for bipartite graphs of maximum degree $4$ \cite{stockmeyer}, $k$-regular graphs for $k\geq 4$ \cite{zito}, and planar graphs of maximum degree $4$ \cite{Ko}. On the positive side, the problem is known to be polynomial-time solvable for many classes of graphs, such as chordal graphs \cite{cameron}, chordal bipartite graphs \cite{cameron2}, trapezoid graphs, interval-dimension graphs, and cocomparability graphs \cite{golumbic1}. Recently, induced matching on random graphs has been studied by Cooley et al. \cite{cooley}. 
\medskip

From the viewpoint of Parameterized Complexity, in \cite{moser}, Moser and Sikdar showed that \textsc{Induced Matching} is fixed-parameter tractable ($\mathsf{FPT}$) when parameterized by treewidth by developing an $\mathcal{O}(4^{\tw}\cdot n)$-time dynamic programming algorithm. In the same paper (\cite{moser}), when the parameter is the size of the matching $\ell$, \textsc{Induced Matching} was shown to be \FPT for line graphs, planar graphs, bounded-degree graphs, and graphs of girth at least $6$ that include graphs like $C_{4}$-free graphs\footnote{Here, $C_{n}$ denotes a cycle on $n$ vertices.}. On the other hand, for the same parameter, that is, $\ell$, the problem is $\mathsf{W[1]}$-\textsf{hard} for bipartite graphs \cite{moser}. Song \cite{song} showed that given a Hamiltonian cycle
in a Hamiltonian bipartite graph, \textsc{Induced Matching} is $\mathsf{W[1]}$-\textsf{hard} with respect to $\ell$ and cannot be solved in time $n^{o(\sqrt{\ell})}$ unless $\mathsf{W[1]}=\mathsf{FPT}$, where $n$ is the number of vertices in the input graph. \textsc{Induced Matching} with respect to \emph{below guarantee} parameterizations have also been studied \cite{koana,moser2009parameterized,xiao}.
\medskip

\noindent \textbf{Acyclic Matching.}
Baste et al. \cite{baste2018degenerate} demonstrated that finding a maximum cardinality 1-degenerate matching in a graph $G$ is
equivalent to finding a maximum acyclic matching in $G$. \textsc{Acyclic Matching} is known to be $\mathsf{NP}$-$\mathsf{complete}$ for perfect elimination bipartite graphs, a subclass
of bipartite graphs \cite{panda}, star-convex bipartite graphs \cite{panda1}, and dually chordal graphs \cite{panda1}. On the positive side, \textsc{Acyclic Matching} is polynomial-time solvable for chordal graphs \cite{baste2018degenerate} and bipartite permutation graphs \cite{panda}. F\"urst and Rautenbach showed that it is $\mathsf{NP}$-\textsf{hard} to decide whether
a given bipartite graph of maximum degree at most $4$ has a maximum matching
that is acyclic \cite{furst2019some}. In the same paper (\cite{furst2019some}), the authors further characterized the graphs for which every maximum matching is acyclic and gave linear-time algorithms to compute a maximum acyclic matching in graph classes like $P_{4}$-free graphs and $2P_{3}$-free graphs. Additionaly, Panda and Chaudhary \cite{panda1} showed that \textsc{Acyclic Matching} is hard to approximate within factor $n^{1-\epsilon}$ for every $\epsilon>0$ unless $\mathsf{P=NP}$.
\medskip

From the viewpoint of Parameterized Complexity, Hajebi and Javadi \cite{hajebi} discussed the first parameterization results for the \textsc{Acyclic Matching} problem. They showed that \textsc{Acyclic Matching} is $\mathsf{FPT}$ when parameterized by treewidth using Courcelle's theorem. Furthermore, they showed that the problem is $\mathsf{W[1]}$-\textsf{hard} on bipartite graphs when parameterized by the size of the matching. 
However, under the same parameter, the authors showed that the problem is $\mathsf{FPT}$ for line graphs, $C_{4}$-free graphs, and every proper minor-closed class of graphs. In the same paper (\cite{hajebi}), \textsc{Acyclic Matching} was shown to be \FPT when parameterized by the size of the matching plus the number of cycles of length four in the given graph. 

\medskip

\noindent \textbf{$c$-Disconnected Matching and Disconnected Matching.} 
For every fixed integer $c\geq 2$, $c$-\textsc{Disconnected Matching} is known to be $\mathsf{NP}$-$\mathsf{complete}$ even for bounded diameter bipartite graphs \cite{gomes}. On the other hand, for $c=1$, $c$-\textsc{Disconnected Matching} is the same as \textsc{Maximum Matching}, which is known to be polynomial-time solvable \cite{micali}. Regarding disconnected matchings, \textsc{Disconnected Matching} is $\mathsf{NP}$-$\mathsf{complete}$ for chordal graphs \cite{gomes} and polynomial-time solvable for interval graphs \cite{gomes}. 

\medskip

From the viewpoint of Parameterized Complexity, Gomes et al. \cite{gomes} proved that for graphs with a polynomial number of minimal separators, \textsc{Disconnected Matching} parameterized by the number of connected components, belongs to the class $\mathsf{XP}$. Furthermore, unless $\mathsf{NP}\subseteq \mathsf{coNP/poly}$, \textsc{Disconnected Matching} does not admit a polynomial kernel when parameterized by the vertex cover number plus the size of the matching nor when parameterized by the vertex deletion
distance to clique plus the size of the matching. In the same paper (\cite{gomes}), the authors also proved that \textsc{Disconnected Matching} is \FPT when parameterized by treewidth (\tw). They used the standard dynamic programming technique, and the running time of their algorithm is $\mathcal{O}(8^{\tw}\cdot \eta_{\tw+1}^{3}\cdot n^{2})$, where
$\eta_{i}$ is the $i$-th Bell number\footnote{The \emph{Bell number} $\eta_{i}$ counts the number of different ways to partition a set that has exactly $i$ elements. Mathematically,
$\eta_{i+1}=\displaystyle\sum_{k=0}^i{i \choose k}\eta_{k}.$
}. Further, we mention the following proposition, which is an immediate consequence of the fact that \textsc{Induced Matching} is a special case of \textsc{Disconnected Matching}.

 \begin{proposition} [\cite{gomes}] \label{relatedobs}
\textsc{Disconnected Matching} is $\mathsf{NP}$-$\mathsf{complete}$ for every graph class for which \textsc{Induced Matching} is $\mathsf{NP}$-$\mathsf{complete}$.
\end{proposition}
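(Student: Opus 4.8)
The plan is to exhibit a trivial polynomial-time reduction from \textsc{Induced Matching} to \textsc{Disconnected Matching} that leaves the input graph untouched, so that membership in any fixed graph class is preserved, and to observe that \textsc{Disconnected Matching} lies in $\mathsf{NP}$. Concretely, I would argue separately that the problem is in $\mathsf{NP}$ and that it is $\mathsf{NP}$-hard on every graph class $\mathcal{G}$ on which \textsc{Induced Matching} is $\mathsf{NP}$-hard.

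For membership in $\mathsf{NP}$: given $(G,\ell,c)$, a certificate is a set $M \subseteq E(G)$; one checks in polynomial time that $M$ is a matching, that $|M| \ge \ell$, and that $G[V_M]$ has at least $c$ connected components. Since this holds for arbitrary inputs, it holds in particular when $G$ is restricted to $\mathcal{G}$.

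For hardness, map an instance $(G,\ell)$ of \textsc{Induced Matching} to the instance $(G,\ell,\ell)$ of \textsc{Disconnected Matching}; this is computable in linear time and keeps $G$ (hence its membership in $\mathcal{G}$) unchanged. Correctness rests on the single nontrivial observation, which I would prove first: for any matching $M$ of $G$, the graph $G[V_M]$ has at most $|M|$ connected components, with equality precisely when each component is a single edge, i.e.\ when $M$ is an induced matching. Given this, the forward direction is immediate: from an induced matching of size at least $\ell$, take any $\ell$ of its edges to get a matching $M$ with $|M|=\ell$ and $G[V_M]$ having exactly $\ell$ components. For the reverse direction, fix a matching $M$ witnessing $(G,\ell,\ell)$, pick one edge from each connected component of $G[V_M]$, and note that since these at least $\ell$ edges lie in pairwise distinct components of $G[V_M]$, no edge of $G$ joins endpoints of two of them, so they form an induced matching of size at least $\ell$.

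There is essentially no real obstacle here: the only point requiring care is the reverse direction, where a witnessing matching for \textsc{Disconnected Matching} need not itself be induced but still contains an induced matching of the required size after discarding all but one edge per component. Combining the two parts shows \textsc{Disconnected Matching} is $\mathsf{NP}$-complete on every graph class $\mathcal{G}$ on which \textsc{Induced Matching} is $\mathsf{NP}$-complete.
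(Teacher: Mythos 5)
Your proposal is correct, and it formalizes exactly the observation the paper relies on, namely that \textsc{Induced Matching} is a special case of \textsc{Disconnected Matching} (take $c=\ell$ and show equivalence via the fact that $G[V_M]$ has at most $|M|$ components with equality iff $M$ is induced). This matches the paper's (cited) justification, just spelled out in detail.
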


\section{Main Results}
\label{sec:main}

In this paper, we consider the parameter to be the treewidth of the input graph, and as is customary in the field, we suppose that the input also consists of a tree decomposition $\mathcal{T} = (\mathbb{T},\{\mathcal{B}_{x}\}_{x\in V(\mathbb{T})})$ of width \tw~of the input graph.

First, in Section \ref{IM}, we present a $3^{\tw}\cdot \tw^{\mathcal{O}(1)}\cdot n$ time algorithm for \textsc{Induced Matching}, improving upon the $\mathcal{O}(4^{\tw}\cdot n)$ time bound by Moser and Sikdar \cite{moser}. For this purpose, we use a nice tree decomposition that satisfies the ``deferred edge property'' (defined in Section \ref{PC}) and the fast subset convolution (see Section \ref{AD}) for the join nodes.

\begin{restatable}{theorem}{inducedmatching}\label{im}
\textsc{Induced Matching} can be solved in $3^{\tw}\cdot \tw^{\mathcal{O}(1)}\cdot n$ time by a deterministic algorithm.
\end{restatable}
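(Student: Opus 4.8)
The plan is to run dynamic programming over a nice tree decomposition augmented so that every edge is "introduced" only once, at a dedicated introduce-edge bag — this is the deferred edge property the authors announce they will define. For a bag $\mathcal{B}_x$, a natural state would assign each vertex one of three labels, but the subtlety with induced matchings is that we must forbid any edge between two matched vertices that are not themselves matched to each other. So I would use the label set $\{\mathtt{0},\mathtt{1},\mathtt{M}\}$ with the meaning: $\mathtt{0}$ = the vertex is not in $V_M$ and will never be; $\mathtt{1}$ = the vertex is currently unmatched but already placed in $V_M$ (it has a partner somewhere below that has been forgotten, or it is waiting for a future partner) — actually, to keep the count at $3$ I would instead use $\{\mathtt{0},\mathtt{1},\mathtt{2}\}$ where $\mathtt{0}$ means "untouched / forbidden from $V_M$", $\mathtt{1}$ means "in $V_M$ and still exposed (its matching edge is not yet decided among edges still to be introduced)", and $\mathtt{2}$ means "in $V_M$ and already matched". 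A partial solution consistent with a coloring $c$ of $\mathcal{B}_x$ is a matching $M$ in the graph $G_x$ processed so far such that: vertices colored $\mathtt{0}$ are disjoint from $V_M$; vertices colored $\mathtt{1}$ or $\mathtt{2}$ lie in $V_M$; matched-to-matched non-edges are respected among all \emph{introduced} edges of $G_x$; and the $\mathtt{2}$-vertices are exactly those already saturated. For each $x$ and each coloring $c$ we store $A_x[c]$, the maximum size of such a partial matching, or $-\infty$.

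Next I would spell out the transitions. \textbf{Leaf}: empty bag, value $0$. \textbf{Introduce vertex $v$}: $v$ gets color $\mathtt{0}$ (value unchanged) or $\mathtt{1}$ (value unchanged — it enters $V_M$ but is not yet matched); color $\mathtt{2}$ is not allowed at introduction since $v$ has no introduced incident edge yet. \textbf{Introduce edge $uv$}: if $u,v$ are both colored $\mathtt{1}$, we may use $uv$ as a matching edge, which recolors both to $\mathtt{2}$ and increments the size by $1$; we also keep the option of not using $uv$, but then we must forbid the configuration in which both endpoints end up in $V_M$ without $uv$ in $M$, i.e. the "not used" branch is valid only if at least one of $u,v$ is colored $\mathtt{0}$. (If one endpoint is $\mathtt{0}$ and the other is in $V_M$, the edge is simply ignored; this is the crucial place the induced-ness is enforced.) \textbf{Forget $v$}: a forgotten vertex must have been fully resolved, so we take the max over the values where $v$ was colored $\mathtt{0}$ or $\mathtt{2}$ — never $\mathtt{1}$, since a vertex placed in $V_M$ but never matched cannot be completed after being forgotten. \textbf{Join of $x_1,x_2$ into $x$ with $\mathcal{B}_{x_1}=\mathcal{B}_{x_2}=\mathcal{B}_x$}: a vertex colored $\mathtt{0}$ must be $\mathtt{0}$ in both children; a vertex colored $\mathtt{2}$ must be $\mathtt{2}$ in exactly one child and $\mathtt{1}$ in the other (its matching edge was introduced in exactly one branch); a vertex colored $\mathtt{1}$ must be $\mathtt{1}$ in both children. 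We then set $A_x[c] = \max$ over all compatible $(c_1,c_2)$ of $A_{x_1}[c_1]+A_{x_2}[c_2]-(\text{number of }\mathtt{1}\text{-labelled vertices of }c)$, correcting for the double-counted exposed vertices — and, importantly, there is no double-counting of matching \emph{edges} because each edge is introduced in a unique bag. The answer is $\max_{c} A_{r}[c]$ over colorings of the (empty) root bag, accepting iff it is $\ge\ell$.

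The main obstacle, and the reason the running time is $3^{\tw}$ rather than e.g. $4^{\tw}$ or $6^{\tw}$, is the join node: a naive evaluation costs time proportional to the number of compatible pairs $(c_1,c_2)$, which is $\Theta(5^{|\mathcal{B}_x|})$ (each vertex of $\mathcal{B}_x$ has colors $\mathtt{0}\mapsto(\mathtt{0},\mathtt{0})$, $\mathtt{1}\mapsto(\mathtt{1},\mathtt{1})$, $\mathtt{2}\mapsto(\mathtt{2},\mathtt{1})$ or $(\mathtt{1},\mathtt{2})$), too slow. I would reorganize the join as a sum over which subset $S\subseteq\{$ $\mathtt{2}$-vertices $\}$ is "resolved in child $1$", and recognize the resulting convolution over the subset lattice as a \emph{fast subset convolution} in the sense of the subsection the authors reference: after fixing the $\mathtt{0}$- and $\mathtt{1}$-parts, the maximization over the split of the $\mathtt{2}$-set is exactly a (max,$+$) subset convolution, computable in $2^{|\mathcal{B}_x|}\cdot|\mathcal{B}_x|^{O(1)}$ arithmetic operations by Yates/Möbius transform. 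Summing over the partition of $\mathcal{B}_x$ into its $\mathtt{0}$-, $\mathtt{1}$-, $\mathtt{2}$-blocks, $\sum 2^{(\#\mathtt{2})}\cdot(\text{ways to pick the }\mathtt{0}/\mathtt{1}/\mathtt{2}\text{ partition}) = 3^{|\mathcal{B}_x|}$ by the multinomial/binomial identity, which is where the clean $3^{\tw}$ bound comes from. The remaining work is the standard bookkeeping: verifying correctness of each transition against the invariant above (in particular that the edge-introduce step is the \emph{only} place a matching edge or a forbidden chord is ever examined, which is exactly what the deferred edge property buys us), checking that preprocessing the given tree decomposition into a nice decomposition with the deferred edge property costs only $\tw^{O(1)}\cdot n$ and keeps the number of bags linear, and observing that every non-join node is handled in $3^{|\mathcal{B}_x|}\cdot\tw^{O(1)}$ time trivially. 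Multiplying by the $O(\tw\cdot n)$ bags yields the claimed $3^{\tw}\cdot\tw^{O(1)}\cdot n$ running time, and the algorithm is plainly deterministic.
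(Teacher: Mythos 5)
Your overall strategy is the paper's: a three-state DP over a nice tree decomposition satisfying the deferred-edge property, accelerated at join nodes by fast subset convolution. The state semantics and the transitions at leaf, introduce-vertex, introduce-edge, and forget nodes are all equivalent to the paper's, modulo your label swap $\mathtt{1}\leftrightarrow\mathtt{2}$ and the fact that you count matching edges rather than $V_M$-vertices. The join node, however --- which is the sole place the $3^{\tw}$ factor must be earned --- has two real problems.

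First, the correction term in your join recurrence is not justified under your own convention. You store $|M|$, a count of edges. Under the deferred-edge property every edge is introduced at exactly one introduce-edge node, so it contributes to exactly one of $A_{x_1}[c_1]$ or $A_{x_2}[c_2]$, and the correct recurrence is $A_x[c]=\max_{(c_1,c_2)} \{A_{x_1}[c_1]+A_{x_2}[c_2]\}$ with no subtraction at all; subtracting the number of $\mathtt{1}$-vertices makes the recurrence undercount. (The paper does subtract $|f^{-1}(1)|+|f^{-1}(2)|$, but it stores the number of $V_M$-vertices, which genuinely are double-counted on the shared bag; your formula matches neither convention.)

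Second, and more seriously, the subset-convolution time analysis does not yield $3^{\tw}$. You propose to ``fix the $\mathtt{0}$- and $\mathtt{1}$-parts'' and then maximize over splits of the $\mathtt{2}$-set, summing $2^{\#\mathtt{2}}$ over all $\mathtt{0}/\mathtt{1}/\mathtt{2}$ partitions; but that sum is
$\sum_{a+b+c=|\mathcal{B}_x|}\tfrac{|\mathcal{B}_x|!}{a!\,b!\,c!}\,2^{c}=(1+1+2)^{|\mathcal{B}_x|}=4^{|\mathcal{B}_x|}$,
not $3^{|\mathcal{B}_x|}$, which is exactly the cost of the naive enumeration of compatible coloring pairs. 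The underlying issue is that once both the $\mathtt{0}$-part and the $\mathtt{1}$-part are fixed, the $\mathtt{2}$-set is a single fixed set $Y$, and maximizing over its $A\cup B=Y$ splits computes a single value $(f*g)(Y)$ --- fast subset convolution buys nothing in that regime. What the paper does, and what is needed to obtain $3^{\tw}$, is to fix \emph{only} the $\mathtt{0}$-part $R\subseteq\mathcal{B}_x$, represent every coloring with that $\mathtt{0}$-part by its $\mathtt{2}$-set $S\subseteq\mathcal{B}_x\setminus R$, and run \emph{one} fast $(\max,+)$-subset convolution over the ground set $\mathcal{B}_x\setminus R$. That single call produces the join values for all $S$ simultaneously in $2^{|\mathcal{B}_x\setminus R|}\cdot\tw^{\mathcal{O}(1)}$ time, and $\sum_{R\subseteq\mathcal{B}_x}2^{|\mathcal{B}_x\setminus R|}=3^{|\mathcal{B}_x|}$. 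Without this restructuring your argument does not establish the claimed $3^{\tw}\cdot\tw^{\mathcal{O}(1)}\cdot n$ bound.
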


In Section \ref{AM}, we present a $6^{\tw} \cdot n^{\mathcal{O}(1)}$ time algorithm for \textsc{Acyclic Matching}, improving the result by Hajebi and Javadi \cite{hajebi}, who proved that \textsc{Acyclic Matching} parameterized by $\tw$ is $\mathsf{FPT}$. They used Courcelle's theorem, which is purely theoretical, and thus the hidden parameter dependency in the running time is huge (a tower of exponents). To develop our algorithm, we use the \emph{Cut $\&$ Count} method introduced by Cygan et al. \cite{cygan} in addition to the fast subset convolution. The Cut $\&$ Count method allows us to deal with connectivity-type problems through randomization; here, randomization arises from the usage of the \textit{Isolation Lemma} (see Section \ref{PC}). 

 \begin{restatable} {theorem}{acyclicmatching} \label{am}
	 $\textsc{Acyclic Matching}$ can be solved in $6^{\tw}\cdot n^{\mathcal{O}(1)}$ time by a randomized algorithm. The algorithm cannot give false positives and may
	give false negatives with probability at most $\frac{1}{3}$. 
\end{restatable}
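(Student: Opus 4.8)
The plan is to apply the \emph{Cut \& Count} method of Cygan et al.\ \cite{cygan}. I first note that the problem is monotone: if $G$ has an acyclic matching $M$ with $|V_{M}|\ge\ell$ then, for any $M'\subseteq M$ with $|M'|=\lceil\ell/2\rceil$, the set $V_{M'}$ induces a subgraph of the forest $G[V_{M}]$---hence again a forest---of which $M'$ is a perfect matching. So it suffices to decide whether there is a set $X\subseteq V(G)$ with $|X|=k:=2\lceil\ell/2\rceil$ such that $G[X]$ is a forest having a perfect matching. The enabling observation is that a forest has \emph{at most one} perfect matching, so such a solution is determined by $X$ alone, and I may apply the Isolation Lemma over the ground set $V(G)$: sampling weights $\mathsf{w}\colon V(G)\to\{1,\dots,2n\}$ uniformly at random, with probability at least $\tfrac12$ the family of all such $X$, if nonempty, has a unique member of minimum total weight. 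To capture acyclicity I use the standard fact that a graph on $k$ vertices with $m$ edges and $\mathrm{cc}$ connected components has exactly $2^{\mathrm{cc}}$ consistent cuts, with $\mathrm{cc}\ge k-m$ and equality precisely for forests. Hence, for each edge count $m<k$ and each target weight $W$, I will compute (modulo $2^{k+1}$) the number $C_{m,W}$ of triples $(X,N,(X_{L},X_{R}))$ with $|X|=k$, $\sum_{v\in X}\mathsf{w}(v)=W$, $|E(G[X])|=m$, $N$ a perfect matching of $G[X]$, and $(X_{L},X_{R})$ a consistent cut of $G[X]$: a forest solution contributes exactly $2^{k-m}$ while every non-forest vertex set contributes a multiple of $2^{k-m+1}$, so the number of forest solutions of weight $W$ with $m$ edges is odd if and only if $C_{m,W}$ is not divisible by $2^{k-m+1}$. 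By the Isolation Lemma this occurs for some pair $(m,W)$ with probability at least $\tfrac12$ when a solution exists and never otherwise; running all pairs $(m,W)$, repeating a constant number of times, and answering \textsc{Yes} iff some run succeeds then yields one-sided error below $\tfrac13$.

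It remains to compute the $C_{m,W}$ by dynamic programming over a nice tree decomposition equipped with introduce-edge nodes, so that each edge of $G$ is introduced exactly once, which is what lets me maintain $|E(G[X])|$ and reject inconsistent cuts locally. For a bag $B_{x}$ the table is indexed by a coloring of $B_{x}$ with states $\{\mathbf{o},L_{0},L_{1},R_{0},R_{1}\}$---$\mathbf{o}$ for ``not in $X$'', $L$ and $R$ for the cut side, and subscript $0$ resp.\ $1$ for ``not yet matched'' resp.\ ``already matched by an edge processed so far''---together with three numeric indices, namely the accumulated weight, the number of $X$-vertices, and the number of $G[X]$-edges seen so far, and it stores the corresponding partial count modulo $2^{k+1}$. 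An introduce-vertex node adds the new vertex in state $\mathbf{o}$, $L_{0}$, or $R_{0}$; a forget node deletes every coloring in which the forgotten vertex is still in state $L_{0}$ or $R_{0}$ (an unmatched $X$-vertex can never be matched later); and an introduce-edge node for $uv$ keeps colorings with $u\notin X$ or $v\notin X$ intact, deletes colorings placing $u$ and $v$ on opposite cut sides, and for $u,v$ on the same side increments the edge count and---when moreover both are in state $L_{0}$ (resp.\ $R_{0}$)---adds a second coloring promoting both to $L_{1}$ (resp.\ $R_{1}$), which encodes the choice of placing $uv$ into $N$. At the (empty) root one reads off, for each $(m,W)$, the entry with $k$ vertices, $m$ edges, and weight $W$.

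The delicate node---and, as usual for Cut \& Count over treewidth, the main obstacle---is the join. At a join whose two children share the bag $B_{x}$, the $\mathbf{o}$/$L$/$R$ pattern of a coloring must be \emph{identical} in all three tables, whereas a vertex that ends up matched was matched in \emph{exactly one} of the two subtrees and an $X$-vertex that ends up unmatched was unmatched in both; thus, once the $\mathbf{o}$/$L$/$R$ ``shape'' of the bag is fixed, the matched sets of the two children combine by \emph{disjoint union}, i.e.\ through a subset convolution over the set $S$ of $X$-vertices of the bag, while the numeric indices combine through ordinary polynomial-length convolutions. Carrying out the subset convolution in $2^{|S|}\cdot|S|^{\mathcal{O}(1)}$ ring operations and summing its cost over the $\binom{|B_{x}|}{|S|}2^{|S|}$ shapes with a fixed $S$, and over the polynomially many values of the numeric indices, gives the stated per-node running time and hence the overall bound $6^{\tw}\cdot n^{\mathcal{O}(1)}$. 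The points that need care are precisely this interaction---the ``in/out'' and ``side'' coordinates are identified across the join while the ``matched'' coordinate is split---together with the bookkeeping that each edge of $G$ contributes to the edge count exactly once and that reducing modulo $2^{k+1}$ preserves enough information to read each relevant parity; on the correctness side, the step to verify carefully is that every non-forest induced subgraph contributes a multiple of $2^{k-m+1}$, which is what justifies extracting the answer from the coefficient of $2^{k-m}$ in $C_{m,W}$.
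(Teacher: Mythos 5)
Your proof is correct but uses a genuinely different mechanism from the paper's. Both apply Cut \& Count over a nice tree decomposition with introduce-edge nodes and arrive at essentially the same five-states-per-vertex DP at the join, but the counting machinery differs. The paper introduces \emph{markers}: it counts pairs $(X,P)$ with $P\subseteq X$ a set of marked vertices confined to one side of the cut, applies the Isolation Lemma over the doubled universe $V(G)\times\{\F,\K\}$, and, via Lemma~\ref{equal} and the side condition $C\le A-B$, reduces acyclicity detection to a plain mod-$2$ test on $\big|\mathcal{C}_W^{A,B,C}\big|$. You dispense with markers entirely by making two observations: a forest has at most one perfect matching, so a solution is determined by its vertex set alone and the Isolation Lemma can be applied over $V(G)$ itself rather than a product universe; and a graph on $k$ vertices with $m$ edges has exactly $2^{cc}$ consistent cuts, where $cc\ge k-m$ with equality iff it is a forest, so acyclicity can be read off the $2$-adic valuation of $C_{m,W}$ (always a multiple of $2^{k-m}$, and a forest solution is present iff for some $(m,W)$ it is not a multiple of $2^{k-m+1}$). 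This is sound because every non-forest vertex set contributes the number of its perfect matchings times $2^{cc(G[X])}$ with $cc\ge k-m+1$. Your route drops the marker index $c$ and the side condition $C\le A-B$, is arguably more transparent, and, if one tracks constants in the join ($\sum_{q}\binom{|\mathcal{B}_x|}{q}2^q\cdot 2^q$ for the sum over bag shapes times the subset-convolution cost), in fact yields roughly $5^{\tw}\cdot n^{\mathcal{O}(1)}$, comfortably within the claimed $6^{\tw}$. The one step you flag for care, that the bit at position $k-m$ extracted after arithmetic modulo $2^{k+1}$ really isolates forest solutions, is exactly the crux and does hold.
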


In Section \ref{CDM}, we present a $(3c)^{\tw}\cdot  \tw^{\mathcal{O}(1)}\cdot n$ time algorithm for $c$-\textsc{Disconnected Matching}. We use the dynamic programming technique along with the fast subset convolution for the join nodes. This resolves an open question by Gomes et al. \cite{gomes}, who asked whether $c$-\textsc{Disconnected Matching} can be solved in a single exponential time with vertex cover ($\rm{vc}$) as the parameter. Since for any graph $G$, $\tw(G) \leq \rm{vc}(G)$, we answer their question in the affirmative. 

\begin{restatable}{theorem}{cdisconnectedmatching}\label{cdm}
 For a fixed positive integer $c\geq 2$, $c$-\textsc{Disconnected Matching} can be solved in $(3c)^{\tw}\cdot  \tw^{\mathcal{O}(1)}\cdot n$ time by a deterministic algorithm.
\end{restatable}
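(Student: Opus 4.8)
The plan is to eliminate the global connectivity requirement in favour of a purely local constraint, and then to run a dynamic program over a nice tree decomposition, invoking fast subset convolution at the join nodes to keep the base of the exponential below $3c$. The first step is an elementary reformulation: for a matching $M$ of $G$, the graph $G[V_M]$ has at least $c$ connected components if and only if $V_M$ can be partitioned into $c$ nonempty parts $V_1,\dots,V_c$ such that $G$ contains no edge with endpoints in two distinct parts. For the forward direction one groups the (at least $c$) connected components of $G[V_M]$ into $c$ nonempty bunches; for the backward direction one notes that, in the absence of cross edges, $G[V_M]$ is the disjoint union of the graphs $G[V_1],\dots,G[V_c]$, each of which is nonempty and hence contributes at least one component. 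Since $c$ is a fixed constant, the label set $\{1,\dots,c\}$ — and any $2^{c}$ or $c!$ factor — may be treated as $\mathcal{O}(1)$-sized data.

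First I would fix a nice tree decomposition of width $\tw$ (obtainable from the supplied one in $\tw^{\mathcal{O}(1)}\cdot n$ time, with leaf, introduce-vertex, introduce-edge, forget, and join nodes) and maintain, at each node $x$, a table indexed by: (i) a map $\rho\colon \mathcal{B}_x\to\{\mathbf{0},1,\dots,c\}$, where $\rho(v)=\mathbf{0}$ records $v\notin V_M$ and $\rho(v)=i$ records that $v$ is destined for part $V_i$; (ii) for each $v$ with $\rho(v)\neq\mathbf{0}$, a flag indicating whether $v$ has already been matched in the processed part of the graph; and (iii) the subset $U\subseteq\{1,\dots,c\}$ of parts used so far. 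The stored value is the maximum size of a partial matching $M'$ of the graph $G_x$ introduced in the subtree of $x$ that is consistent with $\rho$ and the flags, respects the no-cross-edge constraint on $G_x[V_{M'}]$, has $U=\{i : V_{M'}\cap V_i\neq\emptyset\}$, and in which every vertex of $V_{M'}$ outside $\mathcal{B}_x$ is already matched. The transitions are standard: introduce-vertex branches over the state of the new vertex (and may enlarge $U$); introduce-edge $uv$ discards indices in which $\rho(u)$ and $\rho(v)$ are two distinct parts and, when $u$ and $v$ are both unmatched and in the same part, additionally offers to add $uv$ to $M'$ (raising the value by one and marking both matched); forget-vertex retains only indices in which the forgotten vertex is outside $V_M$ or already matched; and the answer is the maximum value over root-node indices with $U=\{1,\dots,c\}$, tested against $\ell$.

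The crux — and the reason the claimed running time is attainable — is the join node, where combining the two children's tables in the obvious way would cost $(3c+1)^{\tw}\cdot\tw^{\mathcal{O}(1)}$ per join node. Two points make it cheaper. First, the part of a bag vertex in $V_M$ is fixed when the vertex is introduced, so at a join only pairs of child indices sharing the same $\rho$ are combined, and the $\mathcal{O}(1)$-many pairs $(U_1,U_2)$ are iterated over explicitly to form $U=U_1\cup U_2$. Second, each matching edge of $G_x$ lies in exactly one of the two child subtrees, so the set of matched bag vertices in the combined index is the disjoint union of the two children's matched sets and the combined value is the sum of the two values; hence, for fixed $\rho$ and fixed $(U_1,U_2)$, the combination is precisely a subset convolution over the ground set $\rho^{-1}(\{1,\dots,c\})$ in the $(\max,+)$ setting, which — since partial matching sizes are integers bounded by $n$ — fast subset convolution (Section~\ref{AD}) evaluates in $2^{\lvert\rho^{-1}(\{1,\dots,c\})\rvert}\cdot\tw^{\mathcal{O}(1)}$ time. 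Summing over all $\rho$ gives $\sum_{\rho}2^{\lvert\rho^{-1}(\{1,\dots,c\})\rvert}=\prod_{v\in\mathcal{B}_x}(1+2c)=(2c+1)^{\lvert\mathcal{B}_x\rvert}\le(2c+1)^{\tw+1}$, so each node is processed in $(2c+1)^{\tw}\cdot\tw^{\mathcal{O}(1)}$ time; since $2c+1\le 3c$ and a nice tree decomposition has $\mathcal{O}(\tw\cdot n)$ nodes, the total is $(3c)^{\tw}\cdot\tw^{\mathcal{O}(1)}\cdot n$.

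Correctness then follows from the reformulation above together with a routine bottom-up induction establishing that every table entry equals the claimed optimum. I expect the reformulation lemma and the introduce/forget/leaf transitions to be essentially mechanical; the main obstacle will be the join node — both justifying the disjoint-union semantics of the matched sets (so that no vertex acquires two matching edges) and organising the bookkeeping so that the combination is genuinely a $(\max,+)$ subset convolution and its cost is charged as $(2c+1)^{\tw}\cdot\tw^{\mathcal{O}(1)}$ rather than $(3c+1)^{\tw}\cdot\tw^{\mathcal{O}(1)}$.
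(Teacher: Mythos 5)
Your proposal is correct and takes essentially the same approach as the paper: your reformulation of ``at least $c$ connected components'' as ``a partition of $V_M$ into $c$ nonempty parts with no cross edges'' is precisely the paper's notion of a fine coloring, and your table state (part label, matched/unmatched flag, and the set $U$ of parts used so far) with a $(\max,+)$ subset convolution at join nodes over the saturated bag vertices matches the paper's $(d,f,\widehat{c})$ indexing and its convolution over $\mathcal{B}_x\setminus d^{-1}(0)$, giving the same $(2c+1)^{\tw}\le(3c)^{\tw}$ bound.
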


In Section \ref{DM}, we present a lower bound for the time complexity of \textsc{Disconnected Matching}, proving that for any choice of a constant $c$, an $\mathcal{O}(c^{\tw}\cdot n)$-time algorithm for the \textsc{Disconnected Matching} problem is unlikely. In fact, we prove that even an $\mathcal{O}(c^{\pw}\cdot n)$-time algorithm is not possible, where $\pw$ is the \emph{pathwidth} (see Section \ref{PC}) of the graph which is bounded from below by the treewidth.

\begin{restatable}{theorem}{disconnectedmatching} \label{dm} Assuming the Exponential Time Hypothesis to be true, there is no $2^{o(\pw \log \pw)}\cdot n^{\mathcal{O}(1)}$-time algorithm for \textsc{Disconnected Matching}. 
\end{restatable}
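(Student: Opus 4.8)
The plan is to prove the lower bound by a polynomial-time reduction from a problem $\Pi$ that is known, by the work of Lokshtanov, Marx, and Saurabh on slightly superexponential parameterized problems, to admit no $2^{o(k\log k)}\cdot n^{\mathcal{O}(1)}$-time algorithm unless \ETH fails. A convenient choice is a $k\times k$-grid selection problem (for concreteness, $k\times k$ \textsc{Independent Set}: given a graph $H$ on vertex set $[k]\times[k]$, is there a set containing exactly one vertex of each row $\{i\}\times[k]$ that is independent in $H$?). The target of the reduction is an instance $(G,\ell,c)$ of \textsc{Disconnected Matching} with $n=|V(G)|=\mathrm{poly}(k)$, with $c,\ell=\Theta(k)$, and --- crucially --- with $\pw(G)=\mathcal{O}(k)$; then a $2^{o(\pw\log\pw)}\cdot n^{\mathcal{O}(1)}$-time algorithm for \textsc{Disconnected Matching} would solve $\Pi$ in time $2^{o(k\log k)}\cdot n^{\mathcal{O}(1)}$, contradicting \ETH. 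Since $\tw(G)\le\pw(G)$, the same statement with $\tw$ in place of $\pw$ follows a fortiori; note also that this rules out an $\mathcal{O}(c^{\pw}\cdot n)$-time algorithm for \emph{any} constant $c$, and matches the $2^{\mathcal{O}(\tw\log\tw)}\cdot n^{\mathcal{O}(1)}$ dynamic-programming upper bound up to the base of the exponent (cf.\ \Cref{cdm}).

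The reduction itself is where the work lies. First I would build, for each row $i\in[k]$, a \emph{value gadget} $R_i$ of constant interface size whose local matching configurations are in bijection with the $k$ possible choices for row $i$; the ``value'' $s_i\in[k]$ chosen in $R_i$ is the one a large matching is forced to encode there. Second, because the constraints of $H$ relate arbitrary pairs of rows and a direct edge between two far-apart gadgets would destroy any hope of small pathwidth, I would \emph{propagate} each chosen value rightward: for every row $i$ the value $s_i$ is copied along a chain of constant-size \emph{copy gadgets}, one per position $i,i+1,\dots,k$, so that at every position $j$ all of $s_1,\dots,s_j$ are simultaneously available ``on the wires.'' At position $j$ I would place a \emph{checker gadget} that, for each $i<j$, verifies that $\{(i,s_i),(j,s_j)\}\notin E(H)$, the verification failing precisely by merging two components of $G[V_M]$ that would otherwise be separate. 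Third, I would fix $\ell$ and $c$ (both $\Theta(k)$, after adding a controlled number of always-present ``padding'' edges) so that: a matching of size $\ge\ell$ is forced to be exactly one value configuration per row together with the intended copy/padding edges, the corresponding induced subgraph has exactly $c$ components when no checker fails, and every failed checker strictly decreases the number of components. With this calibration, $(G,\ell,c)$ is a yes-instance iff some transversal $(s_1,\dots,s_k)$ triggers no failure iff $H$ is a yes-instance of $\Pi$.

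Proving the two directions of this equivalence is then routine given the gadgets: from a valid transversal one reads off the canonical matching of size exactly $\ell$ whose induced subgraph is a disjoint union of exactly $c$ small pieces; conversely, a matching of size $\ge\ell$ whose induced subgraph has $\ge c$ components must, by the size-versus-component bookkeeping, be of the canonical form, and hence its row-values form an independent transversal. The pathwidth bound is obtained by sweeping a path decomposition along the propagation chains from left to right: at any moment the bag holds only the $\mathcal{O}(1)$ boundary vertices of each of the $k$ wires currently in flight, plus the $\mathcal{O}(1)$ vertices of the value/copy/checker gadget being processed, for a total of $\mathcal{O}(k)$. Combining the three ingredients --- correctness, the pathwidth bound, and the hardness of $\Pi$ --- yields the theorem.

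The main obstacle, I expect, is realizing the value gadget, the copy gadget, and the checker gadget \emph{purely in the language of ``a matching whose vertex set induces a graph with a prescribed number of components''}: a $[k]$-valued signal and all of its copying and comparison must be encoded so that (i) the number of connected components of $G[V_M]$ reacts exactly --- one extra component for each correctly propagated, unviolated choice, one fewer as soon as something is inconsistent or a constraint is violated --- with no accidental merges among unrelated gadgets, (ii) no ``cheating'' matching (one that skips a row, encodes two values in a gadget, or abuses the propagation/padding edges) can reach size $\ell$ with $c$ components, and (iii) every gadget exposes only $\mathcal{O}(1)$ vertices to its neighbours so that the $\pw=\mathcal{O}(k)$ layout above goes through. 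It is precisely the need to remember a $[k]$-valued choice per wire --- rather than a single bit --- that forces the $k^{\Theta(k)}=2^{\Theta(k\log k)}$ blow-up and hence the slightly superexponential lower bound; getting the component arithmetic to certify exactly this, while keeping all interfaces of constant size, is the delicate part of the argument.
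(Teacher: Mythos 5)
The paper proves the theorem by a direct, global reduction from $k\times k$ \textsc{Hitting Set} (inspired by Cygan et al.'s reduction for \textsc{Maximally Disconnected Dominating Set}): it builds a graph with $k$ ``left'' vertices $v^L_i$, $k$ ``right'' vertices $v^R_j$, and for each set $X$ in the input plus each row $P_i$ a layer of choice vertices $v^X_{i,j}$; star gadgets attached to each $v^L_i$, $v^R_j$, and each layer force a canonical matching of size exactly $3k+m$, and the $k$-component condition is equivalent to the $v^L_i$'s being distributed injectively among the components $C_j$ anchored at the $v^R_j$'s. Crucially, \emph{all} $2k$ anchors $v^L_i, v^R_j$ are kept in every bag, giving $\pw\le 3k$ outright; the ``value'' chosen for row $i$ is never propagated explicitly --- it is encoded in the connectivity partition of the $\Theta(k)$ anchor vertices, which is exactly what a Disconnected-Matching solver is forced to track and what supplies the $k^{\Theta(k)}$ state blow-up.

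Your proposal instead aims for a wire-propagation construction with value, copy, and checker gadgets each exposing $\mathcal{O}(1)$ boundary vertices. Beyond the fact that none of these gadgets is actually constructed (which you acknowledge), the plan's core premise is flawed. What an $\mathcal{O}(1)$-vertex interface can communicate in a matching/connectivity problem is $\mathcal{O}(1)$ bits (whether each boundary vertex is unsaturated, matched internally, or waiting), so a wire with constant interface cannot carry a $[k]$-valued signal forward: either the interfaces grow to $\Omega(\log k)$ vertices each (pushing $\pw$ to $\Omega(k\log k)$, after which a $2^{o(\pw\log\pw)}$ algorithm no longer contradicts the $2^{o(k\log k)}$ hardness of the source problem), or the gadget states collapse to a constant number and the reduction cannot distinguish the $k^k$ transversals. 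The paper's construction is precisely the fix: rather than encoding the transversal locally on wires, it encodes it globally in which of the $k$ anchor components each $v^L_i$ joins. If you pursue the wire idea further, I believe you will be driven to re-introduce a set of $\Theta(k)$ always-present ``anchor'' vertices and encode each row's value by connectivity to an anchor, at which point you essentially recover the paper's construction.
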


Also, we briefly discuss the \textsf{SETH} lower bounds in the Conclusion.

\section{Preliminaries}\label{prelim}

\subsection{Graph-theoretic Notations and Definitions}\label{GT}

For a graph $G$, let $V(G)$ denote its vertex set and $E(G)$ denote its edge set. Given a matching $M$, a vertex
$v\in V(G)$ is \textit{$M$-saturated} if $v$ is incident on an edge of $M$. Given a graph $G$ and a matching $M$, let $V_{M}$ denote the set of $M$-saturated vertices and $G[V_{M}]$ denote the
subgraph of $G$ induced by $V_{M}$. The
\textit{matching number} of $G$ is the maximum cardinality of a matching among all matchings in $G$, and we denote it by $\mu(G)$. A matching that saturates all
the vertices of a graph is a \textit{perfect matching}. If $uv\in M$, then
$v$ is the \textit{$M$-mate} of $u$, and vice versa. 
		The $\mathcal{P}$-\emph{matching number} of $G$ refers to the maximum cardinality of a $\mathcal{P}$-matching among all $\mathcal{P}$-matchings in $G$. We denote by $\mu_{\mathsf{induced}}(G)$, $\mu_{\mathsf{acyclic}}(G)$, $\mu_{c\mathsf{,discon}}(G)$, and $\mu_{\mathsf{disconnected}}(G)$, the \emph{induced matching number}, the \emph{acyclic matching number}, the \emph{$c$-disconnected matching number}, and the \emph{disconnected matching number} of $G$, respectively. It is worth noting that in any $c$-disconnected matching, $c$ can be at most $\mu_{\mathsf{induced}(G)}$. Furthermore, the following proposition outlines the relationship among various $\mathcal{P}$-matching numbers.

\begin{proposition} [\cite{gomes}] \label{prelimobs} For a graph $G$, the following hold:
\begin{enumerate}
    \item $\mu(G)=\mu_{1\mathsf{,discon}}(G)\geq \mu_{2,\mathsf{discon}}(G) \geq \ldots \geq \mu_{\mu_{\mathsf{induced}(G)}\mathsf{,discon}}(G)\geq \mu_{\mathsf{induced}}(G).
	$
	\item 	$\mu(G)\geq \mu_{\mathsf{acyclic}}(G)\geq \mu_{\mathsf{induced}}(G). $
\end{enumerate}
	\end{proposition}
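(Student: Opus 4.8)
The plan is to derive every inequality from an inclusion between the corresponding families of matchings, using that the maximum cardinality over a subfamily never exceeds the maximum over a superfamily; the single exception is the last inequality of chain~(1), which compares the induced matching number against one particular $c$-disconnected matching number and needs a one-line structural observation.

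First I would dispose of the trivial endpoints. A nonempty matching $M$ always induces a graph $G[V_M]$ with at least one connected component, so the $1$-disconnected matchings are exactly the nonempty matchings; since the empty matching is never maximum when $G$ has an edge (and the edgeless case gives $0=0$), $\mu_{1,\mathsf{discon}}(G)=\mu(G)$. For $1 \le c \le \mu_{\mathsf{induced}}(G)-1$, any matching whose saturated-vertex subgraph has at least $c+1$ components has in particular at least $c$, so every $(c+1)$-disconnected matching is $c$-disconnected, giving $\mu_{c,\mathsf{discon}}(G)\ge \mu_{c+1,\mathsf{discon}}(G)$. For chain~(2), an acyclic matching is a matching by definition, so $\mu(G)\ge \mu_{\mathsf{acyclic}}(G)$.

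Next I would handle the two inequalities that end in $\mu_{\mathsf{induced}}(G)$, both of which follow from the same fact: if $M$ is an induced matching then $G[V_M]=M$, i.e.\ $G[V_M]$ is a disjoint union of $|M|$ edges. Such a graph is acyclic, so $M$ is an acyclic matching and $\mu_{\mathsf{acyclic}}(G)\ge \mu_{\mathsf{induced}}(G)$, which closes chain~(2). Moreover $G[V_M]$ has exactly $|M|$ connected components; hence a maximum induced matching $M^{\star}$, of size $\ell:=\mu_{\mathsf{induced}}(G)$, is an $\ell$-disconnected matching of size $\ell$, so $\mu_{\ell,\mathsf{discon}}(G)\ge \ell=\mu_{\mathsf{induced}}(G)$, which is the final inequality of chain~(1).

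No real obstacle arises; the only points deserving a line of care are the boundary behaviour of $\mu_{1,\mathsf{discon}}$ on an edgeless graph, and the fact that $c=\mu_{\mathsf{induced}}(G)$ is the correct place to terminate chain~(1): a matching with $k$ edges induces at most $k$ connected components, so there is no $c$-disconnected matching of size at least $c$ with $c>\mu_{\mathsf{induced}}(G)$, consistent with the remark preceding the proposition that $c\le\mu_{\mathsf{induced}}(G)$.
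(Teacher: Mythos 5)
Your proof is correct. The paper does not actually prove Proposition~\ref{prelimobs}; it is quoted from Gomes et al.\ \cite{gomes} without argument, so there is nothing to compare against. Your verification is the natural one: each inequality except the last in chain~(1) follows from an inclusion of the relevant families of matchings (every $(c{+}1)$-disconnected matching is $c$-disconnected, every acyclic matching is a matching, every induced matching is acyclic since a disjoint union of edges is a forest), and the final inequality of chain~(1) follows from your observation that a maximum induced matching of size $\ell$ induces exactly $\ell$ components and is therefore itself an $\ell$-disconnected matching of size $\ell$. Your side remarks on the edgeless case and on why the chain terminates at $c=\mu_{\mathsf{induced}}(G)$ are consistent with the paper's own comment that $c$ can be at most $\mu_{\mathsf{induced}}(G)$.
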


For a vertex set $X\subseteq V(G)$, $G[X]$ denotes the subgraph induced by $X$. A \emph{cut} of a set $X \subseteq V(G)$ is a pair $(X_{l},X_r)$ with $X_{l}\cap X_{r}=\emptyset$ and $X_{l}\cup X_{r}=X$, where $X$ is an arbitrary subset of $V(G)$. When $X$ is immaterial, we do not mention it explicitly. A cut $(X_l,X_r)$ is \emph{consistent} in a subgraph $H$ of $G$ if $u\in X_{l}$ and $v\in X_{r}$ implies $uv\notin E(H)$. A \textit{forest} is an undirected graph in which any two vertices are connected by at most one simple path (a path that does not have repeating vertices). For a graph $G$, let $cc(G)$ denote the number of connected components of $G$. For an undirected graph $G$, the \emph{open neighborhood} of a vertex $v$, denoted by $N(v)$, stands for $\{u\in V(G):uv\in E(G)\}$, while the \emph{closed neighborhood} of $v$ is $N[v]= N(v) \cup \{v\} $. Standard graph-theoretic terms not explicitly defined here can be found in \cite{diestel}.

Let $G$ be a graph. A \emph{coloring} on a set $X\subseteq V(G)$ is a function $f:X\rightarrow S$, where $S$ is any set. Here, the elements of $S$ are called \emph{colors}. A coloring defined on an empty set is an \emph{empty coloring}. For a coloring $f$ on $X\subseteq V(G)$ and $Y\subseteq X$, we use the notation $f|_{Y}$ to denote the restriction of $f$ to $Y$. For a coloring $f:X(\subseteq V(G)) \rightarrow S$, a vertex $v\in V(G)\setminus X$, and a color
$\alpha\in S$, we define $f_{v\rightarrow \alpha}:X \cup \{v\}\rightarrow S$ as follows:
$$
  f_{v\rightarrow \alpha}(x) = \begin{cases}
        f(x) \ \ \ \ \text{\ if \ $x\in X$,}
        \\
        \alpha \ \ \  \ \ \ \ \    \text{\ if \ $x=v$.}
           
        \end{cases}
$$

More generally, for a coloring $f:X(\subseteq V(G))\rightarrow \{0,1,2\}$, a set $Y\subseteq X$, and a color
$\alpha\in \{0,1,2\}$, we define $f_{Y\rightarrow \alpha}:X \rightarrow \{0,1,2\}$ as follows:

$$
  f_{Y\rightarrow \alpha}(x) = \begin{cases}
        f(x) \ \ \ \ \text{\ if \ $x\notin Y$,}
        \\
        \alpha \ \ \  \ \ \ \ \    \text{\ if \ $x\in Y$.}
           
        \end{cases}
$$

\begin{definition} [Correct Coloring] \label{prelimdef}
Given a graph $G$ and a set $X\subseteq V(G)$, two colorings $f_{1},f_{2}: X \rightarrow \{0,1,2\}$ are \emph{correct} for a coloring $f: X \rightarrow \{0,1,2\}$ if the following conditions hold:
\begin{enumerate}
	\item  $f(v)=0$ if and only if $f_{1}(v)=f_{2}(v)=0$,
	\item  $f(v)=1$ if and only if  $(f_{1}(v),f_{2}(v))\in\{(1,2),(2,1)\}$, and
	\item  $f(v)=2$ if and only if $f_{1}(v)=f_{2}(v)=2$.
\end{enumerate}
\end{definition}

 \subsection{ Algebraic Definitions}\label{AD}
For a set $X$, let $2^{X}$ denote the set of all subsets of $X$. For a positive integer $k$, let  $[k]$ denote the set $\{1,\ldots,k\}$. In the set $[k]\times [k]$, a \emph{row} is a set $\{i\}\times [k]$ and a \emph{column} is a set $[k]\times \{i\}$ for some $i\in [k]$. For two integers, $a$ and $b$, we use $a\equiv b$ to indicate
that $a$ is even if and only if $b$ is even. If $\mathsf{w}:U\rightarrow\{1,\ldots,N\}$, then for $S\subseteq U$, $\mathsf{w}(S)=\displaystyle{\sum_{e\in S}}\mathsf{w}(e)$. For definitions of \emph{ring} and \emph{semiring}, we refer the readers to any elementary book on abstract algebra. Given an integer $n > 1$, called a \emph{modulus}, two integers \textit{a} and \textit{b} are \emph{congruent modulo} $n$ if there is an integer $k$ such that $a-b = kn$. Note that two integers are said to be \textit{congruent modulo $2$} if they have the same parity (that is, either both are odd or both are even). For a set $S$, we use the notation $|S|_{2}$ to denote the number of elements in set $S$ congruent modulo $2$. We remark that in formulas, it is more convenient to use this notation than the phrase ``parity".

\medskip

\noindent \textbf{Subset Convolution is defined as follows.}

\begin{definition}\label{defsubsetconvo}
Let $B$ be a finite set and $\mathbb{R}$ be a semiring. Then, the \emph{subset convolution} of two functions $f,g:2^{B}\rightarrow \mathbb{R}$ is the function
$(f\ast g):2^{B}\rightarrow \mathbb{R}$ such that for every $Y \subseteq B$,
\begin{equation} \label{eq}
(f\ast g)(Y)=\displaystyle \sum_{X\subseteq Y } f(X)g(Y\setminus X).
\end{equation}
\end{definition}

Equivalently, (\ref{eq}) can be written as 
\begin{equation}
(f*g)(Y)=\displaystyle \sum_{\substack{A \cup B =Y \\ A \cap B =\emptyset }} f(A)g(B).
\end{equation}

Given $f$ and $g$, a direct evaluation of $f\ast g$ for all $X\subseteq Y$ requires $\Omega(3^{n})$ semiring operations, where $n=|B|$. However, we have the following result by Br\"ojrklund et al. \cite{bjorlund}. 
\begin{proposition} [\cite{bjorlund,cygan}] \label{B1}
For two functions $f,g : 2^B \rightarrow \mathbb{Z}$, where $n=|B|$ and $\mathbb{Z}$ is a ring, given all the $2^n$ values of $f$ and $g$ in the input, all the $2^n$ values of the subset convolution of
$f*g$ can be computed in $\mathcal{O}(2^n \cdot n^3)$ arithmetic operations.
\end{proposition}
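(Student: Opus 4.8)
The plan is to prove the statement by the standard argument of Björklund, Husfeldt, Kaski, and Koivisto: reduce the subset convolution to the easier \emph{covering product} by grading on set cardinality, and handle the covering product with the fast zeta and Möbius transforms. First, for a function $f:2^{B}\rightarrow\mathbb{Z}$ I would introduce its \emph{zeta transform} $\hat{f}(Y)=\sum_{X\subseteq Y}f(X)$ and its \emph{Möbius transform} $\check{f}(Y)=\sum_{X\subseteq Y}(-1)^{|Y\setminus X|}f(X)$; by Möbius inversion over the subset lattice, $\check{\hat{f}}=f$ for every $f$, and observe that computing $\check{f}$ uses subtraction, which is exactly where the hypothesis that $\mathbb{Z}$ is a ring (and not merely a semiring) enters. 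Fixing an order $b_{1},\dots,b_{n}$ of the elements of $B$ and updating, in round $i$, the value at each subset $Y$ with $b_{i}\in Y$ by adding (respectively subtracting) the value at $Y\setminus\{b_{i}\}$, one evaluates $\hat{f}$ (respectively $\check{f}$) on all $2^{n}$ subsets using $\mathcal{O}(2^{n}\cdot n)$ arithmetic operations; this is Yates's algorithm.

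Next I would record two identities. The zeta transform turns the \emph{covering product} $(f\ast_{\cup}g)(Y)=\sum_{A\cup B=Y}f(A)g(B)$ into a pointwise product, namely $\widehat{f\ast_{\cup}g}(Y)=\hat{f}(Y)\cdot\hat{g}(Y)$, which follows by rewriting the left-hand side as $\sum_{A,B:\,A\cup B\subseteq Y}f(A)g(B)=\bigl(\sum_{A\subseteq Y}f(A)\bigr)\bigl(\sum_{B\subseteq Y}g(B)\bigr)$. Secondly, for $0\le i\le n$ let $f_{i}(X)=f(X)$ if $|X|=i$ and $f_{i}(X)=0$ otherwise, and define $g_{j}$ analogously; since $A\cup B=Y$ forces $|A|+|B|=|Y|+|A\cap B|\ge|Y|$ with equality exactly when $A\cap B=\emptyset$, the pairs contributing to $\sum_{i+j=|Y|}(f_{i}\ast_{\cup}g_{j})(Y)$ are precisely the disjoint decompositions of $Y$, so that
\begin{equation*}
(f\ast g)(Y)=\sum_{i+j=|Y|}(f_{i}\ast_{\cup}g_{j})(Y)\qquad\text{for every }Y\subseteq B.
\end{equation*}

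Combining the two yields the algorithm. Compute the $n+1$ zeta transforms $\hat{f_{0}},\dots,\hat{f_{n}}$ and, likewise, $\hat{g_{0}},\dots,\hat{g_{n}}$, in $\mathcal{O}(2^{n}\cdot n^{2})$ operations in total. For each $k\in\{0,\dots,n\}$ and each $Y\subseteq B$, set $h_{k}(Y)=\sum_{i+j=k}\hat{f_{i}}(Y)\,\hat{g_{j}}(Y)$; by the pointwise-product identity and linearity of the zeta transform, $h_{k}=\widehat{\sum_{i+j=k}f_{i}\ast_{\cup}g_{j}}$, and computing all these values costs $\mathcal{O}(2^{n}\cdot n^{2})$ operations since each pair $(Y,k)$ requires $\mathcal{O}(n)$ ring operations. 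Apply the Möbius transform to each $h_{k}$ to obtain $\check{h_{k}}=\sum_{i+j=k}f_{i}\ast_{\cup}g_{j}$ (again $\mathcal{O}(2^{n}\cdot n^{2})$ operations), and finally output $(f\ast g)(Y)=\check{h_{|Y|}}(Y)$ for every $Y$ by the graded identity, in $\mathcal{O}(2^{n})$ operations. The total is $\mathcal{O}(2^{n}\cdot n^{2})$ arithmetic operations, which is comfortably within the stated $\mathcal{O}(2^{n}\cdot n^{3})$ bound.

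The single non-routine ingredient — and the conceptual heart of the argument — is the cardinality-grading identity: recognizing that imposing $|A|+|B|=|Y|$ is equivalent to demanding $A\cap B=\emptyset$ is precisely what separates the subset convolution from the covering product, the latter being the quantity the zeta transform computes essentially for free. Beyond this, the only things to watch are that every intermediate manipulation (in particular the inversion step) stays within the ring operations available, and the bookkeeping of the operation count; both are straightforward.
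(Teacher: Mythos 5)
Your proof is correct, and it is the standard Björklund--Husfeldt--Kaski--Koivisto argument: relax to the covering product, diagonalize it via the zeta transform, and recover disjointness by grading on cardinality (since $A\cup B=Y$ and $|A|+|B|=|Y|$ force $A\cap B=\emptyset$). The paper does not prove this proposition at all --- it is cited as a black box from \cite{bjorlund} and \cite{cygan} --- so there is no in-paper argument to compare against, but your write-up faithfully reproduces the reasoning in those sources. One remark worth making: your accounting actually yields $\mathcal{O}(2^{n}\cdot n^{2})$ ring operations, which is the bound in the original STOC paper and is sharper than the $\mathcal{O}(2^{n}\cdot n^{3})$ stated in the proposition; the looser $n^{3}$ figure (used in Cygan et al.\ and repeated here) corresponds to the simpler bookkeeping in which one runs an independent zeta/pointwise-product/M\"obius cycle for each of the $\Theta(n^{2})$ pairs $(i,j)$ with $f_{i}$ and $g_{j}$, rather than precomputing the $n+1$ ranked zeta transforms once and applying M\"obius inversion only $n+1$ times as you do. Either way the claimed bound is met, and you correctly flagged the one place where the ring structure (as opposed to a mere semiring) is genuinely used, namely the subtraction inside the M\"obius transform.
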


If the input
functions have an integer range $\{-P,\ldots,P\}$, their subset convolution over the ordinary sum-product ring (see \cite{bjorlund, cygan} for definition) can be computed in $\widetilde{\mathcal{O}}(2^n\log P)$ time \cite{bjorlund}. However, in many problems, we want the subset convolution over the max-sum semiring (see \cite{bjorlund, cygan} for definition), i.e.,
the semiring $(\mathbb{Z}\cup \{-\infty\}, \max, +)$. Note that in the max-sum semiring, the role of the $+$ operation changes: in the max-sum semiring, $+$ plays the role of the multiplicative operation. While the fast subset convolution algorithm does not directly apply to semirings where additive inverses need not exist, one can, fortunately, embed the integer max-sum semiring into the integer sum-product ring (see \cite{bjorlund}). Thus, we have the following result.
\begin{proposition} [\cite{cygan}] \label{B2}
For two functions $f,g : 2^B \rightarrow \{-P,\ldots,P\}$, where $n=|B|$, given all the
$2^n$ values of $f$ and $g$ in the input, all the $2^n$ values of the subset convolution of
$f*g$ over the integer max-sum semiring can be computed in time $2^{n}\cdot n^{\mathcal{O}(1)}\cdot \mathcal{O}(P \log P  \log \log P)$.
\end{proposition}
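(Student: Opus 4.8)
The plan is to embed the integer max-sum semiring $(\mathbb{Z}\cup\{-\infty\},\max,+)$ into the integer ring $(\mathbb{Z},+,\times)$ through an exponential encoding with a sufficiently large base, reduce to the ring subset-convolution algorithm of Proposition~\ref{B1}, recover the tropical answer by reading off a single leading digit, and finally account for the bit-complexity of the ring operations (the only subtlety being that the encoded integers are large).

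Concretely, I would fix the base $z:=2^{n+1}$ and define $\widetilde{f},\widetilde{g}:2^{B}\rightarrow\mathbb{Z}$ by $\widetilde{f}(X):=z^{\,f(X)+P}$ and $\widetilde{g}(X):=z^{\,g(X)+P}$. Since $f(X)+P,g(X)+P\in\{0,\dots,2P\}$, each of these $2^{n}$ values is a power of two of bit-length $O(nP)$, and they can all be written down in $O(2^{n}\cdot nP)$ time. The crucial identity is that for every $Y\subseteq B$,
\[
(\widetilde{f}\ast\widetilde{g})(Y)=\sum_{X\subseteq Y}\widetilde{f}(X)\,\widetilde{g}(Y\setminus X)=\sum_{X\subseteq Y}z^{\,f(X)+g(Y\setminus X)+2P}=\sum_{j=0}^{4P}c_{j}(Y)\,z^{\,j},
\]
where $c_{j}(Y):=\bigl|\{X\subseteq Y:\,f(X)+g(Y\setminus X)+2P=j\}\bigr|$. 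Because $c_{j}(Y)\le 2^{|Y|}\le 2^{n}<z$, the numbers $c_{j}(Y)$ are exactly the base-$z$ digits of $(\widetilde{f}\ast\widetilde{g})(Y)$ — no carrying can occur — so the index of its most significant nonzero base-$z$ digit equals $\max_{X\subseteq Y}\bigl(f(X)+g(Y\setminus X)\bigr)+2P$, which is precisely $(f\ast g)(Y)+2P$ with $f\ast g$ taken over the max-sum semiring. Writing $\mathrm{msb}(m)$ for the position of the most significant set bit of a positive integer $m$ and using that $z=2^{n+1}$, we therefore recover $(f\ast g)(Y)=\bigl\lfloor\mathrm{msb}\bigl((\widetilde{f}\ast\widetilde{g})(Y)\bigr)/(n+1)\bigr\rfloor-2P$ for each $Y$; note $(\widetilde{f}\ast\widetilde{g})(Y)\ge 1$ always, as every encoded value is at least $1$, so $\mathrm{msb}$ is well defined.

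For the running time I would run the algorithm of Proposition~\ref{B1} on $\widetilde{f}$ and $\widetilde{g}$, which uses $O(2^{n}n^{3})$ arithmetic operations. One checks that in its ranked zeta- and M\"obius-transform implementation every intermediate integer is a bounded integer combination of products of at most two encoded values, hence of magnitude $2^{O(n)}\cdot z^{O(P)}$ and of bit-length $O(nP)$ throughout. Consequently each addition or subtraction costs $O(nP)$ and each multiplication costs $O\bigl(nP\log(nP)\log\log(nP)\bigr)$ by fast integer multiplication, so the convolution runs in $O\bigl(2^{n}n^{3}\cdot nP\log(nP)\log\log(nP)\bigr)=2^{n}\cdot n^{\mathcal{O}(1)}\cdot\mathcal{O}(P\log P\log\log P)$ time once factors polylogarithmic in $n$ are absorbed into $n^{\mathcal{O}(1)}$; constructing the encodings and extracting the $2^{n}$ values $\mathrm{msb}(\cdot)$ takes only $O(2^{n}\cdot nP)$ and is dominated. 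The part that needs the most care is the carry-free argument — choosing $z$ strictly larger than the number $2^{|Y|}$ of subsets of $Y$, and verifying that this makes the base-$z$ digits coincide with the multiplicities $c_{j}(Y)$ — together with the accompanying bound on the bit-length of the intermediate values produced by the Proposition~\ref{B1} algorithm, which is the one place where we must look inside that algorithm rather than invoke it as a black box.
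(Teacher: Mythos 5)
Your proposal is correct and follows exactly the route the paper indicates for this cited proposition: embedding the max-sum semiring into the sum-product ring via the encoding $X\mapsto z^{f(X)+P}$ with a base $z$ exceeding $2^{n}$, invoking Proposition~\ref{B1}, and reading off the leading base-$z$ digit, which is the standard argument of Bj\"orklund et al.\ and of the textbook \cite{cygan}. The carry-free digit analysis and the $O(nP)$ bit-length bound on intermediate values are precisely the points that need checking, and you have handled both correctly.
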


\subsection{Parameterized Complexity Definitions}	\label{PC}
 In the framework of Parameterized Complexity, each instance of a problem $\mathrm{\Pi}$ is associated with a non-negative integer \textit{parameter} $k$. A parameterized problem $\mathrm{\Pi}$ is \textit{fixed-parameter tractable} ($\mathsf{FPT}$) if there is an algorithm that, given an instance, $(I,k)$ of $\mathrm{\Pi}$, solves it in time $f(k)\cdot |I|^{\mathcal{O}(1)}$, for some computable function $f(\cdot)$. Central to Parameterized
Complexity is the hierarchy of complexity classes, which is defined as follows:
\begin{equation} \label{fpt}
\mathsf{FPT} \subseteq \mathsf{W[1]}  \subseteq \mathsf{W[2]} \subseteq \ldots \subseteq \mathsf{XP}.
\end{equation}

All inclusions in (\ref{fpt}) are believed to be strict. In particular, \FPT $\neq$ $\mathsf{W[1]}$ under the Exponential Time Hypothesis (defined below).
Here, the class $\mathsf{W[1]}$ is the analog of \NP in Parameterized Complexity.

To obtain (essentially) tight conditional lower bounds for the running times of algorithms,
we rely on the well-known \textit{Exponential Time Hypothesis} ($\mathsf{ETH}$). To formalize the statement of $\mathsf{ETH}$, first recall that
given a formula $\psi$ in the conjunctive normal form ($\mathsf{CNF}$) with $n$ variables and $m$ clauses, the task of CNF-SAT is to decide whether there is a truth assignment to the variables that satisfies
$\psi$. In the $q$-CNF-SAT problem, each clause is restricted to have at most $q$ literals. 
\ETH asserts that 3-CNF-SAT cannot be solved in time $\mathcal{O}(2^{o(n)})$ while \textsf{SETH} asserts that for every $\epsilon >0$, there is a constant $q$
such that $q$-CNF-SAT on $n$ variables cannot be solved in time $(2-\epsilon)^n\cdot n^{\mathcal{O}(1)}$ \cite{imp}. More information on Parameterized Complexity, \textsf{ETH}, and \textsf{SETH} can be found in \cite{cygan, downey}.

A parameterized (decision) problem $\mathrm{\Pi}$ is said to admit a \emph{kernel} of size $f(k)$ for some function $f$ that depends only on $k$ if the following is true: There exists an algorithm (called a \emph{kernelization algorithm}) that runs in $(|I|+k)^{\mathcal{O}(1)}$ time and translates any input instance $(I,k)$ of $\mathrm{\Pi}$ into an equivalent instance $(I',k')$ of $\mathrm{\Pi}$ such that the size of $(I',k')$ is bounded by $f(k)$. If the function $f$ is a polynomial, then the problem is said to admit a \emph{polynomial kernel}. It is well-known that a decidable parameterized problem is $\mathsf{FPT}$ if and only if it has
a kernel. Note that if the parameterized problem is solvable in time $f(k)\cdot n^c$ for some $f$ and $c$, then the proof yields a kernel of size $f(k)$. Standard notions in Parameterized Complexity not explicitly defined here can be found in \cite{cygan}.

\begin{definition} [Equivalent Instances] \label{equivalent}
Let $\mathrm{\Pi}_{1}$ and $\mathrm{\Pi}_{2}$ be two parameterized problems. Two instances, $(I, k)\in$ $\mathrm{\Pi}_{1}$ and $(I', k')\in $ $\mathrm{\Pi}_{2}$, are \emph{equivalent} when $(I, k)$ is a Yes-instance if and only if $(I', k')$ is a Yes-instance.

\end{definition}
\begin{definition} [Monte Carlo Algorithms with False Negatives] \label{definitionmonte}
An algorithm is a \emph{Monte Carlo algorithm with false negatives} if it satisfies the following property when asked about the existence of an object: If it answers yes, then it is true, and if it answers no, then it is correct with probability at least $\frac{2}{3}$ $($here, the constant $\frac{2}{3}$ is chosen arbitrarily$)$.
\end{definition}

\noindent \textbf{Cut \& Count Method.}
The Cut \& Count method was introduced by Cygan et al. \cite{cygan1}. It is a tool for designing algorithms with a single exponential running time for problems with certain connectivity requirements. The method is broadly divided into the following two parts. 
\begin{itemize}
\item \textbf{The Cut part:} Let $\mathcal{S}$ denote the set of feasible solutions. Here, we relax the connectivity requirement by considering a set $\mathcal{R}$ that contains feasible candidate solutions, which may or may not be connected. Furthermore, we consider a set $\mathcal{C}$ of pairs $(X,C)$, where $X\in \mathcal{R}$ and $C$ is a consistent cut of $X$.
\item \textbf{The Count part:} Here, we compute the cardinality of $\mathcal{C}$ modulo 2 (see Section \ref{AD}) using a sub-procedure. Non-connected candidate solutions $X\in \mathcal{R} \setminus \mathcal{S}$
cancel since they are consistent with an even number of cuts.  Only connected candidates $x\in \mathcal{S}$ are retained for further consideration.
\end{itemize}

More information on the Cut \& Count method can be found in \cite{cygan,cygan1}.
\smallskip

\noindent \textbf{Isolation Lemma.} \label{sectionisolationlemma} Consider the following definition.

\begin{definition}\label{defiso}
Let $U$ be a universe.	A function $\mathsf{w}:U\rightarrow \mathbb{Z}$ isolates a set family $\mathcal{F}\subseteq 2^{U}$ if there is a unique $S'\in \mathcal{F}$ with $\mathsf{w}(S')=\displaystyle\min_{S\in \mathcal{F}}\mathsf{w}(S)$.
\end{definition}
\begin{lemma} [Isolation Lemma, \cite{mulmuley}]  \label{IL1}
Let $\mathcal{F}\subseteq 2^{U}$ be a set family over a universe $U$ with $ \mathcal{|F|}>0$. For each $u\in U$, choose a weight $\mathsf{w}(u)\in \{1,2,\ldots, N\}$ uniformly and independently at random. Then
$$ \mathsf{prob}(\mathsf{w} \ isolates \ \mathcal{F})\geq 1- \frac{|U|}{N}.$$
\end{lemma}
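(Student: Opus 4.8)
The plan is to follow the classical argument of \cite{mulmuley} via the notion of an \emph{ambiguous} (equivalently, \emph{bad}) element. Call an element $u\in U$ \emph{bad} with respect to $\mathsf{w}$ if there exist two sets $S_{1},S_{2}\in\mathcal{F}$, both of minimum weight, with $u\in S_{1}$ and $u\notin S_{2}$. The first step is the easy observation that if $\mathsf{w}$ fails to isolate $\mathcal{F}$, then some element is bad: if there are two distinct minimum-weight sets $S_{1}\neq S_{2}$, any element in their symmetric difference is bad. Hence it suffices to bound $\mathsf{prob}(\exists\, u\in U:\ u \text{ is bad})$, and by the union bound it further suffices to show that $\mathsf{prob}(u \text{ is bad})\leq \tfrac{1}{N}$ for each fixed $u\in U$.

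The heart of the argument is a \emph{decoupling} step: the event ``$u$ is bad'' depends on $\mathsf{w}(u)$ only through a single equation whose other side is a function of the remaining weights alone. Concretely, I would fix the weights $\mathsf{w}(v)$ for all $v\neq u$ and set
$$a_{u}=\min\{\mathsf{w}(S): S\in\mathcal{F},\ u\notin S\},\qquad b_{u}=\min\{\mathsf{w}(S\setminus\{u\}): S\in\mathcal{F},\ u\in S\},$$
with the convention that a minimum over an empty family equals $+\infty$; note that neither $a_{u}$ nor $b_{u}$ involves $\mathsf{w}(u)$. The minimum weight of a set in $\mathcal{F}$ that contains $u$ is exactly $b_{u}+\mathsf{w}(u)$, and the minimum weight of a set in $\mathcal{F}$ avoiding $u$ is exactly $a_{u}$, so the overall minimum weight equals $\min(a_{u},\,b_{u}+\mathsf{w}(u))$. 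One then checks that $u$ is bad precisely when both quantities attain this minimum, i.e.\ when $a_{u}=b_{u}+\mathsf{w}(u)$, equivalently $\mathsf{w}(u)=a_{u}-b_{u}$.

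The conclusion follows quickly: conditioned on the weights of all $v\neq u$, the value $a_{u}-b_{u}$ is a fixed number, whereas $\mathsf{w}(u)$ is drawn uniformly from $\{1,\ldots,N\}$ independently of those weights, so the conditional probability that $\mathsf{w}(u)=a_{u}-b_{u}$ is at most $\tfrac{1}{N}$ (and is $0$ when $a_{u}-b_{u}\notin\{1,\ldots,N\}$). Averaging over the choice of the remaining weights gives $\mathsf{prob}(u \text{ is bad})\leq \tfrac{1}{N}$, and a union bound over the $|U|$ elements yields $\mathsf{prob}(\mathsf{w} \text{ does not isolate } \mathcal{F})\leq \tfrac{|U|}{N}$, which is exactly the claimed bound. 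I expect the only delicate point to be the case analysis identifying ``$u$ is bad'' with the equation $\mathsf{w}(u)=a_{u}-b_{u}$, in particular handling the degenerate situations where no set of $\mathcal{F}$ contains $u$ or every set of $\mathcal{F}$ contains $u$ (in both cases $u$ is trivially not bad, consistently with $b_{u}=+\infty$ or $a_{u}=+\infty$); but this is routine once the definitions above are set up carefully.
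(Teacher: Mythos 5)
Your proof is correct, and it is exactly the classical Mulmuley--Vazirani--Vazirani argument: reduce non-isolation to the existence of a ``bad'' element via the symmetric difference of two minimum-weight sets, decouple the event ``$u$ is bad'' into the equation $\mathsf{w}(u)=a_{u}-b_{u}$ where $a_{u},b_{u}$ do not depend on $\mathsf{w}(u)$, and finish with a union bound. The paper itself imports this lemma from \cite{mulmuley} without proof, so there is no in-paper argument to compare against; your handling of the degenerate cases (every set contains $u$, or none does) via the $+\infty$ convention is the standard and correct way to make the equivalence airtight.
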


\noindent \textbf{Treewidth.} A \emph{rooted tree} $T$ is a tree having a distinguished vertex labeled $r$, called the \emph{root}. For a vertex, $v\in V(T)$, an \emph{$(r, v)$-path} in $T$ is a sequence of distinct vertices starting from $r$ and ending at $v$ such that every two consecutive vertices are connected by an edge in the tree. The \emph{parent} of a vertex $v$ different from $r$ is its
neighbor on the unique $(r, v)$-path in $T$. The other neighbors of $v$ are its \emph{children}. A vertex $u$ is an \emph{ancestor} of $v$ if $u\neq v$ and $u$ belongs on
the unique $(r, v)$-path in $T$. A \emph{descendant} of $v$ is any vertex $u\neq v$ such that $v$ is its ancestor. The subtree rooted at $v$ is the subgraph of $T$ induced by $v$ and its descendants.

\begin{definition} [Tree Decomposition] 
A \emph{tree decomposition} of a graph $G$ is a pair $\mathcal{T}=(\mathbb{T},\{\mathcal{B}_{x}\}_{x\in V(\mathbb{T})})$, where $\mathbb{T}$ is a tree and each $\mathcal{B}_{x},x\in V(\mathbb{T})$, is a subset of $V(G)$ called a \emph{bag}, such that 
\begin{enumerate}
\item [$T.1)$] $\bigcup_{x\in V(\mathbb{T})}\mathcal{B}_{x}=V(G)$,
\item [$T.2)$] for any edge $uv\in E(G)$, there exists a node $x\in V(\mathbb{T})$ such that $u,v\in \mathcal{B}_{x}$, 
\item [$T.3)$] for all $x,y,z\in V(\mathbb{T})$, if $y$ is on the path from $x$ to $z$ in $\mathbb{T}$ then $\mathcal{B}_{x}\cap \mathcal{B}_{z}\subseteq \mathcal{B}_{y}$.
\end{enumerate}
\end{definition}

The \emph{width} of a tree decomposition $\mathcal{T}$ is the size of its largest bag minus one. The \emph{treewidth} of $G$ is the \emph{minimum width} over all tree decompositions of $G$. We denote the treewidth of a graph by $\tw$.

Dynamic programming algorithms on tree decompositions are often presented on \emph{nice tree decompositions}, which
were introduced by Kloks \cite{kloks}. We refer to the tree decomposition definition given by Kloks as a standard nice tree
decomposition, which is defined as follows: 

\begin{definition} [Nice Tree Decomposition] 
	Given a graph $G$, a tree decomposition 
$\mathcal{T} = (\mathbb{T},\{\mathcal{B}_{x}\}_{x\in V(\mathbb{T})})$ of $G$ is a \emph{nice tree decomposition} if the following hold:
\begin{enumerate}
	\item [$N.1)$] $\mathcal{B}_{r}=\emptyset$, where $r$ is the root of $\mathbb{T}$,
	and $\mathcal{B}_{l}=\emptyset$ for every leaf $l$ of $\mathbb{T}$.
	\item [$N.2)$] Every non-leaf node $x$ of $\mathbb{T}$ is of one of the following types:
	\begin{enumerate}
		\item [$N.2.1)$]
		\noindent \textbf{Introduce vertex node:} $x$ has exactly one child $y$, and $\mathcal{B}_{x}=\mathcal{B}_{y}\cup \{v\}$ where $v\notin \mathcal{B}_{y}$. We say that $v$ is \emph{introduced} at $x$.
		
		\item [$N.2.2)$] \noindent \textbf{Forget vertex node:} $x$ has exactly one child $y$, and $\mathcal{B}_{x}=\mathcal{B}_{y}\setminus \{u\}$ where $u\in \mathcal{B}_{y}$. We say that $u$ is \emph{forgotten} at $x$.
		
		\item [$N.2.3)$] \noindent \textbf{Join node:}  $x$ has exactly two children, $y_{1}$ and $y_{2}$, and $\mathcal{B}_{x}=\mathcal{B}_{y_{1}}=\mathcal{B}_{y_{2}}$.
	\end{enumerate}	
	\end{enumerate}
\end{definition}

\begin{observation} \label{joinobs}
All the common nodes in the bags of the subtrees of children of a join node appear in the bag of the join node.
\end{observation}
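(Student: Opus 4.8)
The plan is to prove Observation~\ref{joinobs} directly from the definition of a tree decomposition, specifically from the connectivity axiom $T.3)$. Let $x$ be a join node with children $y_1$ and $y_2$, and recall that by the definition of a join node we have $\mathcal{B}_x = \mathcal{B}_{y_1} = \mathcal{B}_{y_2}$. Let $T_1$ and $T_2$ denote the subtrees of $\mathbb{T}$ rooted at $y_1$ and $y_2$ respectively. I read the statement as follows: if a vertex $v \in V(G)$ appears in some bag $\mathcal{B}_{a}$ with $a \in V(T_1)$ and in some bag $\mathcal{B}_{b}$ with $b \in V(T_2)$, then $v \in \mathcal{B}_x$.

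First I would observe that, because $\mathbb{T}$ is a tree and $x$ is a common ancestor of every node in $T_1$ and every node in $T_2$, the unique path in $\mathbb{T}$ from $a$ to $b$ passes through $x$; indeed it goes from $a$ up to $y_1$, then through the edge $y_1 x$, then through the edge $x y_2$, then down to $b$. (If $a = y_1$ the first segment is trivial, and similarly for $b = y_2$; in all cases $x$ lies on the $a$--$b$ path, possibly as an endpoint only when the path degenerates, but since $x \notin V(T_1) \cup V(T_2)$ it is always an internal node of that path, or at worst we handle $a = x$ or $b = x$ separately, which cannot happen here since $a \in T_1$, $b \in T_2$ and these are disjoint from $\{x\}$ in the standard convention where the subtree rooted at a child excludes the parent).

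Next I would apply axiom $T.3)$ with the triple $(a, x, b)$: since $x$ is on the path from $a$ to $b$, we get $\mathcal{B}_a \cap \mathcal{B}_b \subseteq \mathcal{B}_x$. Because $v \in \mathcal{B}_a$ and $v \in \mathcal{B}_b$, this immediately yields $v \in \mathcal{B}_x$, which is exactly what we wanted. Thus every vertex that appears both somewhere in the subtree of $y_1$ and somewhere in the subtree of $y_2$ must already lie in the bag of the join node.

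I do not expect a genuine obstacle here; the only subtlety is pinning down the precise reading of ``common nodes in the bags of the subtrees of children of a join node'' — namely, vertices of $G$ occurring in bags on both sides — and verifying that $x$ indeed separates $T_1$ from $T_2$ in $\mathbb{T}$, which is immediate from the tree structure since removing $x$ disconnects the component containing $T_1$ from the component containing $T_2$. Once that is noted, the conclusion is a one-line application of $T.3)$.
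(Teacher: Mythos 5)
Your argument is correct and is exactly the intended justification: the paper states this as an observation without proof precisely because it follows immediately from axiom $T.3)$ applied to the triple $(a,x,b)$, as you do. Your reading of the statement and the verification that $x$ lies on the $a$--$b$ path are both right, so nothing further is needed.
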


For our problems, we want the standard nice tree decomposition to satisfy an additional property, and that is, among the vertices present in the bag of a join node, no edges have been introduced yet.  To achieve this, we use another known
type of node, an \emph{introduce edge node}, which is defined as follows:\\

	 \noindent \textbf{Introduce edge node:} $x$ has exactly one child $y$, and $x$ is labeled with an edge $uv\in E(G)$ such that $u,v\in \mathcal{B}_{x}$ and $\mathcal{B}_{x}=\mathcal{B}_{y}$. We say that $uv$ is \emph{introduced} at $x$.
	 
The use of introduce edge nodes enables us to add edges one by one in our nice tree decomposition. We additionally require that every edge is introduced exactly once. Observe that condition $T.3)$ implies that, in a nice tree decomposition, for every vertex $v \in V(G)$, there
exists a unique highest node $t(v)$ such that $v\in \mathcal{B}_{t(v)}$. Moreover, the parent
of $\mathcal{B}_{t(v)}$ is a forget node that forgets $v$. Consider an edge $uv\in E(G)$, and
observe that $T.2)$ implies that either $t(v)$ is an ancestor of $t(u)$ or $t(u)$ is an ancestor
of $t(v)$. Without loss of generality, assume the former, and observe that the introduce edge bag that introduces $uv$ can be inserted anywhere between $t(u)$ and its
parent (which forgets $u$). So, for every edge $uv$, where $t(v)$ is an ancestor of $t(u)$, we want our nice tree decomposition to insert the introduce edge bags (introducing edges of the form $uv$) between $t(u)$ and its parent in an arbitrary order. If a nice tree decomposition having introduce edge nodes satisfies these additional conditions, then we say that it exhibits the \textit{deferred edge} property (informally speaking, as we are deferring the introduction of edges in our nice tree decomposition to as late as possible).

Given a tree decomposition of a graph $G$, where $n=|V(G)|$, a standard nice tree decomposition of equal width and at most $\mathcal{O}(\tw\cdot n)$ nodes can be found in $\tw^{\mathcal{O}(1)}\cdot n$ time \cite{kloks}, and in the same running time, a standard nice tree decomposition can
be easily transformed to the variant satisfying the \textit{deferred edge} property as follows: Recall that we may insert the introduce edge bag that introduces $uv$
between $t(u)$ and its parent (which forgets $u$). This transformation, for every edge $uv\in E(G)$,
can be easily implemented in time $\tw^{\mathcal{O}(1)}\cdot n$ by a single top-down transversal of the tree decomposition. It is worth noting that the resulting tree decomposition will still have $\mathcal{O}(\tw\cdot n)$ nodes, as a graph with treewidth at most \tw~has at most $\tw\cdot n$ edges, as mentioned in \cite{cygan}. 

For each node $x$ of the tree decomposition, let $V_{x}$ be the union of all the bags  present in the subtree of $\mathbb{T}$ rooted at $x$, including $\mathcal{B}_{x}$.
For each node $x$ of the tree decomposition, define the subgraph $G_{x}$ of $G$ as follows:

$G_{x}=(V_{x},E_{x}=\{e: e$ is introduced in the subtree of $\mathbb{T}$ rooted at $x\})$.

A \emph{path decomposition} and \emph{pathwidth} are defined analogously as tree decomposition and treewidth with the additional requirement that the tree $\mathbb{T}$ is a \emph{path}. The 
pathwidth of a graph $G$ is denoted by $\pw(G)$, and when there is no confusion, we use only $\pw$ to denote $\pw(G)$.
\medskip

In Sections \ref{IM} to \ref{CDM}, we use different colors to represent the possible states of a vertex in a bag $\mathcal{B}_{x}$ of $\mathcal{T}$ with respect to a matching $M$ as follows: 
\begin{itemize}
\item  \textbf{white(0)}: A vertex colored 0 is not saturated by $M$.

\item  \textbf{black(1)}: A vertex colored $1$ is saturated by $M$, and the edge between the vertex and its $M$-mate has also been introduced in $G_{x}$.

\item \textbf{gray(2)}: A vertex colored $2$ is saturated by $M$, and either its $M$-mate has not yet been introduced in $G_{x}$, or the edge between the vertex and its $M$-mate has not yet been introduced in $G_{x}$.
\end{itemize}

Now, consider the following definition used in Sections \ref{AM} and \ref{CDM}.

\begin{definition} [Valid Coloring] \label{valid} Given a node $x$ of $\mathbb{T}$, a coloring $d:\mathcal{B}_{x}\rightarrow\{0,1,2\}$ is \emph{valid} on $\mathcal{B}_{x}$ if there exists a coloring $\widehat{d}:V_{x}\rightarrow\{0,1,2\}$ in $G_{x}$, called a \emph{valid extension} of $d$, such that the following hold: 

\begin{enumerate}
	\item [$(i)$] $\widehat{d}$ restricted to $\mathcal{B}_{x}$ is exactly $d$.
	\item  [$(ii)$] The subgraph induced by the vertices colored 1 under $\widehat{d}$ has a perfect matching.
		\item [$(iii)$] Vertices colored $2$ under $\widehat{d}$ must all belong to $\mathcal{B}_{x}$.
\end{enumerate}
\end{definition}

\medskip

\section{Algorithm for Induced Matching} \label{IM}
 In this section, we present a $3^{\tw}\cdot \tw^{\mathcal{O}(1)}\cdot n$-time algorithm for \textsc{Induced Matching} assuming that we are given a nice tree decomposition
$\mathcal{T} = (\mathbb{T},\{\mathcal{B}_{x}\}_{x\in V(\mathbb{T})})$ of $G$ of width $\tw$ that satisfies the \textit{deferred edge} property. For this purpose, we define the following notion.

\begin{definition} [Valid Induced Coloring] \label{im1}
Given a node $x$ of $\mathbb{T}$, a coloring $f:\mathcal{B}_{x}\rightarrow\{0,1,2\}$ is \emph{valid induced} if there exists a coloring $\widehat{f}:V_{x}\rightarrow\{0,1,2\}$ in $G_{x}$, called a \emph{valid induced extension} of $f$, such that the following hold: 

\begin{enumerate}
\item  $\widehat{f}$ restricted to $\mathcal{B}_{x}$ is exactly $f$.
\item The subgraph induced by the vertices colored 2 under $\widehat{f}$ is a set of isolated vertices. Furthermore, vertices colored $2$ under $\widehat{f}$ must all belong to $\mathcal{B}_{x}$.
\item  The subgraph induced by the vertices colored 1 under $\widehat{f}$ is an induced matching.
\end{enumerate}
\end{definition}

We have a table $\mathcal{M}$ with an entry $\mathcal{M}_{x}[f]$ for each node $x$ of $\mathbb{T}$ and for every coloring $f:\mathcal{B}_{x}\rightarrow\{0,1,2\}$. Note that we have at most $\mathcal{O}(\tw\cdot n)$ many choices for $x$ and at most $3^{\tw}$ many choices for $f$. Therefore, the size of table $\mathcal{M}$ is bounded by $\mathcal{O}(3^{\tw}\cdot \tw\cdot 
n)$. The following definition specifies the value each entry $\mathcal{M}_{x}[f]$ of $\mathcal{M}$ is supposed to store.
\begin{definition}
If $f$ is valid induced, then the entry $\mathcal{M}_{x}[f]$ stores the maximum number of vertices that are colored $1$ or $2$ under some valid induced extension $\widehat{f}$ of $f$ in $G_{x}$. Else, the entry $\mathcal{M}_{x}[f]$ stores the value $-\infty$ and marks $f$ as \emph{invalid}.
\end{definition}

Since the root of $\mathbb{T}$ is an empty node, note that the maximum number of vertices saturated by any induced matching is exactly $\mathcal{M}_{r}[\emptyset]$, where $r$ is the root of the decomposition $\mathcal{T}$. 

We now provide recursive formulas to compute the entries of table $\mathcal{M}$. 
\bigskip

\noindent \textbf{Leaf node:} For a leaf node $x$, we have that $\mathcal{B}_{x}=\emptyset$. Hence there is only one possible coloring on $\mathcal{B}_{x}$, that is, the empty coloring, and we have $\mathcal{M}_{x}[\emptyset]=0$. 

\medskip
    
\noindent \textbf{Introduce vertex node:} Suppose that $x$ is an introduce vertex node with child node $y$ such that $\mathcal{B}_{x}=\mathcal{B}_{y}\cup \{v\}$ for some $v\notin \mathcal{B}_{y}$. Note that we have not introduced any edges incident on $v$ so far, so $v$ is isolated in $G_x$. For every coloring $f: \mathcal{B}_{x}\rightarrow \{0,1,2\}$, we have the following recursive formula:

\begin{equation*}
  \mathcal{M}_{x}[f] = \begin{cases}
        \mathcal{M}_{y}[f|_{\mathcal{B}_{y}}] \ \  \ \ \ \ \ \ \text{\ if \ $f(v)=0$,}
        \\
        -\infty \ \ \  \ \ \ \ \ \ \ \ \ \ \ \  \text{\ if \ $f(v)=1$,}
            \\
            \mathcal{M}_{y}[f|_{\mathcal{B}_{y}}]+1 \ \ \ \text{\ if \ $f(v)=2$}.
        \end{cases}
 \end{equation*}

\bigskip

Note that when $f(v)=1$, then $f$ is invalid as $v$ does not have any neighbor in $G_{x}$ (by the definition of a valid induced coloring, $v$ needs one neighbor of color $1$ in $G_{x}$), and hence $\mathcal{M}_{x}[f]=-\infty$. Next, when $f(v)=0$ or $f(v)=2$, then $f$ is valid induced if and only if $f|_{\mathcal{B}_{y}}$ is valid induced. Moreover, when $f(v)=2$, we increment the value by one as one more vertex has been colored $2$ in $G_{x}$.

Clearly, the evaluation of all introduce vertex nodes can be done in $3^{\tw}\cdot \tw^{\mathcal{O}(1)} \cdot n$ time.
\bigskip

\noindent \textbf{Introduce edge node:} 
Suppose that $x$ is an introduce edge node that introduces
an edge $uv$, and let $y$ be the child of $x$.  For every coloring $f: \mathcal{B}_{x}\rightarrow \{0,1,2\}$, we consider the following cases:

If at least one of $f(u)$ or $f(v)$ is $0$, then 
	\begin{equation*}
	\mathcal{M}_{x}[f]=
\mathcal{M}_{y}[f]. 
\end{equation*}

Else, if $f(u)=f(v)=1$, then
\begin{equation*}
	\mathcal{M}_{x}[f]=
\mathcal{M}_{y}[f_{\{u,v\}\rightarrow 2}].
\end{equation*}

Else, $\mathcal{M}_{x}[f]=-\infty$.

If either $f(u)$ or $f(v)$ is $0$, then $f$ is valid induced if and only if $f$ is valid induced on $\mathcal{B}_{y}$. Next, let us consider the case when both $f(u)$ and $f(v)$ are $1$ and $f$ is valid induced. In this case, both $u$ and $v$ must be colored $2$ under $f$ in $\mathcal{B}_{y}$ (this follows by the definition of a valid induced coloring).

Clearly, the evaluation of all introduce edge nodes can be done in $3^{\tw}\cdot \tw^{\mathcal{O}(1)} \cdot n$ time.
  
\bigskip

\noindent  \textbf{Forget node:} Suppose that $x$ is a forget vertex node with
a child $y$ such that $\mathcal{B}_{x}=\mathcal{B}_{y}\setminus \{u\}$ for some $u\in \mathcal{B}_{y}$. For every coloring $f: \mathcal{B}_{x}\rightarrow \{0,1,2\}$, we have
 
\begin{equation} \label{forgetim}\mathcal{M}_{x}[f]=
\displaystyle\max\{\mathcal{M}_{y}[f_{u \rightarrow 0}],\mathcal{M}_{y}[f_{u \rightarrow 1}]\} .
\end{equation}

 The first term on the right-hand side in (\ref{forgetim}) corresponds to the case when $f(u)=0$ in $\mathcal{B}_{y}$, and the second term corresponds to the case when $f(u)=1$ in $\mathcal{B}_{y}$. Note that the maximum is taken over colorings $f_{u \rightarrow 0}$  and $f_{u \rightarrow 1}$  only, as the coloring $f_{u \rightarrow 2}$ cannot be extended to a valid induced coloring once $u$ is forgotten. 

Clearly, the evaluation of all forget nodes can be done in $3^{\tw}\cdot \tw^{\mathcal{O}(1)} \cdot n$ time.

\bigskip

\noindent \textbf{Join node:}
Let $x$ be a join node with children $y_{1}$ and $y_{2}$. For every coloring $f: \mathcal{B}_{x}\rightarrow \{0,1,2\}$, we have 
\begin{equation*}
\mathcal{M}_{x}[f]=\displaystyle\max_{f_{1},f_{2}} \{\mathcal{M}_{y_{1}}[f_{1}]+\mathcal{M}_{y_{2}}[f_{2}]-|f^{-1}(1)|-|f^{-1}(2)| \},
\end{equation*}
where $f_{1}: \mathcal{B}_{y_{1}}\rightarrow \{0,1,2\}$ and $f_{2}: \mathcal{B}_{y_{2}}\rightarrow \{0,1,2\}$ such that $f_{1}$ and $f_{2}$ are correct for $f$ (see Definition \ref{prelimdef}).
 
Note that we are determining the value of $\mathcal{M}_{x}[f]$ by looking up the corresponding coloring in nodes $y_{1}$ and $y_{2}$, adding the corresponding values, and subtracting the number of vertices
colored $1$ or $2$ under $f$. Note that the subtraction is necessary; otherwise, by Observation \ref{joinobs}, the number of vertices colored $1$ or $2$ in $\mathcal{B}_{x}$ would be counted twice.

By the naive method, the evaluation of all join nodes altogether can be done in $4^{\tw}\cdot \tw^{\mathcal{O}(1)} \cdot n$ time as follows.

For a given coloring $f: \mathcal{B}_{x} \rightarrow \{0,1,2\}$, where $|\mathcal{B}_{x}|=p$, $1\leq p \leq \tw+1$, and $a=|f^{-1}(1)|$, there are at
most $2^{a}$ possible pairs of correct colorings for $f$ (this follows from the definition of correct colorings). There are $2^{p-a} {p \choose a}$
possible colorings $f$ with $a$ vertices colored $1$, thus 
$$|\{(f_{1},f_{2}): f\in \{0,1,2\}^{p}, f_{1} \ and \ f_{2} \ are \ correct \ for \ f\}|\leq \displaystyle\sum_{a=0}^{p} 2^{p-a}{p \choose a}\cdot 2^{a}=4^{p}.$$

Since there are at most $4^{\tw}$ pairs of correct colorings, the evaluation of all join nodes altogether can be done in $4^{\tw}\cdot \tw^{\mathcal{O}(1)} \cdot n$ time. However, the fast subset convolution can be used to handle the join nodes more efficiently. In our case, set $B$ (given in Definition \ref{defsubsetconvo}) is a subset of a bag of $\mathcal {T}$. Further, we take the subset convolution over the max-sum semiring.

Note that if $f,f_{1},f_{2}:\mathcal{B}_{x} \rightarrow \{0,1,2\}$, then $f_{1}$ and $f_{2}$ are correct for a coloring $f$ if and only if the following conditions hold:

\begin{enumerate}
    \item [$C.1)$] $f^{-1}(0)=f_{1}^{-1}(0)=f_{2}^{-1}(0)$,
    \item [$C.2)$] $f^{-1}(1)=f_{1}^{-1}(1)\cup f_{2}^{-1}(1)$,
    \item [$C.3)$]  $f_{1}^{-1}(1)\cap f_{2}^{-1}(1)=\emptyset$.
\end{enumerate}

Another required condition, i.e., $f^{-1}(2)=f_{1}^{-1}(2)\cap f_{2}^{-1}(2)$ is already implied by conditions C.1)-C.3).  Next, note that if we fix $f^{-1}(0)$, then what we want to compute resembles the subset convolution. So, we fix a set $R \subseteq \mathcal{B}_{x}$. Further, let $\mathcal{F}_{R}$ denote the set of all functions $f:\mathcal{B}_{x}\rightarrow \{0,1,2\}$ such
that $f^{-1}(0)=R$. Next, we compute the values of $\mathcal{M}_{x}[f]$ for all $f\in \mathcal{F}_{R}$.
Note that every function $f\in \mathcal{F}_{R}$ can be represented by a set $S \subseteq \mathcal{B}_{x}\setminus R$,
namely, the preimage of 1. Hence, we can define the coloring represented by
$S$ as

\begin{equation} \label{eq1}
  g_{S}(x) = \begin{cases}
        0 \ \  \ \ \ \ \ \ \text{\ if \ $x\in R$,}
        \\
        1 \ \ \  \ \ \ \ \    \text{\ if \ $x\in S$,}
            \\
            2 \ \ \ \ \ \ \  \ \text{\ if \ $x\in \mathcal{B}_{x}\setminus (R\cup S)$}.
        \end{cases}
 \end{equation}

Now, for every $f\in \mathcal{F}_{R}$, we have
 \begin{equation}\label{eq2}
\mathcal{M}_{x}[f]=\displaystyle\max_{\substack{A\cup B=f^{-1}(1) \\ A\cap B =\emptyset}} \{\mathcal{M}_{y_{1}}[g_{A}]+\mathcal{M}_{y_{2}}[g_{B}]-|f^{-1}(1)|-|f^{-1}(2)|\}.
\end{equation}

The following observation follows from the definitions of $\mathcal{M}_{x}[f]$, $g_S(x)$, subset convolution, and equations (\ref{eq1}) and (\ref{eq2}).

\begin {observation} Let $v\in \{y_{1},y_{2}\}$ and $\mathcal{M}_{v}:2^{\mathcal{B}_{x}\setminus R}\rightarrow \{1,2,\ldots,k\}$  be such that for every
$S\subseteq \mathcal{B}_{x}\setminus R$, $\mathcal{M}_{v}(S)=\mathcal{M}_{v}[g_{S}]$. Then, for every $S\subseteq \mathcal{B}_{x} \setminus R$,

\begin{equation*}
\mathcal{M}_{x}[g_S]= (\mathcal{M}_{y_{1}}* \mathcal{M}_{y_{2}})(S)+|R|-|\mathcal{B}_{x}|,
\end{equation*}

where the subset convolution is over the max-sum semiring.
\end {observation}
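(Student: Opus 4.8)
The plan is to unwind the definitions so that the asserted identity becomes a bookkeeping statement about how the quantities $\mathcal{M}_{y_1}[g_A]$, $\mathcal{M}_{y_2}[g_B]$, and the subtraction term in (\ref{eq2}) line up with the definition of subset convolution over the max-sum semiring. First I would fix $R\subseteq\mathcal{B}_x$ and the function $\mathcal{M}_v(S):=\mathcal{M}_v[g_S]$ for $v\in\{y_1,y_2\}$ and $S\subseteq\mathcal{B}_x\setminus R$, exactly as stated. The subset convolution over the max-sum semiring $(\mathbb{Z}\cup\{-\infty\},\max,+)$ of these two functions evaluated at a set $S\subseteq\mathcal{B}_x\setminus R$ is, by Definition \ref{defsubsetconvo} rewritten in the $A\cup B=S$, $A\cap B=\emptyset$ form, precisely
\[
(\mathcal{M}_{y_1}*\mathcal{M}_{y_2})(S)=\max_{\substack{A\cup B=S\\ A\cap B=\emptyset}}\bigl(\mathcal{M}_{y_1}[g_A]+\mathcal{M}_{y_2}[g_B]\bigr).
\]

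Next I would observe that for a coloring of the form $f=g_S$ we have $f^{-1}(0)=R$, $f^{-1}(1)=S$, and $f^{-1}(2)=\mathcal{B}_x\setminus(R\cup S)$, so the subtracted term in (\ref{eq2}) is $|f^{-1}(1)|+|f^{-1}(2)|=|S|+(|\mathcal{B}_x|-|R|-|S|)=|\mathcal{B}_x|-|R|$, which is independent of $S$ (and of the split $A\cup B$). Therefore in (\ref{eq2}) the term $-|f^{-1}(1)|-|f^{-1}(2)|$ can be pulled outside the maximum, yielding
\[
\mathcal{M}_x[g_S]=\Bigl(\max_{\substack{A\cup B=S\\ A\cap B=\emptyset}}\bigl(\mathcal{M}_{y_1}[g_A]+\mathcal{M}_{y_2}[g_B]\bigr)\Bigr)-\bigl(|\mathcal{B}_x|-|R|\bigr)=(\mathcal{M}_{y_1}*\mathcal{M}_{y_2})(S)+|R|-|\mathcal{B}_x|,
\]
which is the claimed formula.

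The one point that needs a short justification, rather than pure substitution, is that the maximization in (\ref{eq2}) really does range exactly over the ordered pairs $(A,B)$ of disjoint sets with union $f^{-1}(1)=S$, i.e.\ that the colorings $(g_A,g_B)$ appearing there are in bijection with the pairs $(f_1,f_2)$ that are correct for $f=g_S$. This is where I would invoke conditions C.1)--C.3): any correct pair $(f_1,f_2)$ for $f$ satisfies $f_1^{-1}(0)=f_2^{-1}(0)=R$, so $f_1=g_A$ and $f_2=g_B$ where $A=f_1^{-1}(1)$ and $B=f_2^{-1}(1)$; C.2) gives $A\cup B=S$, C.3) gives $A\cap B=\emptyset$; and conversely any such $(A,B)$ produces a correct pair via (\ref{eq1}), with $f^{-1}(2)=A^c\cap B^c$ inside $\mathcal{B}_x\setminus R$ matching $f_1^{-1}(2)\cap f_2^{-1}(2)$ as already noted in the text. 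I do not expect any genuine obstacle here: the only thing to be careful about is the sign and the off-by-$|R|-|\mathcal{B}_x|$ shift, and the fact that the shift is constant over the convolution sum is exactly what makes it legal to factor it through the max. Once this is spelled out, the observation follows, and it is then immediate that evaluating all $2^{|\mathcal{B}_x\setminus R|}$ values of $\mathcal{M}_x[g_S]$ for a fixed $R$ reduces to one subset convolution, which by Proposition \ref{B2} costs $2^{|\mathcal{B}_x\setminus R|}\cdot\tw^{\mathcal{O}(1)}$ time (the integer range being polynomially bounded in $n$); summing over all $R\subseteq\mathcal{B}_x$ and all join nodes gives the claimed $3^{\tw}\cdot\tw^{\mathcal{O}(1)}\cdot n$ bound.
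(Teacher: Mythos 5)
Your proof is correct and follows essentially the same route the paper intends: the paper leaves the observation as an immediate consequence of the definitions, equations (\ref{eq1})--(\ref{eq2}), and the discussion of conditions C.1)--C.3) preceding it, and your argument simply makes those steps explicit (the subtracted term $|f^{-1}(1)|+|f^{-1}(2)|=|\mathcal{B}_x|-|R|$ being constant so it factors through the $\max$, and the bijection between disjoint pairs $(A,B)$ with $A\cup B=S$ and correct pairs $(f_1,f_2)=(g_A,g_B)$). No gaps.
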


By Proposition \ref{B2}, we compute $\mathcal{M}_{x}[g_{S}]$ for every $S\subseteq \mathcal{B}_{x} \setminus R$ in  $2^{|\mathcal{B}_{x}\setminus R|}\cdot \tw^{\mathcal{O}(1)}\cdot \mathcal{O}(k \log k \log \log k)=2^{|\mathcal{B}_{x}\setminus R|}\cdot \tw^{\mathcal{O}(1)}$ time. Also, we have to try all possible fixed subsets $R$ of $\mathcal{B}_{x}$. Since $\displaystyle\sum_{R\subseteq \mathcal{B}_{x}}2^{|\mathcal{B}_{x}\setminus R|}=3^{|\mathcal{B}_{x}|}\leq 3^{\tw+1}$, the total time spent for all subsets $R\subseteq \mathcal{B}_{x}$ is $3^{\tw}\cdot \tw^{\mathcal{O}(1)}$. So, clearly, the evaluation of all join nodes can be done in $3^{\tw} \cdot \tw^{\mathcal{O}(1)}\cdot n$ time.

 Thus from the description of all nodes, we have the following theorem.
\inducedmatching*

\smallskip

\section{Algorithm for Acyclic Matching} \label{AM}

In this section, we present a $6^{\tw} \cdot n^{\mathcal{O}(1)}$-time algorithm for \textsc{Acyclic Matching} assuming that we are given a nice tree decomposition
$\mathcal{T} = (\mathbb{T},\{\mathcal{B}_{x}\}_{x\in V(\mathbb{T})})$ of $G$ of width $\tw$ that satisfies the \textit{deferred edge} property. We use the Cut \& Count technique along with a concept called \textit{markers} (see \cite{cygan1}). 
Given that the $\textsc{Acyclic Matching}$ problem does not impose an explicit connectivity requirement, we can proceed by selecting the (presumed) forest obtained after choosing the vertices saturated by an acyclic matching $M$ and using the following result:
\begin{proposition} [\cite{cygan1}] \label{am1}
	 A graph $G$ with $n$ vertices and $m$ edges is a forest if and only if it has at most $n-m$ connected components.
\end{proposition}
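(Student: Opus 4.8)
The plan is to derive the statement from the elementary accounting that relates the number of edges, vertices, and connected components of a graph. Write $c$ for the number of connected components of $G$, and let the components span $n_1,\ldots,n_c$ vertices, so that $\sum_{i=1}^{c} n_i = n$. First I would recall that a connected graph on $k$ vertices has at least $k-1$ edges --- e.g.\ by taking a spanning tree --- and that it has \emph{exactly} $k-1$ edges if and only if it is itself a tree (equivalently, acyclic). Summing this bound over all components yields $m \ge \sum_{i=1}^{c}(n_i-1) = n-c$, i.e.\ $c \ge n-m$, an inequality valid for \emph{every} graph; moreover equality $c = n-m$ holds if and only if every component attains its own bound, i.e.\ every component is a tree.

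With this in hand both directions follow quickly. For the forward direction, if $G$ is a forest then each component is a tree, so $m = \sum_{i=1}^{c}(n_i-1) = n-c$, whence $c = n-m$ and in particular $c \le n-m$. For the converse, suppose $c \le n-m$; combining this with the universally valid inequality $c \ge n-m$ forces $c = n-m$, which is exactly the equality case above, so every connected component of $G$ is a tree and therefore $G$ is a forest.

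The argument carries no real obstacle; the only point that needs care is stating the equality condition of the per-component edge bound correctly (a connected graph on $k$ vertices with exactly $k-1$ edges is a tree), since this is precisely what upgrades the one-sided inequality $c \ge n-m$ into a characterization. One could equivalently phrase the whole thing via the circuit rank $m-n+c \ge 0$, which vanishes exactly for forests, but the spanning-tree accounting above is the most self-contained route.
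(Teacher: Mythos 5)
Your proof is correct. The paper itself does not prove this proposition --- it cites it from Cygan et al. \cite{cygan1} --- and your argument is the standard elementary one: establish the universal inequality $cc(G) \ge n - m$ by summing the per-component spanning-tree bound, note that equality holds exactly when every component is a tree, and then observe that the stated inequality $cc(G) \le n - m$, together with the universal one, pins down the equality case. This is precisely the accounting one expects behind the citation, and you have handled the one subtle point (the equality condition of the per-component bound) explicitly and correctly.
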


Our solution set contains pairs $(X,P)$, where $X\subseteq V(G)$ is a set of $M$-saturated vertices
and $P\subseteq V(G)$ is a set of marked vertices (markers) such that each connected component in $G[X]$ contains at least one marked vertex. Markers will be helpful in bounding the number of connected components in $G[X]$ by $n'-m'$, where $n'=|X|$ and $m'$ is the number of edges in $G[X]$ (so that Proposition \ref{am1} can be applied). Since we will use the Isolation lemma (see Section \ref{PC}), we will be assigning random weights to the vertices of $X$. Furthermore, note that two pairs from our solution set with different sets of marked vertices are necessarily considered to be two different solutions. For this reason, we assign random weights both to the vertices of $X$ and vertices of $P$.

Throughout this section, as the universe, we take the set $U=V(G)\times\{\F,\K \}$, where $V(G)\times\{\F\}$ is used to assign weights to vertices of the chosen forest and $V(G)\times\{\K\}$ is used to assign weights to vertices chosen as markers. Also, throughout this section, we assume that we
are given a weight function $\mathsf{w}:U\rightarrow\{1,2,\ldots,N\}$, where $N=3|U|=6|V(G)|$.

Let us first consider the Cut part and start by defining the objects we are going to count.\\

\noindent \textbf{The Cut part:}

\begin{definition} \label{defcut}
 Let $G$ be a graph with $n$ vertices and $m$ edges. For integers $0\leq A\leq n, 0 \leq B \leq m, 0 \leq C \leq n,$ and $0 \leq W \leq 2Nn$, we define the following:
 
\begin{enumerate}
\item $\mathcal{R}_{W}^{A,B,C}=\{ (X,P): X\subseteq V(G) \wedge |X|=A \wedge G[X]$ contains
exactly $B$ edges $\wedge$ $G[X]$ has a perfect matching $\wedge$ $P\subseteq X$ $\wedge$ $|P|=C$ $\wedge$ $\mathsf{w}(X\times\{\F\})+\mathsf{w}(P\times\{\K\})=W\}$.
\item $\mathcal{S}_{W}^{A,B,C}=\{ (X,P)\in \mathcal{R}_{W}^{A,B,C}:$ $G[X]$ is a forest
containing at least one marker from the set $P$ in each connected component$\}$.
\item $\mathcal{C}_{W}^{A,B,C}=\{ ((X,P),(X_{l},X_{r})):(X,P)\in \mathcal{R}_{W}^{A,B,C}$ $\wedge$ $P\subseteq X_{l}$ $\wedge$ $(X_{l},X_{r})$ is a consistent cut of $G[X]\}$.
\end{enumerate}
\end{definition}

We call the set $\mathcal{R}=\bigcup_{A,B,C,W}\mathcal{R}_{W}^{A,B,C}$ the family of \textit{candidate solutions},  $\mathcal{S}=\bigcup_{A,B,C,W}\mathcal{S}_{W}^{A,B,C}$ the family of \textit{solutions}, and $\mathcal{C}=\bigcup_{A,B,C,W}\mathcal{C}_{W}^{A,B,C}$ the family of \textit{cuts}.\\

\noindent \textbf{The Count part:}

\begin{lemma}\label{equal}
	Let $G,A,B,C,W,\mathcal{C}_{W}^{A,B,C},$ and $\mathcal{S}_{W}^{A,B,C}$ be as defined in Definition \ref{defcut}. Then, for every $A,B,C,W$ satisfying $C\leq A-B$, we have $\big{|}\mathcal{C}_{W}^{A,B,C}\big{|}_{2}\equiv\big{|}\mathcal{S}_{W}^{A,B,C}\big{|}.$ 
\end{lemma}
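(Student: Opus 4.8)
The plan is to partition the family $\mathcal{C}_{W}^{A,B,C}$ according to the first coordinate of its pairs and to show that each candidate solution $(X,P)\in\mathcal{R}_{W}^{A,B,C}$ is paired with a number of consistent cuts that is a power of two, odd exactly when $(X,P)$ is in fact a solution. Fix $A,B,C,W$ with $C\le A-B$ throughout.

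First I would establish the standard structural fact about consistent cuts: a cut $(X_{l},X_{r})$ is consistent in $G[X]$ if and only if it is constant on every connected component of $G[X]$ --- if two vertices of one component landed on opposite sides, then some edge on a path joining them would violate consistency; conversely any assignment of a side to each component is consistent. Hence the consistent cuts of $G[X]$ are in bijection with $\{l,r\}^{cc(G[X])}$. Imposing the extra constraint $P\subseteq X_{l}$ forces every component meeting $P$ into $X_{l}$ while leaving the components disjoint from $P$ free, so the number of cuts $(X_{l},X_{r})$ with $((X,P),(X_{l},X_{r}))\in\mathcal{C}_{W}^{A,B,C}$ equals $2^{z(X,P)}$, where $z(X,P)$ denotes the number of connected components of $G[X]$ containing no vertex of $P$.

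Summing over $(X,P)\in\mathcal{R}_{W}^{A,B,C}$ yields $|\mathcal{C}_{W}^{A,B,C}|=\sum_{(X,P)\in\mathcal{R}_{W}^{A,B,C}}2^{z(X,P)}$, so modulo $2$ only the terms with $z(X,P)=0$ survive. The remaining step is to check that $\{(X,P)\in\mathcal{R}_{W}^{A,B,C}:z(X,P)=0\}$ is exactly $\mathcal{S}_{W}^{A,B,C}$. The inclusion $\supseteq$ is immediate from the definition of $\mathcal{S}_{W}^{A,B,C}$. For $\subseteq$, suppose $(X,P)\in\mathcal{R}_{W}^{A,B,C}$ with $z(X,P)=0$: then every connected component of $G[X]$ contains one of the $C$ markers, so $cc(G[X])\le C$. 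This is where I would invoke the hypothesis $C\le A-B$: since $|X|=A$ and $G[X]$ has exactly $B$ edges, $cc(G[X])\le C\le A-B=|V(G[X])|-|E(G[X])|$, so Proposition \ref{am1} forces $G[X]$ to be a forest, which together with the marker condition places $(X,P)$ in $\mathcal{S}_{W}^{A,B,C}$. Since each such pair contributes exactly $2^{0}=1$ to $|\mathcal{C}_{W}^{A,B,C}|$, we obtain $|\mathcal{C}_{W}^{A,B,C}|\equiv|\mathcal{S}_{W}^{A,B,C}|\pmod 2$, as required.

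The bulk of this is routine bookkeeping; the one place where genuine care is needed --- and the step I expect to be the main obstacle --- is the final implication, where the arithmetic condition $C\le A-B$ is combined with Proposition \ref{am1} to promote the purely combinatorial statement ``every component carries a marker'' to ``$G[X]$ is acyclic'', which is precisely the property distinguishing a true solution from a mere candidate.
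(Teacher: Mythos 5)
Your proposal is correct and follows essentially the same route as the paper: you count the consistent cuts paired with each candidate $(X,P)$ as $2^{z(X,P)}$ where $z(X,P)$ is the number of marker-free components (the paper calls this $cc(P,G[X])$), observe that modulo $2$ only the $z(X,P)=0$ terms survive, and then use $C\le A-B$ together with Proposition~\ref{am1} to identify those survivors with $\mathcal{S}_{W}^{A,B,C}$.
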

\begin{proof}
Let $cc(P,G[X])$ denote the number of connected components of $G[X]$ that do not contain any marker from the set $P$. Observe that every connected component in $G[X]$ that does not contain any marker from the set $P$ has two options, it either lies entirely in $X_{l}$, or it lies entirely in $X_{r}$ (as $(X_{l},X_{r})$ is a consistent cut of $G[X]$). Since there are exactly $cc(P,G[X])$ such components, for any $A,B,C,W$, and $(X,P)\in \mathcal{R}_{W}^{A,B,C}$, there are exactly $2^{cc(P,G[X])}$ cuts $(X_l,X_r)$ such that $((X,P),(X_{l},X_{r}))\in \mathcal{C}_{W}^{A,B,C}$. Therefore, 
$$ |\mathcal{C}_{W}^{A,B,C}|_{2}=|\{(X,P)\in \mathcal{R}_{W}^{A,B,C}:cc(P,G[X])=0\}|.  $$

Next, note that if $cc(P,G[X])=0$, then every connected component in $G[X]$ has at least one marked vertex. This implies that the number of connected components in $G[X]$ is less than or equal to $C$. Therefore, for every $A,B,C,W$ satisfying $C\leq A-B$, by Proposition \ref{am1}, it is clear that $G[X]$ is a forest. Thus, we have $|\mathcal{C}_{W}^{A,B,C}|_{2}=|\mathcal{S}_{W}^{A,B,C}|$. \qed
\end{proof}

\begin{remark} Condition $C\leq A-B$ is necessary for Lemma \ref{equal} as otherwise (if $A-B<C$), it is not possible to bound the number of connected components in $G[X]$ by $A-B$. As a result, the elements of $\mathcal{S}_{W}^{A,B,C}$ could not be identified, and Proposition \ref{am1} could not be applied.
\end{remark}

By Isolation Lemma \cite{mulmuley}, we have the following lemma.
\begin{lemma} \label{iso1}
Let $G$ and $\mathcal{S}_{W}^{A,B,C}$ be as defined in Definition \ref{defcut}. For each $u\in U$, where $U$ is the universe, choose a weight $\mathsf{w}(u)\in \{1,2,\ldots, 3|U|\}$ uniformly and independently at random. For some $A,B,C,W$ satisfying $C\leq A-B$, if $ |\mathcal{S}_{W}^{A,B,C}|>0$, then 
$$ \mathsf{prob}\big( \mathsf{w} \ isolates \ \mathcal{S}_{W}^{A,B,C}\big)\geq  \frac{2}{3}.$$
\end{lemma}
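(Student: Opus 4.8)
The plan is to obtain Lemma \ref{iso1} as a direct instantiation of the Isolation Lemma (Lemma \ref{IL1}). The only thing that needs arranging is an encoding of the pairs in $\mathcal{S}_{W}^{A,B,C}$ as subsets of the universe $U=V(G)\times\{\F,\K\}$ that is injective and compatible with the weight function $\mathsf{w}$.

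First, I would introduce the map $\Phi$ sending a pair $(X,P)$ to the set $\Phi(X,P)=(X\times\{\F\})\cup(P\times\{\K\})\subseteq U$, and set $\mathcal{F}=\{\Phi(X,P):(X,P)\in\mathcal{S}_{W}^{A,B,C}\}\subseteq 2^{U}$. The map $\Phi$ is injective: from $\Phi(X,P)$ one recovers $X$ as $\{v:(v,\F)\in\Phi(X,P)\}$ and $P$ as $\{v:(v,\K)\in\Phi(X,P)\}$. Hence $|\mathcal{F}|=|\mathcal{S}_{W}^{A,B,C}|>0$ by hypothesis. Moreover, since $\mathsf{w}(S)=\sum_{e\in S}\mathsf{w}(e)$ for $S\subseteq U$, we get $\mathsf{w}(\Phi(X,P))=\mathsf{w}(X\times\{\F\})+\mathsf{w}(P\times\{\K\})$, which is exactly the weight attached to $(X,P)$ in Definition \ref{defcut}; so $\Phi$ is a weight-preserving bijection from $\mathcal{S}_{W}^{A,B,C}$ onto $\mathcal{F}$.

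Next, I would apply Lemma \ref{IL1} to the family $\mathcal{F}$ with $N=3|U|$, which is precisely the range from which the random weights $\mathsf{w}(u)$ are drawn. Since $|\mathcal{F}|>0$, the lemma yields $\mathsf{prob}(\mathsf{w}\text{ isolates }\mathcal{F})\geq 1-\tfrac{|U|}{N}=1-\tfrac{|U|}{3|U|}=\tfrac{2}{3}$. Because $\Phi$ is a weight-preserving bijection, a weight function isolates $\mathcal{F}$ if and only if it isolates $\mathcal{S}_{W}^{A,B,C}$ — the unique minimum-weight element of one family corresponds under $\Phi$ to the unique minimum-weight element of the other. Therefore $\mathsf{prob}(\mathsf{w}\text{ isolates }\mathcal{S}_{W}^{A,B,C})\geq\tfrac{2}{3}$, as claimed. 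The hypothesis $C\leq A-B$ plays no role in this argument; it is inherited from the surrounding setup (it is what makes $\mathcal{S}_{W}^{A,B,C}$ the relevant object, via Lemma \ref{equal}).

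There is essentially no obstacle here: the whole content is the observation that the pairs-as-sets encoding $\Phi$ is injective and weight-preserving, so the abstract Isolation Lemma transfers verbatim with $N=3|U|$ giving the stated bound $\tfrac{2}{3}$. The only point deserving a moment's care is that ``isolation'' refers to a unique minimum-weight member, so one must note (as above) that a weight-preserving bijection carries unique minimizers to unique minimizers; this is immediate.
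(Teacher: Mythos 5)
Your proof is correct and takes exactly the route the paper intends: the paper simply states the lemma as an immediate consequence of the Isolation Lemma without writing out the injective weight-preserving encoding $(X,P)\mapsto(X\times\{\F\})\cup(P\times\{\K\})$, which you make explicit. Nothing more is needed; the hypothesis $C\leq A-B$ is indeed irrelevant to this particular probability bound.
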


The following observation helps us in proving Theorem \ref{am}.
\begin{observation} \label{obam}
$G$ admits an acyclic matching of size $\frac{\ell}{2}$ if and only if there exist integers $B$ and $W$ such that the set $\mathcal{S}_{W}^{\ell,B,\ell-B}$ is nonempty.
\end{observation}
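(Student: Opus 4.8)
The plan is to establish the two directions of the equivalence straight from Definition~\ref{defcut}, the only substantive ingredient being the exact form of Proposition~\ref{am1}: a forest on $n'$ vertices with $m'$ edges has precisely $n'-m'$ connected components.

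For the forward direction, I would start from an acyclic matching $M$ of $G$ with $|M|=\ell/2$ and set $X=V_M$. Since the $\ell/2$ edges of $M$ are pairwise vertex-disjoint, $|X|=\ell$ and $M$ is a perfect matching of $G[X]$; let $B$ be the number of edges of $G[X]$. Because $M$ is acyclic, $G[X]$ is a forest, so it has exactly $\ell-B$ connected components. I would then build $P$ by picking one vertex from each connected component of $G[X]$, obtaining $P\subseteq X$ with $|P|=\ell-B$ and a marker in every component, and set $W=\mathsf{w}(X\times\{\F\})+\mathsf{w}(P\times\{\K\})$, which lies in $[0,2Nn]$ since $|X|,|P|\le n$. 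A termwise check against Definition~\ref{defcut} then shows $(X,P)\in\mathcal{S}_{W}^{\ell,B,\ell-B}$; here $A=\ell$ and $C=\ell-B$, so the side condition $C\le A-B$ holds with equality.

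For the converse, I would take any nonempty set $\mathcal{S}_{W}^{\ell,B,\ell-B}$, fix a pair $(X,P)$ in it, and read off the defining properties: $|X|=\ell$, $G[X]$ has a perfect matching $M$, and $G[X]$ is a forest. Since $M$ saturates every vertex of $X$, we get $V_M=X$, hence $G[V_M]=G[X]$ is a forest and $M$ is an acyclic matching of size $|X|/2=\ell/2$.

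I do not expect a real obstacle, since the statement is essentially definitional. The one place that calls for care is the bookkeeping in the forward direction: one must observe that $C=\ell-B$ is exactly the number of connected components of the forest $G[X]$, so that a marker set of cardinality exactly $C$ that hits every component exists; and, dually, that letting $B$ and $W$ range over all admissible values on the right-hand side captures precisely the freedom in the edge count of $G[X]$ and in the random weight assigned to the pair $(X,P)$.
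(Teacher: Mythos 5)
Your proof is correct, and it is actually cleaner than the paper's at both ends. In the forward direction, the paper's own argument appeals to Lemma~\ref{iso1} (the Isolation Lemma) to conclude that ``with non-zero probability'' some $W$ makes $\mathcal{S}_{W}^{\ell,B,\ell-B}$ nonempty, which is a slightly odd invocation: once $\mathsf{w}$ is fixed (as it is throughout the section), the existence of a witnessing $W$ is a deterministic fact, and the Isolation Lemma speaks to \emph{uniqueness of the minimizer}, not to existence. Your argument avoids this entirely by constructing the pair $(X,P)$ explicitly — $X=V_M$, one marker per component of the forest $G[X]$, so $|P|=\ell-B$ exactly — and then defining $W$ to be the actual weight $\mathsf{w}(X\times\{\F\})+\mathsf{w}(P\times\{\K\})$, which the bound $W\le 2Nn$ accommodates. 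This is the right way to read the definitions and is deterministic, as the statement should be. In the backward direction, the paper re-derives that $G[X]$ is a forest from the component bound and Proposition~\ref{am1}, which is harmlessly redundant: the definition of $\mathcal{S}_{W}^{A,B,C}$ already stipulates that $G[X]$ is a forest, and you simply read that off together with the perfect-matching requirement. So your proof is the same in spirit — unpack Definition~\ref{defcut} in both directions — but with a direct construction replacing the probabilistic detour; no gaps.
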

\begin{proof}
In one direction, let $G$ admit an acyclic matching, say $M$, of size $\frac{\ell}{2}$. Let $X$ be the set of $M$-saturated vertices. By the definition of an acyclic matching, it is clear that $G[X]$ is a forest. Further, $|X|=\ell$. Let $B$ be the number of edges in $G[X]$. By Lemma \ref{iso1}, with non-zero probability, we can say that there exists a $W$ such that $\mathcal{S}_{W}^{\ell,B,\ell-B}$ is nonempty.

In the other direction, let $\mathcal{S}_{W}^{\ell,B,\ell-B}$ be nonempty. This implies that there exists a set, say $X\subseteq V(G)$, such that $|X|=\ell$ and the number of marked vertices in $X$ is $\ell-B$. Further, by the definition of $\mathcal{S}_{W}^{\ell,B,\ell-B}$, the number of connected components in $G[X]$ is bounded by $\ell-B$. By Proposition \ref{am1}, it implies that $G[X]$ is a forest. Further, since $G[X]$ has a perfect matching, it is an acyclic matching of size $\frac{\ell}{2}$ as required. \qed
\end{proof}

Now we describe a procedure that, given a nice tree decomposition $\mathcal{T}$ with the deferred edge property, a weight function $\mathsf{w}:U\rightarrow\{1,2,\ldots,N\}$, and integers $A,B,C,W$ as defined in Definition \ref{defcut} and satisfying $C\leq A-B$, computes $\big{|}\mathcal{C}_{W}^{A,B,C}\big{|}_{2}$  using dynamic programming. For this purpose, consider the following definition:

\begin{definition} \label{defR}

For every bag $\mathcal{B}_x$ of the tree
decomposition $\mathcal{T}$, for every integer $0\leq a \leq n, 0\leq b <n, 0\leq c \leq n, 0\leq w \leq 12n^{2}$, for every coloring $d: \mathcal{B}_{x} \rightarrow \{0,1,2\}$, for every coloring $s: \mathcal{B}_{x} \rightarrow \{0,l,r\}$, we define the following: 
\begin{enumerate}

\item $ \mathcal{R}_{x}[a,b,c,d,w]=\{(X,P): d \ is \ a \ valid \ coloring \ of \ \mathcal{B}_{x} \ \wedge \ X \ is \ the \ set \ of \ \\ vertices \ colored  \ 1 \ or \ 2 \ under \ some \ valid \ extension \ \widehat{d} \ of \ d \ in \ G_{x} \ \wedge \ |X|=a \ \wedge |E_{x}\cap E(G[X])|=b \ \wedge P\subseteq X\setminus \mathcal{B}_{x} \ \wedge |P|=c \ \wedge \mathsf{w}(X \times \{\F\})+\mathsf{w}(P \times \{\K\})= $ $w \}$.

\item $ \mathcal{C}_{x}[a,b,c,d,w]=\{((X,P),(X_{l},X_{r})): (X,P)\in \mathcal{R}_{x}[a,b,c,d,w] \ \wedge P\subseteq X_{l} \ \wedge (X,(X_{l},X_{r}))$ is a consistently cut subgraph of $G_{x}\}$.

\item $ \widetilde{\mathcal{A}}_{x}[a,b,c,d,w,s]=|\{((X,P),(X_{l},X_{r}))\in \mathcal{C}_{x}[a,b,c,d,w]: (s(v)=l\Rightarrow v\in X_{l}) \ \wedge (s(v)=r\Rightarrow v\in X_{r}) \ \wedge (s(v)=0\Rightarrow \ v\notin X)\}|$.

\end{enumerate}
\end{definition}

\begin{remark}\label{rmrk1}
  In Definition \ref{defR}, we assume $b<n$ because otherwise, an induced subgraph containing $b$ edges is definitely not a forest. 
\end{remark} 

The intuition behind Definition \ref{defR} is that the set $\mathcal{R}_{x}[a,b,c,d,w]$ contains all pairs $(X,P)$ that could potentially be extended to a candidate solution from $\mathcal{R}$ (with cardinality and weight restrictions as prescribed by $a,b,c,$ and $w$), and the set $\mathcal{C}_{x}[a,b,c,d,w]$ contains all consistently cut subgraphs of $G_{x}$ that could potentially be extended to elements of $\mathcal{C}$ (with cardinality and weight restrictions as prescribed by $a,b,c,$ and $w$). The number $\widetilde{\mathcal{A}}_{x}[a,b,c,d,w,s]$ counts precisely those elements of $\mathcal{C}_{x}[a,b,c,d,w]$ for which $s(v)$ describes whether for every $v\in \mathcal{B}_{x}$, $v$ lies in $X_{l}, X_{r},$ or outside $X$ depending on whether $s(v)$ is $l, r$, or $0$, respectively. 

We have a table $\mathcal{A}$ with an entry $\mathcal{A}_{x}[a,b,c,d,w,s]$ for each bag $\mathcal{B}_{x}$ of $\mathbb{T}$, for integers $0\leq a \leq n, 0\leq b <n, 0\leq c \leq n, 0\leq w \leq 12n^{2}$, for every coloring $d: \mathcal{B}_{x} \rightarrow \{0,1,2\}$, and for every coloring $s: \mathcal{B}_{x} \rightarrow \{0,l,r\}$. We say that $s$ and $d$ are compatible if for every $v\in \mathcal{B}_{x}$, the following hold:  $d(v)=0$ if and only if $s(v)=0$. Note that we have at most $\mathcal{O}(\tw\cdot n)$ many choices for $x$, at most $n$ choices for $a$, $b$, and $c$, at most $12n^{2}$ choices for $w$, and at most $5^{\tw}$  many compatible choices for $d$ and $s$. Whenever $s$ is not compatible with $d$, we do not store the entry $\mathcal{A}_{x}[a,b,c,d,w,s]$ and assume that the access to such an entry returns $0$. Therefore, the size of table $\mathcal{A}$ is bounded by $\mathcal{O}(5^{\tw}\cdot\tw \cdot n)$. We will show how to compute the table $\mathcal{A}$ so that the following will be satisfied.

\begin{lemma}\label{amimp}
If $d$ is valid and $d$ is compatible with $s$, then $\mathcal{A}_{x}[a,b,c,d,w,s]$ stores the value $\widetilde{\mathcal{A}}_{x}[a,b,c,d,w,s]$. Otherwise, the entry $\mathcal{A}_{x}[a,b,c,d,w,s]$ stores the value $0$.
\end{lemma}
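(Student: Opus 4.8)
The plan is to prove Lemma~\ref{amimp} by induction on the structure of the nice tree decomposition $\mathcal{T}$ (with the deferred edge property), following the standard Cut~\&~Count dynamic programming template over the node types leaf, introduce vertex, introduce edge, forget, and join. For the base case, at a leaf node $x$ we have $\mathcal{B}_x=\emptyset$, so the only coloring is the empty one; $d=s=\emptyset$ are compatible and $d$ is trivially valid, $G_x$ is the empty graph, and the only relevant entry is $\mathcal{A}_x[0,0,0,\emptyset,w_0,\emptyset]=1$ where $w_0$ is the weight of the empty set, all other entries being $0$. One first records how the ``stored $0$'' convention handles the incompatible case: whenever $s$ and $d$ disagree on some vertex (one is $0$ and the other is not), the entry is not stored and access returns $0$, which matches $\widetilde{\mathcal{A}}_x=0$ since no $((X,P),(X_l,X_r))$ can have $s$ describing membership in $X,X_l,X_r$ while $d$ records a different saturation status.

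For the inductive step I would treat each node type in turn, in each case expressing the new entry in terms of the child entries and verifying both that (a) when $d$ is valid and compatible with $s$ the recurrence counts exactly the elements of $\widetilde{\mathcal{A}}_x[a,b,c,d,w,s]$, and (b) when $d$ is invalid or incompatible the recurrence yields $0$. For an introduce vertex node introducing $v$ into $\mathcal{B}_x$ with child $y$: since no edge incident to $v$ exists yet in $G_x$, if $d(v)=0$ (so $s(v)=0$) we copy $\mathcal{A}_y$; if $d(v)=1$ then $v$ is currently an isolated vertex colored $1$, which cannot have a perfect matching on the $1$-colored induced subgraph unless eventually fixed --- but under the valid-coloring definition (Definition~\ref{valid}) a vertex colored $1$ whose mate-edge is absent is not yet allowed, so actually we must forbid $d(v)=1$ here and set the entry to $0$; if $d(v)=2$ we again essentially copy the child entry but account for $v$ joining $X$ (adjusting $a$ by $1$, adjusting $w$ by $\mathsf{w}((v,\F))$, and splitting according to whether $s(v)=l$ or $r$ with $v$ never a marker since $P\subseteq X\setminus\mathcal{B}_x$). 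For an introduce edge node introducing $uv$ with child $y$ having $\mathcal{B}_y=\mathcal{B}_x$: the edge is counted in $E_x\cap E(G[X])$ only if both $u,v\in X$, i.e.\ both colored $1$ or $2$; the recurrence must also enforce the consistent-cut condition (it is forbidden that $s(u)=l,s(v)=r$ or vice versa since then $uv$ would cross the cut) and, crucially, must enforce that $uv$ does not connect two $1$-colored vertices in a way that violates the induced-matching-type structure --- wait, for acyclic matching the condition is just that $G[X]$ is a forest, handled globally via markers, so here we only need the consistent-cut and edge-counting bookkeeping, incrementing $b$ by $1$ when both endpoints lie in $X$ and their cut sides agree. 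For a forget node forgetting $u$: we sum the child entries over the three possible colors $d_y(u)\in\{0,1,2\}$ with compatible $s_y(u)$, but with the key restriction that a vertex colored $2$ cannot be forgotten (by Definition~\ref{valid}(iii), $2$-colored vertices must stay in the bag), so we only sum over $d_y(u)\in\{0,1\}$; and when $u$ passes from the bag into $X\setminus\mathcal{B}_x$ (the case $d_y(u)=1$), $u$ becomes eligible to be a marker, so we branch on whether $u\in P$, adjusting $c$ by $0$ or $1$ and $w$ by $0$ or $\mathsf{w}((u,\mathbf{P}))$, and since $u$ must then lie in $X_l$ if it is a marker we require $s_y(u)=l$ in that sub-branch. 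For the join node with children $y_1,y_2$ sharing the bag: we combine entries with colorings correct for $d$ (Definition~\ref{prelimdef}) --- here I expect to invoke the fast subset convolution as in the Induced Matching section to get the claimed running time --- taking convolutions over the $a,b,c,w$ indices (which add, with a correction subtracting the contribution of the shared bag vertices so they are not double-counted, exactly as in the join recurrence for induced matching), and the cut-side labels $s$ must agree on $\mathcal{B}_x$ so no extra combination is needed there; the consistency of the glued cut follows from Observation~\ref{joinobs} since all shared vertices are in $\mathcal{B}_x$.

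The main obstacle I anticipate is the bookkeeping in the forget node and join node regarding the markers in $P$ together with their weight contributions: one must be careful that $P\subseteq X\setminus\mathcal{B}_x$ is maintained as an invariant (markers are only ``committed'' once a vertex leaves the bag, never while it is still in the bag), that each marker's weight $\mathsf{w}((u,\mathbf{P}))$ is added exactly once and at the right moment, and that the requirement $P\subseteq X_l$ from Definition~\ref{defcut}/\ref{defR} is threaded through correctly (a newly committed marker must be placed on the left side of the cut). A secondary subtlety is ensuring that the ``invalid $d$ $\Rightarrow$ entry $0$'' claim is genuinely maintained: one checks inductively that whenever $d$ fails to be valid on $\mathcal{B}_x$ --- e.g.\ because some forced $1$-colored vertex lacks a completed matched edge, or a $2$-colored vertex was forgotten --- every term in the defining recurrence refers to a child entry that is already $0$ by induction (or to a disallowed color transition), so the sum is $0$.

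Finally, once Lemma~\ref{amimp} is established, one reads off $\big{|}\mathcal{C}_W^{A,B,C}\big{|}_2$ at the root $r$ (where $\mathcal{B}_r=\emptyset$) as $\big|\mathcal{A}_r[A,B,C,\emptyset,W,\emptyset]\big|_2$, since $\mathcal{C}_r[A,B,C,\emptyset,W]$ with the empty coloring coincides with $\mathcal{C}_W^{A,B,C}$; combined with Lemma~\ref{equal}, Lemma~\ref{iso1}, and Observation~\ref{obam}, and running over all $O(n)$ choices of $B$ and $O(n^2)$ choices of $W$ while handling join nodes via fast subset convolution (Proposition~\ref{B2}), this yields the claimed $6^{\tw}\cdot n^{\mathcal{O}(1)}$-time randomized algorithm of Theorem~\ref{am}. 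The factor $6^{\tw}$ arises because the join node convolution works over the $3$-coloring $d$ with the compatible cut-labels $s$ contributing the extra factor (effectively $2$ per $1$-or-$2$-colored vertex), mirroring the $3^{\tw}$-to-$6^{\tw}$ passage familiar from Cut~\&~Count on tree decompositions.
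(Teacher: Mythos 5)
Your overall plan matches the paper exactly: the paper proves Lemma~\ref{amimp} implicitly by exhibiting the recurrence at each node type of the nice tree decomposition and arguing its correctness case by case, which is precisely the structural induction you describe. Your leaf, introduce vertex, forget, and join node descriptions track the paper's recurrences, including the key bookkeeping (the $P\subseteq X\setminus\mathcal{B}_x$ invariant, marker weight charged only at the forget node, markers forced onto the left cut side, the accumulator corrections $a+|s^{-1}(\{l,r\})|$ and $w+w(s^{-1}(\{l,r\})\times\{\F\})$ at the join).

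There is, however, a genuine gap in your treatment of the introduce edge node. You reduce it to ``consistent-cut checking and edge-counting bookkeeping, incrementing $b$ by $1$ when both endpoints lie in $X$ and their cut sides agree,'' but this misses the color transition $2\to 1$ that occurs precisely here. Recall the semantics of Definition~\ref{valid}: color $1$ means \emph{saturated and the mate-edge already introduced in $G_x$}, while color $2$ means \emph{saturated but the mate-edge not yet introduced}. When $d(u)=d(v)=1$ and $s(u)=s(v)$, the child entry to look up is not simply $\mathcal{A}_y[a,b-1,c,d,w,s]$: one must also account for the possibility that $uv$ is itself the matching edge of $u$ and $v$, in which case both were colored $2$ in the child and become $1$ now, i.e., the term $\mathcal{A}_y[a,b-1,c,d_{\{u,v\}\rightarrow 2},w,s]$. (These two possibilities partition the witnessing sets, so they combine additively.) Omitting this term makes the entry undercount the candidate pairs $(X,P)$ in which $G[X]$ acquires its perfect matching via the edge $uv$, and it is exactly at this point that condition~(ii) of Definition~\ref{valid} (``the subgraph induced by $1$-colored vertices has a perfect matching'') gets threaded into the recurrence. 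Without the $d_{\{u,v\}\rightarrow 2}$ branch, the only way for a vertex to ever become color $1$ would be via an edge to an already color-$1$ neighbor, which never bootstraps; so the algorithm would never populate any nontrivial entry, and Lemma~\ref{amimp} would fail already at the first introduce edge node above a pair of introduced vertices.
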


By Lemma \ref{equal}, we are interested in values $\big{|}\mathcal{C}_{W}^{A,B,C}\big{|}_{2}$. By Observation \ref{obam}, Definition \ref{defR}, and Lemma \ref{amimp}, it suffices to compute values $ \mathcal{A}_{r}[\ell,B,\ell-B,\emptyset,W,\emptyset]$ for all $B$ and $W$, where $r$ is the root of the decomposition $\mathcal{T}$ (note that $ \mathcal{A}_{r}[\ell,B,\ell-B,\emptyset,W,\emptyset]=\big{|}\mathcal{C}_{W}^{\ell,B,\ell-B}\big{|}$, and we will calculate the modulo 2 separately). Further, to achieve the time complexity we aim to achieve, we have the following remark. 

\begin{remark}\label{rmrk2}
  We decide whether to mark a vertex or not in its forget bag. 
\end{remark}

Our algorithm computes $\mathcal{A}_{x}[a,b,c,d,w,s]$ for all bags $\mathcal{B}_x\in \mathcal{T}$ in a bottom-up manner for all integers $0\leq a \leq n, 0\leq b <n, 0\leq c \leq n, 0\leq w \leq 12n^{2}$, and for all compatible colorings $d: \mathcal{B}_{x} \rightarrow \{0,1,2\}$ and $s: \mathcal{B}_{x} \rightarrow \{0,l,r\}$. We now give the recurrences for $\mathcal{A}_{x}[a,b,c,d,w,s]$ that are used by our dynamic programming algorithm. \bigskip

\noindent \textbf{Leaf node:}  For a leaf node $x$, we have that $\mathcal{B}_{x}=\emptyset$. Hence there is only one possible coloring on $\mathcal{B}_{x}$, that is, the empty coloring. Since $d$ and $s$ are empty, the only compatible values of $a,b,c$, and $w$ are $0$, so we have  $$\mathcal{A}_{x}[0,0,0,\emptyset,0,\emptyset]=1.$$

All other values of $\mathcal{A}_{x}[a,b,c,d,w,s]$ are zero.
\bigskip

\noindent \textbf{Introduce vertex node:} Suppose that $x$ is an introduce vertex node with child node $y$ such that $\mathcal{B}_{x}=\mathcal{B}_{y}\cup \{v\}$ for some $v\notin \mathcal{B}_{y}$. 
 Note that we have not introduced any edges incident on $v$ so far, so $v$ is isolated in $G_x$. For every coloring $d: \mathcal{B}_{x}\rightarrow \{0,1,2\}$ and every coloring $s: \mathcal{B}_{x} \rightarrow \{0,l,r\}$ such that $d$ and $s$ are compatible, and for all integers $0\leq a \leq n, 0\leq b <n, 0\leq c \leq n, 0\leq w \leq 12n^{2}$, we have the following recursive formula:
 
 \begin{equation*}
  \mathcal{A}_{x}[a,b,c,d,w,s] = \begin{cases}
        \mathcal{A}_{y}[a,b,c,d|_{\mathcal{B}_{y}},w,s|_{\mathcal{B}_{y}}] \ \  \ \ \ \ \ \ \ \ \ \ \ \ \ \ \ \ \ \ \ \ \ \ \text{\ if \ $d(v)=0$,}
        \\
        0 \ \ \  \ \ \ \ \ \  \ \ \ \ \ \ \  \ \ \ \ \ \ \  \ \ \ \ \ \ \  \ \ \ \ \ \ \  \ \ \ \ \ \  \ \ \ \ \ \ \ \ \  \text{\ if \ $d(v)=1$,}
        \\
        
         \mathcal{A}_{y}[a-1,b,c,d|_{\mathcal{B}_{y}},w-\mathsf{w}((v,F)),s|_{\mathcal{B}_{y}}] \ \ \ \text{\ if \ $d(v)=2$}.
            
            \end{cases}
 \end{equation*}
 
 When $d(v)=0$ or $d(v)=2$, then $d$ is valid if and only if $d|_{\mathcal{B}_{y}}$ is valid, and $d$ and $s$ are compatible if and only if $d|_{\mathcal{B}_{y}}$ and $s|_{\mathcal{B}_{y}}$ are compatible. When $d(v)=0$, then $a,b$, and $w$ remain unaffected by the introduction of $v$. Further, by Remark \ref{rmrk2}, $c$ also remains unaffected. When $d(v)=2$, $b$ and $c$ remain unaffected by the introduction of $v$, but $a$ is reduced by $1$ and $w$ is reduced by $\mathsf{w}((u,F))$ in $\mathcal{B}_{y}$. When $d(v)=1$, then $d$ cannot be a valid coloring as $v$ does not have any neighbor in $G_{x}$ (by the definition of a valid coloring, $v$ needs a neighbor of color $1$ in $G_{x}$), and hence $\mathcal{A}_{x}[a,b,c,d,w,s]=0$.

 Clearly, the evaluation of all introduce vertex nodes can be done in $5^{\tw} \cdot n^{\mathcal{O}(1)}$ time.
\bigskip

\noindent \textbf{Introduce edge node:} 
Suppose that $x$ is an introduce edge node that introduces
an edge $uv$, and let $y$ be the child of $x$.  For every coloring $d: \mathcal{B}_{x}\rightarrow \{0,1,2\}$ and every coloring $s: \mathcal{B}_{x} \rightarrow \{0,l,r\}$ such that $d$ and $s$ are compatible, and for all integers $0\leq a \leq n, 0\leq b <n, 0\leq c \leq n, 0\leq w \leq 12n^{2}$, we consider the following cases:
\bigskip

 If at least one of $d(u)$ or  $d(v)$ is $0$, then
$$\mathcal{A}_{x}[a,b,c,d,w,s]=\mathcal{A}_{y}[a,b,c,d,w,s].$$

Else, if at least one of $d(u)$ or  $d(v)$ is $2$ and $s(u)=s(v)$, then
$$\mathcal{A}_{x}[a,b,c,d,w,s]=\mathcal{A}_{y}[a,b-1,c,d,w,s].$$

Else, if both $d(u)$ and $d(v)$ are $1$ and $s(u)=s(v)$, then
$$\mathcal{A}_{x}[a,b,c,d,w,s]=\displaystyle \max\{\mathcal{A}_{y}[a,b-1,c,d,w,s],\mathcal{A}_{y}[a,b-1,c,d_{\{u,v\}\rightarrow 2},w,s]\}.$$

All other values are zero.

Note that whenever an edge is introduced between $u$ and $v$ such that $d(u)$ and $d(v)$ are non-zero, then $s(u)=s(v)$.
 Clearly, the evaluation of all introduce edge nodes can be done in $5^{\tw}\cdot  n^{\mathcal{O}(1)}$ time.
\bigskip

\noindent  \textbf{Forget node:} Suppose that $x$ is a forget vertex node with
a child $y$ such that $\mathcal{B}_{x}=\mathcal{B}_{y}\setminus \{u\}$ for some $u\in \mathcal{B}_{y}$. For every coloring $d: \mathcal{B}_{x}\rightarrow \{0,1,2\}$ and every coloring $s: \mathcal{B}_{x} \rightarrow \{0,l,r\}$ such that $d$ and $s$ are compatible, and for all integers $0\leq a,c \leq n, 0\leq b <n, 0\leq w \leq 12n^{2}$, we have 

\begin{equation}\label{eqforget}
\begin{split}
\mathcal{A}_{x}[a,b,c,d,w,s] &  =  \mathcal{A}_{y}[a,b,c-1,d_{u \rightarrow 1},w-\mathsf{w}((u,P)),s[u\rightarrow l]]  
\\ &  + \mathcal{A}_{y}[a,b,c,d_{u \rightarrow 0},w,s[u\rightarrow 0]]   +
 \displaystyle\sum_{\alpha\in\{l,r\}}\mathcal{A}_{y}[a,b,c,d_{u \rightarrow 1},w,s[u\rightarrow \alpha]].\end{split}
\end{equation}

By Remark \ref{rmrk2}, we decide whether to mark a vertex or not in its forget bag. The first term on the right-hand side in (\ref{eqforget}) corresponds to the case when $d(u)=1$ in $\mathcal{B}_{y}$, and we decide to mark $u$. Since we mark the vertices only if it is colored $l$ under $s$ on $\mathcal{B}_{y}$, so $s(u)$ should be $l$ in $\mathcal{B}_{y}$. Further, $w$ is reduced by $\mathsf{w}((u,P))$ in $\mathcal{B}_{y}$. The second term on the right-hand side in (\ref{eqforget}) corresponds to the case when $d(u)=0$ in $\mathcal{B}_{y}$. The third term on the right-hand side in (\ref{eqforget}) corresponds to the case when $d(u)=1$ in $\mathcal{B}_{y}$, and we decide not to mark $u$. Note that in this case, it does not matter whether $s(u)=l$ or $s(u)=r$, and thus we take the summation over both the possibilities.

 Clearly, the evaluation of all forget nodes can be done in $5^{\tw}\cdot n^{\mathcal{O}(1)}$ time.
\bigskip

\noindent \textbf{Join node:}
Let $x$ be a join node with children $y_{1}$ and $y_{2}$. For every coloring $d: \mathcal{B}_{x}\rightarrow \{0,1,2\}$ and every coloring $s: \mathcal{B}_{x} \rightarrow \{0,l,r\}$ such that $d$ and $s$ are compatible, and for all integers $0\leq a,c \leq n, 0\leq b <n, 0\leq w \leq 12n^{2}$, we have

 \begin{equation}\label{joinam}
 \begin{split}
\mathcal{A}_{x}[a,b,c,d,w,s] & = \displaystyle\sum_{d_{1},d_{2}} \displaystyle\sum_{\substack{a_{1}+a_{2}=a\\ +|s^{-1}(\{l,r\})|}} \ \displaystyle\sum_{b_{1}+b_{2}=b} \ \displaystyle\sum_{c_{1}+c_{2}=c} \  \displaystyle\sum_{\substack{w_{1}+w_{2}=w\\+w(s^{-1}(\{l,r\})\times\{F\})}} 
  \\ & \ \ \mathcal{A}_{y_{1}}[a_{1},b_{1},c_{1},d_{1},w_{1},s]\cdot\mathcal{A}_{y_{2}}[a_{2},b_{2},c_{2},d_{2},w_{2},s],
  \end{split}
 \end{equation}

where $d_{1}: \mathcal{B}_{y_{1}}\rightarrow \{0,1,2\}$, $d_{2}: \mathcal{B}_{y_{2}}\rightarrow \{0,1,2\}$, $0\leq a_{1},a_{2},c_{1},c_{2} \leq n,$ $0\leq b_{1},b_{2} <n, 0\leq w_{1},w_{2} \leq 12n^{2}$ such that $s$ is compatible with $d_{1}$ and $d_{2}$, and $d_{1},d_{2}$ are correct for $d$.

The only valid combinations to achieve the coloring $s$ in (\ref{joinam}) is to have the same coloring in both children $y_{1}$ and $y_{2}$ as each vertex gets a unique color under $s$ in $G_{x}$. Since
vertices colored $l$ and $r$ in $\mathcal{B}_x$ are accounted for in both tables of the children, we add their contribution to the accumulators $a$ and $w$. Also, as no edges have been introduced yet among the vertices in $\mathcal{B}_{x}$, $b$ is equal to the sum of the values of $b_{1}$ and $b_{2}$. Also, by Remark \ref{rmrk2}, we can say that, $c=c_{1}+c_{2}$.

By the naive method, the evaluation for all join nodes altogether can be done in $7^{\tw} \cdot n^{\mathcal{O}(1)}$ time as follows. Note that if a pair $(d_{1},d_{2})$ is correct for $d$, and $s$ is compatible with $d$, then for every $v\in \mathcal{B}_{x}$, $(d(v),d_{1}(v),d_{2}(v),s(v))$ can have one of the following value
$$\{(0,0,0,0),(1,1,2,l),(1,1,2,r),(1,2,1,l),(1,2,1,r),(2,2,2,l),(2,2,2,r)\}.$$

It follows that there are exactly $7^{|\mathcal{B}_{x}|}$ tuples of colorings $(d,d_{1},d_{2},s)$ such that
$(d_{1},d_{2})$ are correct for $d$, and $s$ is compatible with $d$,$d_{1}$, and $d_{2}$, since for every vertex $v$ we have seven possibilities
for $(d(v),d_{1}(v),d_{2}(v),s(v))$. We iterate through all these tuples, and for each
tuple $(d(v),d_{1}(v),d_{2}(v),s(v))$, we include the contribution corresponding to $d_{1},d_{2}$ to the value of $\mathcal{A}_{x}(a,b,c,d,w,s)$
according to (\ref{joinam}). As $|\mathcal{B}_{x}|\leq \tw +1$, it follows that every join node takes $7^{\tw} \cdot n^{\mathcal{O}(1)}$ time. However, the fast subset convolution can be used to handle the join bags more efficiently. In our case, set $B$ (given in Definition \ref{defsubsetconvo}) is a bag of $\mathcal {T}$. To apply the fast subset convolution in (\ref{joinam}), first, let us define some notations.

For every $a\in [n]\cup \{0\}$ and $s: \mathcal{B}_{x} \rightarrow \{0,l,r\}$, let $(a_{y_{1}}^{1},a_{y_{2}}^{1}), (a_{y_{1}}^{2},a_{y_{2}}^{2}),\ldots,$ $(a_{y_{1}}^{k_{1}},a_{y_{2}}^{k_{1}})$ denote all possible pairs of integers such that $a_{y_{1}}^{i}+a_{y_{2}}^{i}=a
+|s^{-1}(\{l,r\})|$ for each $i\in[k_{1}]$. Note that for an $a$, we have $k_{1}=a+1$.

For every $b\in [n-1]\cup \{0\}$, let $(b_{y_{1}}^{1},b_{y_{2}}^{1}), (b_{y_{1}}^{2},b_{y_{2}}^{2}),\ldots, (b_{y_{1}}^{k_{2}},b_{y_{2}}^{k_{2}})$ denote all possible pairs of integers such that $b_{y_{1}}^{i}+b_{y_{2}}^{i}=b$ for each $i\in[k_{2}]$. Note that for a $b$, we have $k_{2}=b+1$. 

For every $c\in [n]\cup \{0\}$, let $(c_{y_{1}}^{1},c_{y_{2}}^{1}), (c_{y_{1}}^{2},c_{y_{2}}^{2}),\ldots, (c_{y_{1}}^{k_{3}},c_{y_{2}}^{k_{3}})$ denote all possible pairs of integers such that $c_{y_{1}}^{i}+c_{y_{2}}^{i}=c$ for each $i\in[k_{3}]$. Note that for a $c$, we have $k_{3}=c+1$. 

For every $w\in [12n^{2}]\cup \{0\}$ and $s: \mathcal{B}_{x} \rightarrow \{0,l,r\}$, let $(w_{y_{1}}^{1},w_{y_{2}}^{1}), \ldots,$ $(w_{y_{1}}^{k_{4}},w_{y_{2}}^{k_{4}})$ denote all possible pairs of integers such that $w_{y_{1}}^{i}+w_{y_{2}}^{i}=w+w(s^{-1}(\{l,r\})\times$ $\{F\})$ for each $i\in[k_{4}]$. Note that for a $w$, we have $k_{4}=w+1$.

\begin{remark}
Throughout this section, for fixed $a,c\in [n]\cup \{0\}$, $b\in [n-1]\cup \{0\}$, and $w\in [12n^{2}]\cup \{0\}$, $k_{1}=a+1, k_{2}=b+1, k_{3}=c+1,$ and $k_{4}=w+1$.
\end{remark}

For a join node $x$ with children $y_{1}$ and $y_{2}$, for fixed $a,c\in [n]\cup \{0\}$, $b\in [n-1]\cup \{0\}$, and $w\in [12n^{2}]\cup \{0\}$, for each $j\in [k_{1}]$, $l\in [k_{2}]$, $p\in [k_{3}]$, and $q\in [k_{4}]$, every coloring $d: \mathcal{B}_{x}\rightarrow \{0,1,2\}$, $d_{1}: \mathcal{B}_{y_{1}}\rightarrow \{0,1,2\}$, $d_{2}: \mathcal{B}_{y_{2}}\rightarrow \{0,1,2\}$, and $s: \mathcal{B}_{x} \rightarrow \{0,l,r\}$ such that $s$ is compatible with $d,d_{1},$ and $d_{2}$, let
\begin{equation} \label{joinam1}
\begin{split}
\mathcal{A}_{x}^{j,l,p,q}[a,b,c,d,w,s] =  \displaystyle\sum_{d_{1},d_{2}}
  \mathcal{A}_{y_{1}}[a_{y_{1}}^{j},b_{y_{1}}^{l},c_{y_{1}}^{p},d_{1},w_{y_{1}}^{q},s]\cdot\mathcal{A}_{y_{2}}[a_{y_{2}}^{j},b_{y_{2}}^{l},c_{y_{2}}^{p},d_{2},w_{y_{2}}^{q},s].
  \end{split}
  \end{equation}

By (\ref{joinam}) and (\ref{joinam1}), we have
\begin{equation} \label{joinam2}
\mathcal{A}_{x}[a,b,c,d,w,s]= \displaystyle\sum_{d_{1},d_{2}}\displaystyle\sum_{j=1}^{a+1} \ \displaystyle\sum_{l=1}^{b+1}\ \displaystyle\sum_{p=1}^{c+1}\ 
 \displaystyle\sum_{q=1}^{w+1}  \mathcal{A}_{x}^{j,l,p,q}[a,b,c,d,w,s].
\end{equation}

Let us compute (\ref{joinam1}) using the fast subset convolution. Note that if $d,d_{1},d_{2}:\mathcal{B}_{x} \rightarrow \{0,1,2\}$, then $d_{1}$ and $d_{2}$ are correct for a coloring $d$ if and only if the following conditions hold:

\begin{enumerate}
    \item [D.1)] $d^{-1}(0)=d_{1}^{-1}(0)=d_{2}^{-1}(0)$,
    \item [D.2)] $d^{-1}(1)=d_{1}^{-1}(1)\cup d_{2}^{-1}(1)$,
    \item [D.3)]  $d_{1}^{-1}(1)\cap d_{2}^{-1}(1)=\emptyset$.
\end{enumerate}

The condition $d^{-1}(2)=d_{1}^{-1}(2)\cap d_{2}^{-1}(2)$ is already implied by conditions D.1)-D.3).  Next, note that if we fix $d^{-1}(0)$, then what we want to compute resembles the subset convolution. So, let us fix a set $R \subseteq \mathcal{B}_{x}$. Further, let $\mathcal{F}_{R}$ denote the set of all functions $d:\mathcal{B}_{x}\rightarrow \{0,1,2\}$ such
that $d^{-1}(0)=R$. Next, we compute the values of $\mathcal{A}_{x}(a,b,c,d,w,s)$ for all $d\in \mathcal{F}_{R}$.
Note that every function $d\in \mathcal{F}_{R}$ can be represented by a set $S \in \mathcal{B}_{x}\setminus R$,
namely, the preimage of 1. Hence, we can define the coloring represented by
$S$ as
\begin{equation*}
  g_{S}(x) = \begin{cases}
        0 \ \  \ \ \ \ \ \ \text{\ if \ $x\in R$,}
        \\
        1 \ \ \  \ \ \ \ \    \text{\ if \ $x\in S$,}
            \\
            2 \ \ \ \ \ \ \  \ \text{\ if \ $x\in \mathcal{B}_{x}\setminus (R\cup S)$}.
        \end{cases}
 \end{equation*}
 
Now, for fixed $j\in [k_{1}]$, $l\in [k_{2}]$, $p\in [k_{3}]$, and $q\in [k_{4}]$ and for every $d\in \mathcal{F}_{R}$, (\ref{joinam1}) can be written as

 \begin{equation*}
 \mathcal{A}^{j,l,p,q}_{x}[a,b,c,d,w,s]= \displaystyle\sum_{\substack{A\cup B=S \\ A\cap B =\emptyset}} 
 \mathcal{A}_{y_{1}}[a_{y_{1}}^{j},b_{y_{1}}^{l},c_{y_{1}}^{p},g_{A},w_{y_{1}}^{q},s]\cdot\mathcal{A}_{y_{2}}[a_{y_{2}}^{j},b_{y_{2}}^{l},c_{y_{2}}^{p},g_{B},w_{y_{2}}^{q},s].
\end{equation*}

\begin{observation}Let $v\in \{y_1,y_2\}$, $s: \mathcal{B}_{x} \rightarrow \{0,l,r\}$ be a fixed coloring, $j\in [k_{1}]$, $l\in [k_{2}]$, $p\in [k_{3}]$, and $q\in [k_{4}]$ be fixed integers, and $\mathcal{A}_{{i},g_{S}}^{j,l,p,q,s}:2^{\mathcal{B}_{x}\setminus R}\rightarrow \{1,2,\ldots,k\}$ be such that for every
$S\subseteq \mathcal{B}_{x}\setminus R$, $\mathcal{A}_{{v}}^{j,l,p,q,s}(S)=\mathcal{A}_{v}(a_{v}^{j},b_{v}^{l},c_{v}^{p},g_{S},w_{v}^{q},s)$. Then, for every $S\subseteq \mathcal{B}_{x} \setminus R$,
\begin{equation*}
\mathcal{A}_{x}^{j,l,p,q}[a,b,c,g_{S},w,s]=   (\mathcal{A}_{y_1}^{j,l,p,q,s}* \mathcal{A}_{y_2}^{j,l,p,q,s}) (S),
\end{equation*}

where the subset convolution has its usual meaning, i.e., the sum of products.
\end{observation}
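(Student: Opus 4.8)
The plan is to prove the observation by unwinding definitions: I will set up a bijection between the pairs of colorings summed over in the definition of $\mathcal{A}_x^{j,l,p,q}$ and the ordered partitions of $S$ into two disjoint parts, and then read off the resulting expression as the sum--product subset convolution of Definition~\ref{defsubsetconvo}. Note that the displayed formula immediately preceding the observation already rewrites $\mathcal{A}_x^{j,l,p,q}[a,b,c,d,w,s]$ (for $d\in\mathcal{F}_R$) as a sum over disjoint pairs with $A\cup B=S$; the content of the observation is just to identify this with a subset convolution, so essentially the only thing that needs verifying is the validity of that reindexing.

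First I would record that, since $x$ is a join node, $\mathcal{B}_x=\mathcal{B}_{y_1}=\mathcal{B}_{y_2}$; in particular, for every $T\subseteq\mathcal{B}_x\setminus R$ the coloring $g_T$ is simultaneously a coloring of all three bags, so $\mathcal{A}_v^{j,l,p,q,s}(T)=\mathcal{A}_v(a_v^j,b_v^l,c_v^p,g_T,w_v^q,s)$ is well defined for $v\in\{y_1,y_2\}$. Fix $d:=g_S$, so that $d^{-1}(0)=R$ and $d^{-1}(1)=S$. The main (and essentially only) point to verify is that $(d_1,d_2)\mapsto\bigl(d_1^{-1}(1),\,d_2^{-1}(1)\bigr)$ is a bijection from the set of pairs $(d_1,d_2)$ that are correct for $d$ and compatible with $s$ onto $\{(A,B):A\cup B=S,\ A\cap B=\emptyset\}$, with inverse $(A,B)\mapsto(g_A,g_B)$. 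Indeed, by conditions D.1)--D.3), correctness of $(d_1,d_2)$ for $d=g_S$ forces $d_1^{-1}(0)=d_2^{-1}(0)=R$, hence $d_1,d_2\in\mathcal{F}_R$ and thus $d_1=g_{d_1^{-1}(1)}$ and $d_2=g_{d_2^{-1}(1)}$; conditions D.2)--D.3) say precisely that $A:=d_1^{-1}(1)$ and $B:=d_2^{-1}(1)$ satisfy $A\cup B=S$ and $A\cap B=\emptyset$. Compatibility with the fixed coloring $s$ then comes for free: since $s$ is compatible with $d$, we have $s^{-1}(0)=d^{-1}(0)=R=g_A^{-1}(0)=g_B^{-1}(0)$, so $g_A$ and $g_B$ are automatically compatible with $s$, and conversely. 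This bijection is the step I would write out most carefully, even though it is essentially routine bookkeeping.

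Finally, substituting $d=g_S$ and this reindexing into the displayed formula that precedes the observation yields
\[
\mathcal{A}_x^{j,l,p,q}[a,b,c,g_S,w,s]=\sum_{\substack{A\cup B=S\\ A\cap B=\emptyset}}\mathcal{A}_{y_1}[a_{y_1}^j,b_{y_1}^l,c_{y_1}^p,g_A,w_{y_1}^q,s]\cdot\mathcal{A}_{y_2}[a_{y_2}^j,b_{y_2}^l,c_{y_2}^p,g_B,w_{y_2}^q,s].
\]
By the definition of $\mathcal{A}_v^{j,l,p,q,s}$, each factor on the right equals $\mathcal{A}_{y_1}^{j,l,p,q,s}(A)$, respectively $\mathcal{A}_{y_2}^{j,l,p,q,s}(B)$, so the right-hand side is exactly $(\mathcal{A}_{y_1}^{j,l,p,q,s}*\mathcal{A}_{y_2}^{j,l,p,q,s})(S)$ in the sense of Definition~\ref{defsubsetconvo}, where the convolution is over the ordinary integer ring (the entries are nonnegative integers, and here we are counting rather than maximizing, so the plain sum--product reading is the relevant one). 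This gives the claimed identity, and no genuine obstacle beyond the bijection above arises.
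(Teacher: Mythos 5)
Your argument is correct and follows the same route the paper takes: it makes explicit the bijection $(d_1,d_2)\leftrightarrow(A,B)$ given by conditions D.1)--D.3) that justifies the paper's rewrite of $\mathcal{A}_x^{j,l,p,q}$ as the sum over disjoint pairs $A\cup B=S$, and then reads that sum off as the sum--product subset convolution. Your remark that compatibility with $s$ comes for free (since $s^{-1}(0)=R$ already) and that here one uses the plain integer ring rather than the max-sum semiring are both correct and consistent with the paper.
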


By Proposition \ref{B2}, we can compute $\mathcal{A}_{x}^{j,l,p,q}(a,b,c,g_{S},w,s)$ for every $S\subseteq \mathcal{B}_{x} \setminus R$ in  $2^{|\mathcal{B}_{x}\setminus R|}\cdot \tw^{\mathcal{O}(1)}\cdot \mathcal{O}(k \log k \log \log k)=2^{|\mathcal{B}_{x}\setminus R|}\cdot \tw^{\mathcal{O}(1)}$ time. Therefore, for each $j\in [k_{1}]$, $l\in [k_{2}]$, $p\in [k_{3}]$, and $q\in [k_{4}]$, we can compute $\mathcal{A}_{x}^{j,l,p,q}[a,b,c,g_{S},w,s]$ for every $S \subseteq \mathcal{B}_{x}\setminus R$ (or for every $d\in \mathcal{F}_{R}$) in $2^{|\mathcal{B}_{x}\setminus R|}\cdot n^{5} \cdot \tw^{\mathcal{O}(1)}$ time. In the same time, we can compute $\mathcal{A}_{x}[a,b,c,d,w,s]$ by (\ref{joinam2}) for every $d\in \mathcal{F}_{R}$.  Also, we have to try all possible fixed subsets $R$ of $\mathcal{B}_{x}$. Since $\displaystyle\sum_{R\subseteq \mathcal{B}_{x}}2^{|\mathcal{B}_{x}\setminus R|}=3^{|\mathcal{B}_{x}|}\leq 3^{\tw+1}$, the total time spent for all subsets $R\subseteq \mathcal{B}_{x}$ is $3^{\tw} \cdot n^{5} \cdot \tw^{\mathcal{O}(1)}$. 

Since for every $d$, there are at most $2^{\tw}$ choices for compatible $s$, the evaluation of all join nodes can be done in $6^{\tw} \cdot n^{\mathcal{O}(1)}$ time.

By Observation \ref{obam}, we have the following theorem,
\acyclicmatching*

\section{Algorithm for $c$-Disconnected Matching} \label{CDM}

In this section, we present a $(3c)^{\tw}\cdot  \tw^{\mathcal{O}(1)}\cdot n$-time algorithm for $c$-\textsc{Disconnected Matching} assuming that we are given a nice tree decomposition
$\mathcal{T} = (\mathbb{T},\{\mathcal{B}_{x}\}_{x\in V(\mathbb{T})})$ of $G$ of width $\tw$ that satisfies the \textit{deferred edge} property. For this purpose, we define the following notion.

\begin{definition} [Fine Coloring] \label{deffine}
Given a node $x$ of $\mathbb{T}$ and a fixed integer $c \geq 2$, a coloring $f:\mathcal{B}_{x}\rightarrow\{0,1,\ldots,c\}$ is a \emph{fine coloring} on $\mathcal{B}_{x}$ if there exists a coloring $\widehat{f}:V_{x}\rightarrow\{0,1,\ldots,c\}$ in $G_{x}$, called a \emph{fine extension} of $f$, such that the following hold: 
\begin{enumerate}
\item [$(i)$] $\widehat{f}$ restricted to $\mathcal{B}_{x}$ is exactly $f$.
\item  [$(ii)$] If $uv \in E_{x}$, $\widehat{f}(u)\neq 0$ and $\widehat{f}(v)\neq 0$, then $\widehat{f}(u)= \widehat{f}(v)$.
\end{enumerate}
\end{definition}

Note that point (ii) in Definition \ref{deffine} implies that whenever two vertices in $G_{x}$ have an edge between them, then they should get the same color under a fine extension except possibly when either of them is colored $0$.

Before we begin the formal description of the algorithm, let us briefly discuss the idea that yields us a single exponential running time for the \textsc{$c$-Disconnected Matching} problem rather than a slightly exponential running time\footnote{That is, running time $2^{\mathcal{O}(\tw)}\cdot n^{\mathcal{O}(1)}$ rather than $\tw^{\mathcal{O}(\tw)}\cdot n^{\mathcal{O}(1)}.$} (which is common for most of the naive dynamic programming algorithms for connectivity type problems). We will use Definition \ref{deffine} to partition the vertices of $V_{x}$ into color classes (at most $c$). Note that we do not require in Definition \ref{deffine} that $G_{x}[\widehat{f}^{-1}(i)]$ for any $i\in [c]$ is a connected graph. This is the crux of our efficiency. Specifically, this means that we do not keep track of the precise connected components of $G[V_{M}]$ in $G_{x}$ for a matching $M$, yet Definition \ref{deffine} is sufficient for us. 

Now, let us discuss our ideas more formally. We have a table $\mathcal{A}$ with an entry $\mathcal{A}_{x}[d,f,\widehat{c}]$ for each bag $\mathcal{B}_x$, for every coloring $d:\mathcal{B}_{x} \rightarrow \{0,1,2\}$, for every coloring $f:\mathcal{B}_{x} \rightarrow \{0,1,\ldots,c\}$, and for every set $\widehat{c}\subseteq \{1,2,\ldots,c\}$.  We say that $d$ and $f$ are \emph{compatible} if for every $v\in \mathcal{B}_{x}$, the following hold:  $d(v)=0$ if and only if $f(v)=0$. We say that $f$ and $\widehat{c}$ are compatible if for any $v\in \mathcal{B}_{x}$, $f(v)\in \widehat{c}$. Note that we have at most $\mathcal{O}(\tw\cdot n)$ many choices for $x$, at most $(2c+1)^{\tw}$ many choices for compatible $d$ and $f$, and at most $2^{c}$ choices for $\widehat{c}$. Furthermore, whenever $f$ is not compatible with $d$ or $\widehat{c}$, we do not store the entry $\mathcal{A}_{x}[d,f,\widehat{c}]$ and assume that the access to such an entry returns $-\infty$. Therefore, the size of table $\mathcal{A}$ is bounded by $\mathcal{O}((2c+1)^{\tw}\cdot \tw\cdot 
n)$. The following definition specifies the value each entry $\mathcal{A}_{x}[d,f,\widehat{c}]$ of $\mathcal{A}$ is supposed to store.

\begin{definition}
If $d$ is valid, $f$ is fine, $f$ is compatible with $d$ and $\widehat{c}$, and there exists a fine extension $\widehat{f}$ of $f$ such that $\widehat{c}$ equals the set of distinct non-zero colors assigned by $\widehat{f}$, then the entry $\mathcal{A}_{x}[d,f,\widehat{c}]$ stores the maximum number of vertices that are colored $1$ or $2$ under some valid extension $\widehat{d}$ of $d$ in $G_{x}$ such that for every $v\in V_{x}$, $\widehat{d}(v)=0$ if and only if $\widehat{f}(v)=0$. Otherwise, the entry $\mathcal{A}_{x}[d,f,\widehat{c}]$ stores the value $-\infty$.
\end{definition}

 See Figure \ref{fig1} for an illustration of how to compute $\widehat{c}$, given a valid extension $\widehat{d}$ of $d$ and a fine extension $\widehat{f}$ of $f$. 

\begin{figure}[t]
    \centering
    \includegraphics[scale=0.9]{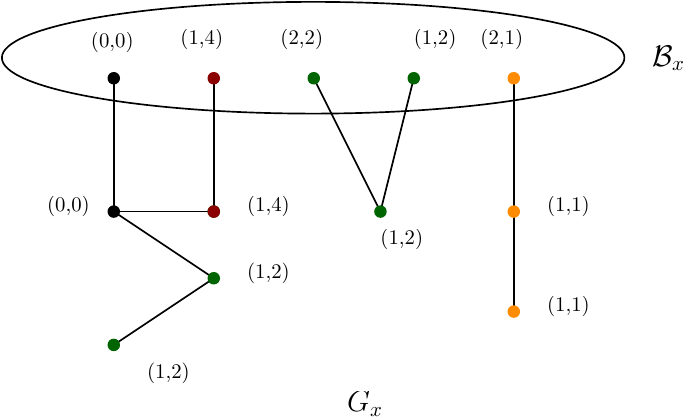}
    \caption{ The entry written beside each vertex is its $(\widehat{d},\widehat{f})$ value. Here, note that $\widehat{c}=\{1,2,4\}$ as $\widehat{f}$ uses these colors on $G_{x}$.}
    \label{fig1}
\end{figure}

Since the root of $\mathbb{T}$ is an empty node, note that the maximum number of vertices saturated by any $c$-disconnected matching is exactly $\mathcal{A}_{r}[\emptyset,\emptyset,\{1,2,\ldots,c\}]$, where $r$ is the root of $\mathbb{T}$.

We now provide recursive formulas to compute the entries of table $\mathcal{A}$.

\bigskip

\noindent \textbf{Leaf node:} For a leaf node $x$, we have that $\mathcal{B}_{x}=\emptyset$. Hence there is only one possible coloring on $\mathcal{B}_{x}$, that is, the empty coloring (for both $d$ and $f$). Since $f$ and $G_{x}$ are empty, the only compatible choice for $\widehat{c}$ is $\{\}$, and we have $\mathcal{A}_{x}[\emptyset,\emptyset,\{\}]=0$.

\bigskip
    
\noindent \textbf{Introduce vertex node:} Suppose that $x$ is an introduce vertex node with child node $y$ such that $\mathcal{B}_{x}=\mathcal{B}_{y}\cup \{v\}$ for some $v\notin \mathcal{B}_{y}$. Note that we have not introduced any edges incident on $v$ so far, so $v$ is isolated in $G_x$. For every coloring $d: \mathcal{B}_{x}\rightarrow \{0,1,2\}$, every set $\widehat{c} \subseteq \{1,2,\ldots,c\}$, and every coloring $f:\mathcal{B}_{x}\rightarrow\{0,1,\ldots,c\}$ such that $f$ is compatible with $d$ and $\widehat{c}$, we have the following recursive formula:

 \begin{equation}\label{intronode}
  \mathcal{A}_{x}[d,f,\widehat{c}] = \begin{cases}
        \mathcal{A}_{y}[d|_{\mathcal{B}_{y}},f|_{\mathcal{B}_{y}},\widehat{c}] \ \  \ \ \ \ \ \  \ \ \ \ \ \ \ \ \  \ \ \  \ \ \ \ \ \ \ \ \ \ \ \ \ \ \ \text{\ if \ $d(v)=0$,}
        \\
        -\infty \ \ \  \ \ \ \ \ \  \ \ \ \ \ \ \ \ \ \ \ \ \ \ \  \  \ \ \ \ \ \ \  \ \ \ \ \ \ \ \ \  \ \ \ \ \ \ \ \ \  \text{\ if \ $d(v)=1$,}
        \\
        
         \displaystyle\max\{ \mathcal{A}_{y}[d|_{\mathcal{B}_{y}},f|_{\mathcal{B}_{y}},\widehat{c}\setminus \{f(v)\}]+1, \\ \mathcal{A}_{y}[d|_{\mathcal{B}_{y}},f|_{\mathcal{B}_{y}},\widehat{c}]+1\} \   \  \ \ \ \ \ \ \   \ \ \ \ \  \ \ \ \ \  \ \ \ \ \   \  \ \  \ \text{\ if \ $d(v)=2$}.
            
            \end{cases}
 \end{equation}

When $d(v)=0$ or $d(v)=2$, then $d$ is valid if and only if $d|_{\mathcal{B}_{y}}$ is valid, and $f$ is fine if and only if $f|_{\mathcal{B}_{y}}$ is fine. Further, when $d(v)=0$, $\widehat{c}$ remains unaffected by the introduction of $v$. So, the case when $d(v)=0$ is correct in (\ref{intronode}). When $d(v)=2$, then there are two possibilities. One is that $v$ has been assigned a color under $f$ that has not been assigned to any other vertex in $V_{x}$, and the second is that $v$ under $f$ shares its color with some vertex ($\neq v$) in $V_{x}$, i.e., $f(v)\in \widehat{c}$.
For the first case, $f$ is compatible with $\widehat{c}$ if and only if $f|_{\mathcal{B}_{y}}$ is compatible with $\widehat{c}\setminus \{f(v)\}$. For the second case, $f$ is compatible with $\widehat{c}$ if and only if $f|_{\mathcal{B}_{y}}$ is compatible with $\widehat{c}$. Further, we increment the value by one in both cases as one more vertex has been colored $2$ in $G_{x}$. Thus, the case when $d(v)=2$ is correct in (\ref{intronode}). When $d(v)=1$, then $d$ cannot be a valid coloring as $v$ does not have any neighbor in $G_{x}$ (by the definition of a valid coloring, $v$ needs a neighbor of color $1$ in $G_{x}$), and hence $\mathcal{A}_{x}[d,f,\widehat{c}]=-\infty$. Thus, the case when $d(v)=1$ is correct in (\ref{intronode}).

Clearly, the evaluation of all introduce vertex nodes can be done in $(2c+1)^{\tw}\cdot \tw^{\mathcal{O}(1)} \cdot n$ time.
\bigskip

\noindent \textbf{Introduce edge node:} 
Suppose that $x$ is an introduce edge node that introduces
an edge $uv$, and let $y$ be the child of $x$. For every coloring $d: \mathcal{B}_{x}\rightarrow \{0,1,2\}$, every set $\widehat{c} \subseteq \{1,2,\ldots,c\}$, and every coloring $f:\mathcal{B}_{x}\rightarrow\{0,1,\ldots,c\}$ such that $f$ is compatible with $d$ and $\widehat{c}$, we consider the following cases:
\medskip

 If at least one of $d(u)$ or  $d(v)$ is $0$, then
\begin{equation*}\mathcal{A}_{x}[d,f,\widehat{c}]=\mathcal{A}_{y}[d,f,\widehat{c}].\end{equation*}

Else, if at least one of $d(u)$ or  $d(v)$ is $2$, and $f(u)=f(v)$, then
\begin{equation*}\mathcal{A}_{x}[d,f,\widehat{c}]=\mathcal{A}_{y}[d,f,\widehat{c}].\end{equation*}

Else, if both $d(u)$ and $d(v)$ are $1$, and $f(u)=f(v)$, then
\begin{equation} \label{new} \mathcal{A}_{x}[d,f,\widehat{c}]=\displaystyle \max\{\mathcal{A}_{y}[d,f,\widehat{c}],\mathcal{A}_{y}[d_{\{u,v\}\rightarrow 2},f,\widehat{c}]\}.\end{equation}

Else, $\mathcal{A}_{x}[d,f,\widehat{c}]=-\infty$.
\medskip

Note that if either $d(u)$ or $d(v)$ is $0$, then $d$ is valid on $\mathcal{B}_{x}$ if and only if $d$ is valid on $\mathcal{B}_{y}$ (by the definition of a valid coloring), and $f$ is fine on $\mathcal{B}_{x}$ if and only if $f$ is fine on $\mathcal{B}_{y}$ (by the definition of a fine coloring). Furthermore, as no new colors have been introduced under $f$, $\widehat{c}$ remains unaffected by the introduction of edge $uv$. The same arguments hold when either $d(u)$ or $d(v)$ is $2$. Next, whenever an edge is introduced between $u$ and $v$ such that $d(u)$ and $d(v)$ are non-zero, then $f$ is a fine coloring if and only if $f(u)=f(v)$ (by (ii) in Definition \ref{deffine}).  Next, let us consider the case when both $d(u)$ and $d(v)$ are $1$. In this case, there are two possibilities. One is where both $u$ and $v$ have already been matched to some other vertices in $G_{y}$, i.e., not with each other (corresponds to the first term on the right-hand side in (\ref{new})), and the other is when both $u$ and $v$ have not found their $M$-mate in $G_{y}$ and are matched to each other in $\mathcal{B}_{x}$ (corresponds to the second term on the right-hand side in (\ref{new})). We take the maximum over the two possibilities, and thus (\ref{new}) is correct.

Clearly, the evaluation of all introduce edge nodes can be done in $(2c+1)^{\tw}\cdot \tw^{\mathcal{O}(1)} \cdot n$ time.
 
\bigskip
 
 \noindent  \textbf{Forget node:} Suppose that $x$ is a forget vertex node with
a child $y$ such that $\mathcal{B}_{x}=\mathcal{B}_{y}\setminus \{u\}$ for some $u\in \mathcal{B}_{y}$. For every coloring $d: \mathcal{B}_{x}\rightarrow \{0,1,2\}$, every set $\widehat{c} \subseteq \{1,2,\ldots,c\}$, and every coloring $f:\mathcal{B}_{x}\rightarrow\{0,1,\ldots,c\}$ such that $f$ is compatible with $d$ and $\widehat{c}$, we have
 \begin{equation}\label{eq13}\mathcal{A}_{x}[d,f,\widehat{c}]=
\displaystyle\max\{\mathcal{A}_{y}[d_{u \rightarrow 0},f_{u \rightarrow 0},\widehat{c}],\displaystyle\max_{\overline{c}\in \widehat{c}}\{\mathcal{A}_{y}[d_{u \rightarrow 1},f_{u \rightarrow \overline{c}},\widehat{c}]\}\}.
\end{equation}

 The first term on the right-hand side in (\ref{eq13}) corresponds to the case when $d(u)=0$ in $\mathcal{B}_{y}$. Since we store only compatible values of $f$ and $d$, $f(u)=0$ in $\mathcal{B}_{y}$. The value of $\widehat{c}$ does not change as we only forget a vertex, or, in other words, $G_{x}=G_{y}$. The second term on the right-hand side in (\ref{eq13}) corresponds to the case when $d(u)=1$ in $\mathcal{B}_{y}$. In this case, we take the maximum over all possible values of $f(u)$. The only compatible choices for $f(u)$ are from the set $\widehat{c}$. Furthermore, note that the (outer) maximum is taken over colorings $d_{u \rightarrow 0}$  and $d_{u \rightarrow 1}$  only, as the coloring $d_{u \rightarrow 2}$ cannot be extended to a valid coloring once $u$ is forgotten (this follows by the definition of a valid coloring).
 
 Clearly, the evaluation of all forget nodes can be done in $(2c+1)^{\tw}\cdot \tw^{\mathcal{O}(1)} \cdot n$ time.
 
\bigskip
 
\noindent \textbf{Join node:}
Let $x$ be a join node with children $y_{1}$ and $y_{2}$. For every coloring $d: \mathcal{B}_{x}\rightarrow \{0,1,2\}$, every set $\widehat{c} \subseteq \{1,2,\ldots,c\}$, and for every coloring $f:\mathcal{B}_{x}\rightarrow\{0,1,\ldots,c\}$ such that $f$ is compatible with $d$ and $\widehat{c}$,  we have 
\begin{equation} \label{eqjoin}
\mathcal{A}_{x}[d,f,\widehat{c}]=\displaystyle\max_{d_{1},d_{2}}\{ \displaystyle\max_{\substack{\widehat{c}_{y_1},\widehat{c}_{y_2}\\ \widehat{c}_{y_1}\cup \widehat{c}_{y_2}= \widehat{c}}} \{\mathcal{A}_{y_{1}}[d_{1},f,\widehat{c}_{y_1}]+\mathcal{A}_{y_{2}}[d_{2},f,\widehat{c}_{y_2}]-|d^{-1}(1)|-|d^{-1}(2)|\}\},
\end{equation}

where $d_{1}: \mathcal{B}_{y_{1}}\rightarrow \{0,1,2\}$, $d_{2}: \mathcal{B}_{y_{2}}\rightarrow \{0,1,2\}$, $\widehat{c}_{y_1}, \widehat{c}_{y_2} \subseteq \{1,2,\ldots,c\}$ such that $f$ is compatible with $d_{1}$, $d_{2}$, $\widehat{c}_{y_1}$, and $\widehat{c}_{y_2}$, and $d_{1},d_{2}$ are correct for $d$. 

Note that we are determining the value of $\mathcal{A}_{x}[d,f,\widehat{c}]$ by looking up the corresponding values in nodes $y_{1}$ and $y_{2}$, adding the corresponding values, and subtracting the number of vertices
 colored $1$ or $2$ under $d$ (else, by Observation \ref{joinobs}, the number of vertices colored 1 or 2 in $\mathcal{B}_{x}$ would be counted twice). Note that for nodes $y_{1}$ and $y_{2}$, we are considering only those colorings of $\mathcal{B}_{y_{1}}$ and $\mathcal{B}_{y_{2}}$ that are correct for $d$, otherwise $d$ will not be valid.

By the naive method, the evaluation of all join nodes altogether can be done in $(4c+2)^{\tw} \cdot \tw^{\mathcal{O}(1)} \cdot n$ time as there are at most $2^{\tw}$ possible values of $(d_{1},d_{2})$ that are correct for $d$. However, the fast subset convolution can be used to handle join nodes more efficiently. In our case, set $B$ (given in Definition \ref{defsubsetconvo}) is a subset of a bag of $\mathcal {T}$. Further, we take the subset convolution over the max-sum semiring. To apply the fast subset convolution in (\ref{eqjoin}), first, let us define some notations.

For every $\widehat{c}\subseteq \{1,2,\ldots,c\}$, we define a set $\widehat{C}$ that contains all possible pairs $(\widehat{c}_{y_{1}}^{1},\widehat{c}_{y_{2}}^{1}),$ 
$(\widehat{c}_{y_{1}}^{2},\widehat{c}_{y_{2}}^{2}),\ldots,$ $ (\widehat{c}_{y_{1}}^{k_{1}},\widehat{c}_{y_{2}}^{k_{1}})$ such that $\widehat{c}_{y_{1}}^{i}\cup \widehat{c}_{y_{2}}^{i}=\widehat{c}$ for each $i\in[k_{1}]$. Note that for a $\widehat{c}$, we have $k_{1}={3}^{|\widehat{c}|}$ (depending on whether an element of $\widehat{c}$ lies in $\widehat{c}_{y_{1}}^{i}$ only, $\widehat{c}_{y_{2}}^{i}$ only, or both). Further, for every fixed $f:\mathcal{B}_{x}\rightarrow\{0,1,\ldots,c\}$, define a set $\widehat{C}_{f}$ that contains only those elements from $\widehat{C}$ that are compatible with $f$.
\medskip 

\begin{remark}
Throughout this section, for fixed $\widehat{c}$ and $f$, we set $k_{2}=|\widehat{C}_{f}|$. Note that, as $k_{1}={3}^{|\widehat{c}|}$, we have $k_{2}\leq {3}^{|\widehat{c}|}.$
\end{remark}

 For a join node $x$ with children $y_{1}$ and $y_{2}$, for every fixed coloring $d: \mathcal{B}_{x}\rightarrow \{0,1,2\}$, for every fixed coloring $f:\mathcal{B}_{x}\rightarrow\{0,1,\ldots,c\}$, and for every fixed $\widehat{c} \subseteq \{1,2,\ldots,c\}$ such that $f$ is compatible with $d$ and $\widehat{c}$, let 
\begin{equation}\label{eqjoin1}
\mathcal{A}_{x}^{i}[d,f,\widehat{c}]=\displaystyle\max_{d_{1},d_{2}} \{\mathcal{A}_{y_{1}}[d_{1},f,\widehat{c}_{y_{1}}^{i}]+\mathcal{A}_{y_{2}}[d_{2},f,\widehat{c}_{y_{2}}^{i}]-|d^{-1}(1)|-|d^{-1}(2)|\},
\end{equation}
where $d_{1}: \mathcal{B}_{y_{1}}\rightarrow \{0,1,2\}$, $d_{2}: \mathcal{B}_{y_{2}}\rightarrow \{0,1,2\}$ such that $d_{1},d_{2}$ are correct for $d$, and $f$ is compatible with $d_{1}$ and $d_{2}$, and $(\widehat{c}_{y_1}^{i}, \widehat{c}_{y_2}^{i}) \in \widehat{C}_{f}$.  

By (\ref{eqjoin}) and (\ref{eqjoin1}), we have 
\begin{equation} \label{eq5}
  \mathcal{A}_{x}[d,f,\widehat{c}]=\displaystyle\max_{1\leq i \leq k_{2}}\mathcal{A}_{x}^{i}[d,f,\widehat{c}].
\end{equation}

From the description of join nodes in Section \ref{IM}, recall that if $d,d_{1},d_{2}:\mathcal{B}_{x} \rightarrow \{0,1,2\}$, then $d_{1}$ and $d_{2}$ are correct for a coloring $d$ if and only if C.1)-C.3) hold.

  Next, note that if we fix $d^{-1}(0)$, then what we want to compute resembles the subset convolution. So, we fix a set $R \subseteq \mathcal{B}_{x}$. Further, let $\mathcal{F}_{R}$ denote the set of all functions $d:\mathcal{B}_{x}\rightarrow \{0,1,2\}$ such
that $d^{-1}(0)=R$. Next, for all $d\in \mathcal{F}_{R}$, we compute the values of $\mathcal{A}_{x}^{i}[d,f,\widehat{c}]$ for each $i\in [k_{2}]$.
Note that every function $d\in \mathcal{F}_{R}$ can be represented by a set $S \subseteq \mathcal{B}_{x}\setminus R$,
namely, the preimage of 1. Hence, we can define the coloring represented by
$S$ as in (\ref{eq1}).

Now, for every fixed $f:\mathcal{B}_{x}\rightarrow\{0,1,\ldots,c\}$, for every fixed $i\in [k_{2}]$, and for every $d\in \mathcal{F}_{R}$, (\ref{eqjoin1}) can be written as
 \begin{equation}\label{eq4}
\mathcal{A}_{x}^{i}[d,f,\widehat{c}]=\displaystyle\max_{\substack{A\cup B=d^{-1}(1) \\ A\cap B =\emptyset}} \{\mathcal{A}_{y_{1}}[g_{A},f,\widehat{c}_{y_{1}}^{i}]+\mathcal{A}_{y_{2}}[g_{B},f,\widehat{c}_{y_{2}}^{i}]-|d^{-1}(1)|-|d^{-1}(2)|\}.
\end{equation}

The following observation follows from the definitions of $g_S(x)$, subset convolution, $\mathcal{A}_{x}^{i}[d,f,\widehat{c}]$ for each $i\in [k_{2}]$, and equations (\ref{eq1}) and (\ref{eq4}).

\begin {observation} Let $v\in \{y_{1},y_{2}\}$, $f:\mathcal{B}_{x}\rightarrow\{0,1,\ldots,c\}$ be a fixed coloring, $i \in [k_{2}]$ be a fixed integer, and $\mathcal{A}_{v}^{f,i}:2^{\mathcal{B}_{x}\setminus R}\rightarrow \{1,2,\ldots,k\}$ be such that for every
$S\subseteq \mathcal{B}_{x}\setminus R$, $\mathcal{A}_{v}^{f,i}(S)=\mathcal{A}_{v}[g_{S},f,\widehat{c}_{v}^{i}]$. Then, for every $S\subseteq \mathcal{B}_{x} \setminus R$,
\begin{equation*}
\mathcal{A}_{x}^{i}[g_S,f,\widehat{c}]= (A_{y_{1}}^{f,i}* A_{y_{2}}^{f,i})(S)+|R|-|\mathcal{B}_{x}|,
\end{equation*}

where the subset convolution is over the max-sum semiring.
\end {observation}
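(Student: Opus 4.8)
The plan is to prove the identity by unfolding the definition of the subset convolution over the max-sum semiring $(\mathbb{Z}\cup\{-\infty\},\max,+)$ and matching it term by term against equation~(\ref{eq4}). Recall that in this semiring the role of multiplication is played by $+$, so by Definition~\ref{defsubsetconvo} (read in that semiring), for any two functions $\phi,\psi:2^{\mathcal{B}_{x}\setminus R}\rightarrow\mathbb{Z}\cup\{-\infty\}$ and any $S\subseteq\mathcal{B}_{x}\setminus R$ we have $(\phi*\psi)(S)=\max_{\substack{A\cup B=S,\ A\cap B=\emptyset}}\bigl(\phi(A)+\psi(B)\bigr)$. The functions $\mathcal{A}_{y_{1}}^{f,i}$ and $\mathcal{A}_{y_{2}}^{f,i}$ from the statement are exactly of this form once $f$, $i$, and hence the pair $(\widehat{c}_{y_{1}}^{i},\widehat{c}_{y_{2}}^{i})\in\widehat{C}_{f}$, are fixed.

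First I would fix $R\subseteq\mathcal{B}_{x}$, the coloring $f$, the index $i\in[k_{2}]$, and a set $S\subseteq\mathcal{B}_{x}\setminus R$, and set $d=g_{S}$. From the definition of $g_{S}$ in~(\ref{eq1}), we have $d^{-1}(0)=R$, $d^{-1}(1)=S$, and $d^{-1}(2)=\mathcal{B}_{x}\setminus(R\cup S)$; hence $|d^{-1}(1)|+|d^{-1}(2)|=|S|+\bigl(|\mathcal{B}_{x}|-|R|-|S|\bigr)=|\mathcal{B}_{x}|-|R|$. Consequently the additive constant in~(\ref{eq4}) equals $-|d^{-1}(1)|-|d^{-1}(2)|=|R|-|\mathcal{B}_{x}|$, which is precisely the correction term asserted in the statement; this step is pure arithmetic.

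Next I would argue that, with $R$ fixed, ranging over the pairs $(A,B)$ with $A\cup B=d^{-1}(1)=S$ and $A\cap B=\emptyset$ in~(\ref{eq4}) is the same as ranging over all pairs of colorings $(d_{1},d_{2})=(g_{A},g_{B})$ with $g_{A},g_{B}\in\mathcal{F}_{R}$ that are correct for $d$: since $g_{A}^{-1}(0)=g_{B}^{-1}(0)=R$, conditions C.1)--C.3) collapse exactly to $A\cup B=S$ and $A\cap B=\emptyset$; moreover the remaining requirement that $f$ be compatible with $d_{1}$ and $d_{2}$ is automatic, because a vertex is colored $0$ under $g_{A}$ (resp.\ $g_{B}$) if and only if it is colored $0$ under $d$, and $f$ is compatible with $d$. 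Substituting $\mathcal{A}_{v}^{f,i}(T)=\mathcal{A}_{v}[g_{T},f,\widehat{c}_{v}^{i}]$ for $v\in\{y_{1},y_{2}\}$ into~(\ref{eq4}) then yields
\begin{equation*}
\mathcal{A}_{x}^{i}[g_{S},f,\widehat{c}]=\Bigl(\max_{\substack{A\cup B=S\\ A\cap B=\emptyset}}\bigl(\mathcal{A}_{y_{1}}^{f,i}(A)+\mathcal{A}_{y_{2}}^{f,i}(B)\bigr)\Bigr)+|R|-|\mathcal{B}_{x}|=(\mathcal{A}_{y_{1}}^{f,i}*\mathcal{A}_{y_{2}}^{f,i})(S)+|R|-|\mathcal{B}_{x}|,
\end{equation*}
which is the claimed identity (note that terms equal to $-\infty$, corresponding to invalid colorings in a child, are absorbed correctly by $\max$ and $+$ in the semiring).

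The only genuinely delicate point — the one I would write out carefully — is the equivalence used in the previous paragraph: that the index set of the maximum in~(\ref{eq4}) coincides, under the fixed-$R$ restriction, with the index set implicit in the subset-convolution product, i.e., that the correctness conditions on $(d_{1},d_{2})$ impose nothing beyond the disjoint decomposition $A\cup B=S$, and that no $f$-compatibility constraint is silently dropped. Everything else — the description of $g_{S}$ and the bookkeeping of the correction term — is routine.
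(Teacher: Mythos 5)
Your proposal is correct and follows exactly the route the paper intends: the paper states that the observation ``follows from the definitions of $g_S$, subset convolution, $\mathcal{A}_{x}^{i}[d,f,\widehat{c}]$, and equations (\ref{eq1}) and (\ref{eq4})'' without further detail, and your write-up simply fleshes out that unfolding — the arithmetic $|d^{-1}(1)|+|d^{-1}(2)|=|\mathcal{B}_x|-|R|$ and the identification of the index set of the max in (\ref{eq4}) with the disjoint decompositions of $S$. No gaps; the extra care you take with the $f$-compatibility constraint and the $-\infty$ entries is a welcome but minor elaboration of the same argument.
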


By Proposition \ref{B2}, we compute $\mathcal{A}_{x}^{i}[g_{S},f,\widehat{c}]$ for every $S \subseteq \mathcal{B}_{x}\setminus R$ in  $2^{|\mathcal{B}_{x}\setminus R|}\cdot \tw^{\mathcal{O}(1)}\cdot \mathcal{O}(k \log k \log \log k)=2^{|\mathcal{B}_{x}\setminus R|}\cdot \tw^{\mathcal{O}(1)}$ time. Therefore, for each $i\in [k_{2}]$, we can compute $\mathcal{A}_{x}^{i}[g_{S},f,\widehat{c}]$ for every $S \subseteq \mathcal{B}_{x}\setminus R$ (or for every $d\in \mathcal{F}_{R}$) in $2^{|\mathcal{B}_{x}\setminus R|}\cdot 3^{|\widehat{c}|} \cdot \tw^{\mathcal{O}(1)}$ time. In the same time, we can compute $\mathcal{A}_{x}[d,f,\widehat{c}]$ by (\ref{eq5}) for every $d\in \mathcal{F}_{R}$. Also, we have to try all possible fixed subsets $R$ of $\mathcal{B}_{x}$. Since $\displaystyle\sum_{R\subseteq \mathcal{B}_{x}}2^{|\mathcal{B}_{x}\setminus R|}\leq 3^{|\mathcal{B}_{x}|}\leq 3^{(\tw+1)}$, the total time spent for all subsets $R\subseteq \mathcal{B}_{x}$ is $3^{\tw} \cdot \tw^{\mathcal{O}(1)}\cdot n$. Since for every $d$ there are at most $c^{\tw}$ choices for compatible $f$, so, clearly, the evaluation of all join nodes can be done in $(3c)^{\tw} \cdot \tw^{\mathcal{O}(1)}\cdot n$ time.

 Thus from the description of all nodes, we have the following theorem.
\cdisconnectedmatching*
\section{Lower Bound for Disconnected Matching} \label{DM}

In this section, we prove that assuming the Exponential Time Hypothesis, there does not exist any algorithm for the \textsc{Disconnected Matching} problem running in time $2^{o(\pw \log \pw)}\cdot n^{\mathcal{O}(1)}$. For this purpose, we give a reduction from \textsc{$k \times k$ Hitting Set}, which is defined as follows:
\bigskip 

\noindent\fbox{ \parbox{160mm}{
		\noindent \underline{\textsc{$k \times k$ Hitting Set:}}\\
		\textbf{Input:} A family of sets $S_{1},S_{2},\ldots, S_{m}\subseteq [k] \times [k]$ such that each set contains at most one element from each row of $[k]\times [k]$.\\
  \textbf{Parameter:} $k$.\\
		\textbf{Question:} Does there exist a set $\widehat{S}$ containing exactly one element from each row such that $\widehat{S}\cap S_{i}\neq \emptyset$ for every $i\in [m]$?}}
		
\begin{proposition}	[\cite{loksh}] \label{dmthm} Assuming Exponential Time Hypothesis, there is no $2^{o(k \log k)}\cdot n^{\mathcal{O}(1)}$-time algorithm for \textsc{$k \times k$ Hitting Set}.
\end{proposition}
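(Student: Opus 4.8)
The plan is to prove the lower bound by a polynomial-time reduction from $3$-CNF-SAT, using the Sparsification Lemma \cite{imp} so that it suffices to handle formulas with a linear number of clauses. One starts from a $3$-CNF formula $\psi$ on $n$ variables; by the Sparsification Lemma we may assume $\psi$ has $m=\mathcal{O}(n)$ clauses, and under \ETH such instances admit no $2^{o(n)}$-time algorithm (as $n+m=\Theta(n)$). The target is a \textsc{$k \times k$ Hitting Set} instance with $k=\Theta(n/\log n)$, of size $n^{\mathcal{O}(1)}$, that is a yes-instance if and only if $\psi$ is satisfiable; a hypothetical $2^{o(k\log k)}\cdot n^{\mathcal{O}(1)}$-time algorithm would then decide $\psi$ in time $2^{o((n/\log n)\log n)}\cdot n^{\mathcal{O}(1)}=2^{o(n)}$, contradicting \ETH.

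\textbf{The base construction.} Choose the least $k$ with $k\lfloor\log_2 k\rfloor\ge n$, so $k=\Theta(n/\log n)$; partition the variables of $\psi$ into $k$ groups $V_1,\dots,V_k$ of size at most $\lfloor\log_2 k\rfloor$, and identify the (at most $k$) truth assignments of $V_i$ with the columns of row $i$, padding with repeats if $|V_i|<\log_2 k$. A feasible selection $\widehat{S}$ --- one column per row --- thus encodes a truth assignment $\alpha_{\widehat{S}}$ of all variables, and every assignment arises this way. To these ``variable rows'' I would add an ``auxiliary'' block of only $\mathcal{O}(n/\log n)$ further rows, each associated with a batch of $\Theta(\log n)$ clauses; the column chosen in such a row encodes, for every clause in its batch, which of the clause's three literals is declared to be the satisfied one (a batch of $b\le\log_3 k$ clauses has only $3^b\le k$ such certificates, so they fit in $k$ columns). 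Bounding the number of auxiliary rows by $\mathcal{O}(n/\log n)$ is exactly what keeps the whole grid of dimension $\Theta(n/\log n)$.

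\textbf{Encoding satisfaction.} Using only sets that contain at most one element from each row, the instance must enforce two things: (1) the literal-certificate chosen in each auxiliary row is \emph{honest}, i.e.\ whenever it declares some literal $\ell$ of a clause satisfied, the chosen assignment of the group containing $\ell$'s variable really makes $\ell$ true; and (2) each clause's certificate does declare \emph{some} literal satisfied --- which is automatic from how certificates are encoded. The intended correctness statement is that $\widehat{S}$ hits all sets iff $\alpha_{\widehat{S}}$ satisfies $\psi$: from a satisfying $\alpha$ one fills in each auxiliary row with an honest certificate, and conversely an honest certificate for a clause forces one of its literals true. Since the number of sets is $n^{\mathcal{O}(1)}$, the instance has size $n^{\mathcal{O}(1)}$ and parameter $k=\Theta(n/\log n)$, which yields the claimed conditional lower bound.

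\textbf{Main obstacle.} The real work is point (1). In the thin variant a set is only a positive disjunction $\bigvee_i (c_i=v_i)$ over distinct rows, so it cannot by itself \emph{forbid} a combination of choices --- and ``clause violated'' is precisely such a forbidden combination; negations and conjunctions have to be simulated. The naive honesty constraint for a clause is non-thin (of the shape $\{(R,\sigma):\sigma\neq\text{bad}\}\cup\{(a,\gamma):\gamma\text{ good}\}$, with many elements per row), and turning it into genuinely thin sets --- e.g.\ by refining high-arity rows into several lower-arity rows so that a ``$\neq$'' condition collapses to an equality, while not exceeding the $\mathcal{O}(n/\log n)$ row budget so that $\Theta(\log n)$ clauses can still share one auxiliary row without running out of columns for its $3^{\Theta(\log n)}$ certificates --- is the delicate heart of the argument. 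Everything else (the grouping arithmetic, the polynomial size bound, and the final \ETH calculation) is routine.
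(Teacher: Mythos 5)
This proposition is not proved in the paper at all; it is imported verbatim from Lokshtanov, Marx, and Saurabh \cite{loksh}, so your proposal has to be measured against that known proof. Your overall framing is the standard and correct one: after sparsification, group the $n$ variables into $k=\Theta(n/\log n)$ blocks of $\Theta(\log k)$ variables so that each block has at most $k$ assignments (identified with the columns of one row), and observe that a $2^{o(k\log k)}\cdot n^{\mathcal{O}(1)}$-time algorithm for the resulting instance would decide $3$-CNF-SAT in $2^{o(n)}$ time. That part, together with the size and parameter accounting, is routine and fine.

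The genuine gap is exactly the step you yourself label the ``main obstacle,'' and that step is the entire content of the theorem. The sets in \textsc{$k\times k$ Hitting Set} are thin (at most one element per row), so each set can only assert a positive disjunction ``at least one of these (row, column) pairs is selected''; it cannot assert ``row $i$ does not select column $j$,'' since the natural encoding $\{(i,j'):j'\neq j\}$ contains $k-1$ elements of row $i$. Your honesty constraints tying a clause's literal certificate to the variable rows are precisely of this forbidden negative form, and your proposed remedy (``refining high-arity rows so that a $\neq$ condition collapses to an equality'') is not carried out and does not obviously respect the $\mathcal{O}(n/\log n)$ row budget on which the whole parameter calculation rests. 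This is exactly why the reduction in \cite{loksh} does not go directly from $3$-CNF-SAT: it passes through \textsc{$k\times k$ Clique} and then the \emph{permutation} variants of \textsc{Clique} and \textsc{Hitting Set}, because once the solution is forced to be a permutation, the negative statement ``row $i$ does not select column $j$'' becomes the positive, thin statement ``column $j$ is selected by some row other than $i$''; a further padding reduction then removes the permutation requirement. Without a device of this kind your construction cannot encode clause satisfaction with thin sets, so the proposal as written does not establish the proposition.
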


Note that our reduction is inspired by the reduction given by Cygan et al. in \cite{cygan1} to prove that there is no $2^{o(\pw \log \pw)}\cdot n^{\mathcal{O}(1)}$-time algorithm for \textsc{Maximally Disconnected Dominating Set} unless the Exponential Time Hypothesis fails. Given an instance $(k,S_{1},\ldots, S_{m})$ of \textsc{$k \times k$ Hitting Set}, we construct an equivalent instance $(G,3k+m,k)$ of \textsc{Disconnected Matching} in polynomial time. First, we define a simple gadget that will be used in our construction.
\begin{definition}[Star Gadget] By adding a \emph{star gadget} to a vertex set $X\subseteq V(G)$, we mean the following construction: We
introduce a new vertex of degree $|X|$ and connect it to all vertices in $X$.

\end{definition}

\begin{lemma}\label{dm1}
Let $H$ be a graph and let $G$ be the graph constructed from $H$ by adding a star gadget to a subset $X$ of $V(H)$. Assume we are given a path decomposition $\widetilde{\mathcal{P}}$ of $H$ of width $\pw$ with the following property: There exists a bag in $\widetilde{\mathcal{P}}$ that contains $X$. Then, in polynomial
time, we can construct a path decomposition of $G$ of width at most $\pw +1$.
\end{lemma}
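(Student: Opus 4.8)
The plan is to take the given path decomposition $\widetilde{\mathcal{P}} = (\mathbb{P}, \{\mathcal{B}_x\}_{x \in V(\mathbb{P})})$ of $H$ and modify it locally to accommodate the single new vertex $z$ of the star gadget, while verifying the three tree-decomposition axioms and tracking the width. By hypothesis there is a bag $\mathcal{B}_{x_0}$ of $\widetilde{\mathcal{P}}$ with $X \subseteq \mathcal{B}_{x_0}$. First I would insert into the path, immediately after $x_0$, a brand-new node $x_0'$ whose bag is $\mathcal{B}_{x_0} \cup \{z\}$; that is, we splice $x_0'$ between $x_0$ and its successor on the path (if $x_0$ is an endpoint, we simply append $x_0'$). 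Call the resulting path decomposition $\mathcal{P}'$ of $G$.

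Next I would check the axioms. For $(T.1)$: every vertex of $H$ still appears (no bag lost any vertex), and $z \in \mathcal{B}_{x_0'}$, so $\bigcup$ of the bags is $V(G)$. For $(T.2)$: every edge of $H$ is covered as before, since $\mathcal{B}_{x_0'} \supseteq \mathcal{B}_{x_0}$ and all old bags are unchanged; the new edges are exactly $\{zv : v \in X\}$, and each such edge is covered by $\mathcal{B}_{x_0'}$ because $X \cup \{z\} \subseteq \mathcal{B}_{x_0'}$. For $(T.3)$ (equivalently, that the nodes containing any fixed vertex form a contiguous subpath): for a vertex $v \in V(H)$, its set of bags in $\mathcal{P}'$ is its old contiguous interval, possibly enlarged by the single adjacent node $x_0'$ when $v \in \mathcal{B}_{x_0}$ — and since $x_0'$ sits directly next to $x_0$, the interval stays contiguous; for $z$, it occurs only in $\mathcal{B}_{x_0'}$, a single node, hence trivially contiguous. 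Finally, the width: the only new bag is $\mathcal{B}_{x_0'} = \mathcal{B}_{x_0} \cup \{z\}$, which has size $|\mathcal{B}_{x_0}| + 1 \leq (\pw + 1) + 1$, so its width contribution is at most $\pw + 1$; all other bags have width at most $\pw$. Hence $\mathcal{P}'$ has width at most $\pw + 1$, and the construction clearly runs in polynomial time (a single insertion plus copying one bag).

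I do not anticipate a genuine obstacle here; the lemma is essentially bookkeeping. The only point requiring a little care is the contiguity axiom $(T.3)$ when $x_0$ is an internal node of the path: one must be sure that placing $x_0'$ on one designated side of $x_0$ (say, between $x_0$ and its successor) does not break the interval of any vertex that also lives in bags on the other side of $x_0$ — but this is immediate because $x_0'$ is adjacent to $x_0$, so any interval that contained $x_0$ and extended in either direction simply absorbs $x_0'$ at its $x_0$-end and remains an interval. Thus the main "work" is just stating the insertion precisely and confirming the size bound $|\mathcal{B}_{x_0}| + 1 \le \pw + 2$.
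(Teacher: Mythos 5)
Your proposal is correct and takes essentially the same approach as the paper: insert a single new node with bag $\mathcal{B}_{x_0} \cup \{z\}$ adjacent to a bag containing $X$, and observe that the width grows by at most one. Your write-up is in fact slightly more careful, as it explicitly verifies the contiguity axiom $(T.3)$, which the paper's proof leaves implicit.
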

\begin{proof}
Let $u$ be the vertex introduced by the star gadget in $G$. By the assumption of the
lemma, there exists a bag, say, $\mathcal{B}_{u}$ in the path decomposition $\widetilde{\mathcal{P}}$ of $H$ that contains $X$. We introduce a new bag $\mathcal{B}_{u'}=\mathcal{B}_{u}\cup \{u\}$, and insert $\mathcal{B}_{u'}$ into $\widetilde{\mathcal{P}}$ after the bag $\mathcal{B}_{u}$. Let us call this new decomposition $\mathcal{P'}$. Next, we claim that $\mathcal{P'}$ is a path decomposition of $G$. As $N(u)=X$, $\mathcal{B}_{u'}$
covers all the edges incident
on $u$, and the rest of the edges are already taken care of by the bags of $\widetilde{\mathcal{P}}$ (as all the bags present in $\widetilde{ \mathcal{P}}$ are also present in $\mathcal{P'}$), this proves our claim. Moreover, note that we increased the maximum size of bags by at most one. Thus, the width of the path decomposition
$\mathcal{P'}$ is at most $\pw +1$.   
\end{proof}

If we attach a star gadget to multiple vertex disjoint subsets of $H$, then by applying Lemma \ref{dm1} (multiple times), we have the following corollary.
\begin{corollary}\label{corrdm}
Let $H$ be a graph and let $G$ be the graph constructed from $H$ by adding star gadgets to vertex disjoint subsets $X_{1},\ldots, X_{l}$ of $V(H)$. Assume we are given a path decomposition $\widetilde{\mathcal{P}}$ of $H$ of width $\pw$ with the following property: For each $X_{i}$, $i\in [l]$, there exists a bag in $\widetilde{\mathcal{P}}$ that contains $X_{i}$. Then, in polynomial
time, we can construct a path decomposition of $G$ of width at most $\pw +1$.
\end{corollary}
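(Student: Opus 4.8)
The plan is to iterate the construction of Lemma~\ref{dm1} once for each subset $X_i$. Formally, set $G_0 = H$ and, for $i \in [l]$, let $G_i$ be the graph obtained from $G_{i-1}$ by adding a star gadget to $X_i$, introducing a new vertex $u_i$ with $N(u_i) = X_i$; thus $G_l = G$. I would build path decompositions $\mathcal{P}_0 = \widetilde{\mathcal{P}}, \mathcal{P}_1, \ldots, \mathcal{P}_l$ with $\mathcal{P}_i$ a path decomposition of $G_i$, and argue by induction on $i$ that each $\mathcal{P}_i$ has width at most $\pw + 1$.

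First I would record the invariant that makes the induction go through: every step of the Lemma~\ref{dm1} construction only \emph{inserts} a new bag and never modifies or deletes an existing one, so all bags of the original decomposition $\widetilde{\mathcal{P}}$ survive in every $\mathcal{P}_i$. In particular, for each $j$, the bag of $\widetilde{\mathcal{P}}$ that contains $X_j$ (guaranteed by hypothesis) is still a bag of $\mathcal{P}_i$ for every $i$. Hence, when passing from $\mathcal{P}_{i-1}$ to $\mathcal{P}_i$, I apply the construction of Lemma~\ref{dm1} to $X_i$ while choosing the ``$\mathcal{B}_u$'' of that proof to be a bag $\mathcal{B}^{(i)}$ \emph{of the original} $\widetilde{\mathcal{P}}$ with $X_i \subseteq \mathcal{B}^{(i)}$: I insert the new bag $\mathcal{B}^{(i)} \cup \{u_i\}$ immediately after $\mathcal{B}^{(i)}$.

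Then I would check the two things that need checking. (1)~Validity: edges of $G_{i-1}$ are covered by the bags of $\mathcal{P}_{i-1}$, which are all present in $\mathcal{P}_i$; the new edges $u_i x$ with $x \in X_i$ are covered by $\mathcal{B}^{(i)} \cup \{u_i\}$ since $X_i \subseteq \mathcal{B}^{(i)}$; and the subpath-connectivity condition holds because the new bag is a copy of $\mathcal{B}^{(i)}$ with the single extra vertex $u_i$, inserted directly adjacent to $\mathcal{B}^{(i)}$, so no existing vertex's occurrence-interval is broken and $u_i$ occupies exactly one bag. (2)~Width: $|\mathcal{B}^{(i)}| \le \pw + 1$ since $\mathcal{B}^{(i)}$ is a bag of $\widetilde{\mathcal{P}}$, so the new bag has size at most $\pw + 2$; every other bag of $\mathcal{P}_i$ is either a bag of $\widetilde{\mathcal{P}}$ (size $\le \pw + 1$) or a bag introduced in an earlier step (size $\le \pw + 2$ by the induction hypothesis). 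Hence $\mathcal{P}_i$ has width at most $\pw + 1$. Since there are $l$ steps, each clearly polynomial-time, and $l$ is polynomial in the input size, the whole construction is polynomial, and $\mathcal{P}_l$ is the desired path decomposition of $G$.

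The main obstacle --- and the reason this is not literally a one-line ``apply Lemma~\ref{dm1} $l$ times'' --- is the width bookkeeping: invoking Lemma~\ref{dm1} as a black box would yield width $\pw + 1$ after the first gadget, $\pw + 2$ after the second, and so on, an unacceptable $\pw + l$. The fix, as above, is to observe that every newly inserted bag is built from a bag of the \emph{original} decomposition $\widetilde{\mathcal{P}}$, not from an accumulated one, so it never exceeds size $\pw + 2$; equivalently, one may simply perform all $l$ insertions simultaneously on $\widetilde{\mathcal{P}}$ (inserting the bags for $X_i$'s that lie in a common original bag consecutively right after it) and verify the path-decomposition axioms directly.
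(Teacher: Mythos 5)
Correct, and essentially the paper's approach: the paper's own justification is the single remark ``by applying Lemma~\ref{dm1} (multiple times).'' Your write-up supplies the bookkeeping detail that this remark hides --- each newly inserted bag extends a bag of the \emph{original} decomposition $\widetilde{\mathcal{P}}$, which survives unchanged since each application of Lemma~\ref{dm1} only inserts a bag and never modifies one --- so the width stays at $\pw+1$ rather than drifting to $\pw+l$.
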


Now, consider the following construction.
\subsection {Construction}\label{dmc1}

Let $P_{i}=\{i\}\times [k]$ be a set containing all elements in the $i$-th row in the set $[k]\times[k]$. We define $\mathcal{S}=\{S_{s}:s\in [m]\} \cup \{P_{i}:i \in [k]\}$. Note that for each $X\in \mathcal{S}$, we have $|X|\leq k$, as each $S_{s}$, $s\in [m]$ contains at most one element
from each row and $|P_{i}|=k$ for each $i\in[k]$.

First, let us define a graph $H$. We start by introducing vertices $v_{i}^{L}$
for each $i\in [k]$ and vertices $v_{j}^{R}$
for each $j\in [k]$.
Then, for each set $X\in \mathcal{S}$, we introduce vertices $v_{i,j}^{X}$ for every $(i,j)\in X$. Let $V^{X}=\{v_{i,j}^{X}:(i,j)\in X\}$. We also introduce the edge set $\{v_{i}^{L}v_{i,j}^{X}\}\cup \{v_{i,j}^{X}v_{j}^{R}\}$ for each $X\in \mathcal{S}$ and $i,j\in [k]$. This ends the construction of $H$.

Now, we construct a graph $G$ from the graph $H$ as follows: For each $i\in [k]$ and $j\in [k]$, we attach
star gadgets to vertices $v_{i}^{L}$
and $v_{j}^{R}$. Furthermore, for each $X\in \mathcal{S}$, we attach star gadgets to $X$. For each $i\in [k]$ (resp. $j\in [k]$), let $u_{i}^{L}$ (resp. $u_{j}^{R}$) denote the unique vertex in the star gadget corresponding to $v_{i}^{L}$  (resp. $v_{j}^{R}$).  For each $X\in \mathcal{S}$, let $u^{X}$ denote the unique vertex in the star gadget corresponding to $X$. Let $E^{X}=\{v_{i,j}^{X}u^{X}:(i,j)\in X\}$. See Figure \ref{fig2} for an illustration of the construction of $G$ from $H$.

\begin{figure}[t]
    \centering
    \includegraphics[scale=0.68]{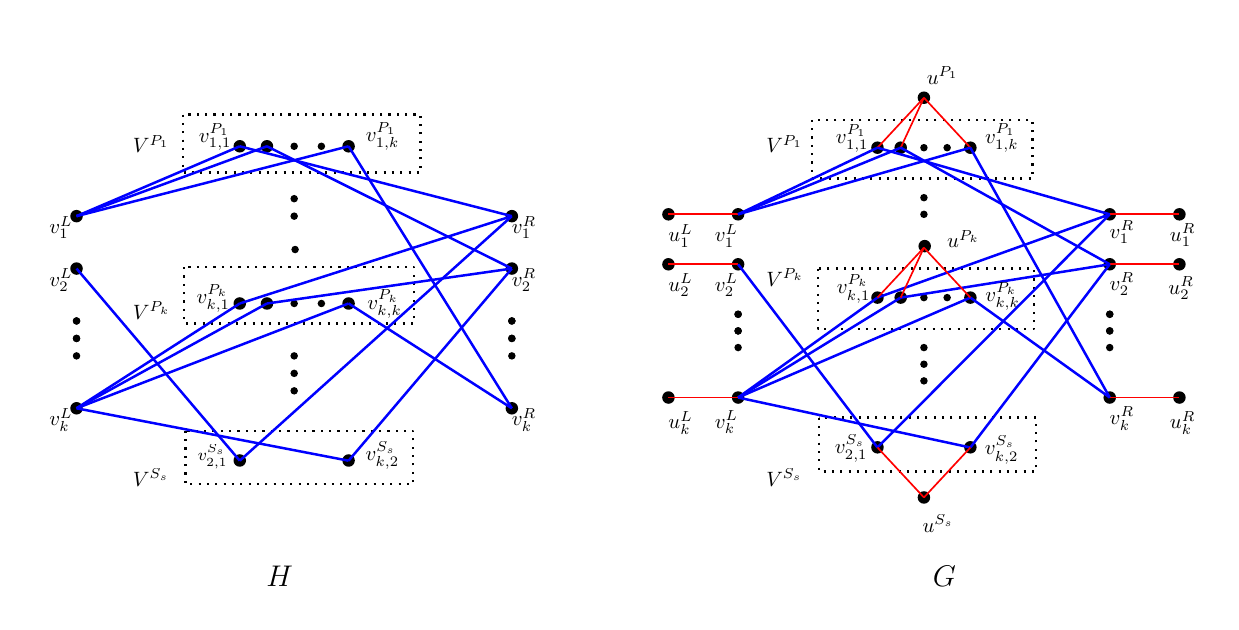}
    \caption{An illustration of the construction of $G$ from $H$. In this example, we assume that $S_{s}=\{(2,1),(k,2)\}.$ }
    \label{fig2}
\end{figure}

We now provide a pathwidth bound on $G$.

\begin{lemma}\label{dm2} Let $H$ and $G$ be as defined in Construction \ref{dmc1}. Then, the pathwidth of $G$ is at most $3k$.
\end{lemma}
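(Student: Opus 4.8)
I want to exhibit an explicit path decomposition of $G$ of width at most $3k$. Since $G$ is obtained from $H$ by attaching star gadgets, the strategy is to first build a path decomposition of $H$ in which, for every set $X \in \mathcal{S}$, and for every vertex $v_i^L$ and $v_j^R$ to which a star gadget is attached, there is a bag containing all of $X$ (resp., the single vertex $v_i^L$ or $v_j^R$); then Corollary~\ref{corrdm} upgrades this to a path decomposition of $G$ while paying only $+1$ in width. The only subtlety is that star gadgets are attached both to the $V^X$'s and to the singletons $\{v_i^L\}, \{v_j^R\}$; these need not be pairwise disjoint (all the $v_i^L$'s lie in many bags), but singletons are trivially handled — adding a degree-$1$ pendant to a vertex $w$ only requires a bag containing $w$, which we may insert anywhere — so I will treat the singleton gadgets first and then apply Corollary~\ref{corrdm} to the pairwise-disjoint family $\{V^X : X \in \mathcal{S}\}$.

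\textbf{Constructing the path decomposition of $H$.} The core vertices $v_1^L,\dots,v_k^L,v_1^R,\dots,v_k^R$ form a ``spine'' set of size $2k$ that I will keep in every bag. Enumerate $\mathcal{S} = \{X_1, X_2, \dots, X_N\}$ (where $N = m + k$). For each $X_t$, create one bag
\[
\mathcal{B}_t \;=\; \{v_1^L,\dots,v_k^L\} \cup \{v_1^R,\dots,v_k^R\} \cup V^{X_t},
\]
and arrange $\mathcal{B}_1, \mathcal{B}_2, \dots, \mathcal{B}_N$ as a path in this order. Since $|V^{X_t}| = |X_t| \le k$, each bag has size at most $2k + k = 3k$, so the width of this decomposition is at most $3k - 1$. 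I must check the three tree-decomposition axioms: (T.1) every vertex of $H$ appears — the core vertices are everywhere, and each $v_{i,j}^{X_t} \in \mathcal{B}_t$; (T.2) every edge is covered — an edge $v_i^L v_{i,j}^{X_t}$ or $v_{i,j}^{X_t} v_j^R$ has both endpoints in $\mathcal{B}_t$, and there are no other edges in $H$; (T.3) connectivity of the bags containing any fixed vertex — the core vertices are in all bags (an interval), and each $v_{i,j}^{X_t}$ lies only in $\mathcal{B}_t$ (a single bag, trivially an interval). So this is a valid path decomposition of $H$ of width $\le 3k-1$, and moreover each $V^{X_t}$ is entirely contained in the bag $\mathcal{B}_t$, and each singleton $\{v_i^L\}$, $\{v_j^R\}$ is contained in every bag.

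\textbf{Adding the star gadgets.} The family $\{V^{X} : X \in \mathcal{S}\}$ is pairwise vertex-disjoint by construction (the vertices $v_{i,j}^X$ carry the superscript $X$), and for each of them there is a bag of $\widetilde{\mathcal{P}}$ containing it, so Corollary~\ref{corrdm} gives a path decomposition of the graph $H'$ obtained from $H$ by attaching the star gadgets $u^X$ to each $V^X$, of width at most $(3k-1)+1 = 3k$. For the remaining $2k$ singleton gadgets attached to $v_i^L$ and $v_j^R$: since each $v_i^L$ (and each $v_j^R$) sits in every bag, I can repeatedly apply Lemma~\ref{dm1} — or simply observe directly that inserting, right after an arbitrary bag $\mathcal{B}$, a new bag $\mathcal{B} \cup \{u_i^L\}$ (and similarly for the $u_j^R$'s, and these new bags may even be chained) covers all edges of the pendant vertices and increases the bag size by at most one over the current maximum. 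The key point is that the pendant vertices $u_i^L, u_j^R, u^X$ each appear in exactly one (new) bag and so never threaten axiom (T.3); and because they are degree-one/degree-$|X|$ gadget vertices with no edges among themselves, no bag ever needs to contain two of them simultaneously beyond what a single insertion forces. Hence the final width remains at most $3k$, and as the whole construction is clearly polynomial-time, the lemma follows.

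\textbf{Expected main obstacle.} The routine part is the axiom check; the one place that needs care is making sure the \emph{simultaneous} addition of all $2k + N$ star-gadget vertices does not blow up the width beyond $3k$. The cleanest way to be safe is to apply Corollary~\ref{corrdm} once to the disjoint family $\{V^X\}$ (width $3k$) and then note that the singleton gadgets can be absorbed by creating, for each one, its own dedicated new bag of the form $(\text{core set}) \cup \{\text{one gadget vertex}\}$ of size $2k+1 \le 3k$, inserted next to any existing bag — so no bag ever holds more than one gadget vertex at a time and the $3k$ bound is never exceeded. I would phrase the proof around that observation.
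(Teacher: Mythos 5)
Your proposal is correct and uses essentially the same approach as the paper: the identical path decomposition of $H$ (bags $\mathcal{B}_X = V^X \cup \{v_i^L : i\in[k]\} \cup \{v_j^R : j\in[k]\}$ in arbitrary order, width $\le 3k-1$) followed by Corollary~\ref{corrdm} to absorb the star gadgets. The only divergence is that you split the gadget step into two phases out of a worry that the target sets ``need not be pairwise disjoint,'' but that worry is unfounded: the singletons $\{v_i^L\}$, $\{v_j^R\}$ and the sets $V^X$ are all pairwise vertex-disjoint (the singletons are distinct core vertices and the $V^X$'s consist solely of $v_{i,j}^X$-type vertices), so the paper simply applies Corollary~\ref{corrdm} once to the whole family and is done.
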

\begin{proof}
First, consider the following path decomposition of $H$. For each $X\in \mathcal{S}$, we create a bag
 $$\mathcal{B}_{X}= V^{X} \cup \{v_{i}^{L}:i\in [k]\}\cup \{v_{j}^{R}:j\in [k]\}.$$

The path decomposition $\mathcal{\widetilde{P}}$ of $H$ consists of all bags $\mathcal{B}_{X}$ for $X\in \mathcal{S}$ in an arbitrary order. As $|X|\leq k$ for each $X\in \mathcal{S}$, the width of $\mathcal{\widetilde{P}}$ is at most $3k-1$. Note that in $\mathcal{\widetilde{P}}$, for every subset $B$ of $V(H)$, where a star gadget is attached, there exists a bag containing $B$. Therefore, by Corollary \ref{corrdm}, the width of $G$ is at most $3k$. Hence, the pathwidth of $G$ is at most $3k$.  
\end{proof}
\subsection {From Hitting Set to Disconnected Matching}\label{hitting}
\begin{lemma}\label{dm3} Let $G$ be as defined in Construction \ref{dmc1}. If the initial \textsc{$k \times k$ Hitting Set} instance is a Yes-instance, then there exists a matching $M$ in $G$ such that $|M|=3k+m$ and $G[V_{M}]$ has exactly $k$ connected components.
\end{lemma}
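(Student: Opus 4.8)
The plan is to build the desired matching directly from a solution $\widehat S$ of the \textsc{$k\times k$ Hitting Set} instance. Since $\widehat S$ contains exactly one element of each row, write $\widehat S=\{(i,\sigma(i)):i\in[k]\}$ for a (not necessarily injective) function $\sigma\colon[k]\to[k]$; and since $\widehat S$ is a hitting set, for every $s\in[m]$ fix a witness $(i_s,j_s)\in\widehat S\cap S_s$, so that $j_s=\sigma(i_s)$, $(i_s,j_s)\in S_s$, and hence the vertex $v_{i_s,j_s}^{S_s}$ and the edge $v_{i_s,j_s}^{S_s}u^{S_s}\in E^{S_s}$ both exist in $G$. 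I would then let $M$ be the union of four groups of edges: $\{v_i^Lu_i^L:i\in[k]\}$, $\{v_j^Ru_j^R:j\in[k]\}$, $\{u^{P_i}v^{P_i}_{i,\sigma(i)}:i\in[k]\}$ (these edges exist since $(i,\sigma(i))\in P_i$), and $\{u^{S_s}v_{i_s,j_s}^{S_s}:s\in[m]\}$. A short case check on the endpoint names shows $M$ is a matching: each vertex $v_i^L,u_i^L,v_j^R,u_j^R,u^X$ appears in at most one listed edge, and, using that $P_i\neq S_s$ for all $i,s$ when $k\ge 2$ (as $P_i$ has $k$ elements in a single row while $S_s$ has at most one per row), each vertex $v^{P_i}_{i,\sigma(i)}$ and each $v_{i_s,j_s}^{S_s}$ appears in exactly one. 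The four groups contribute $k+k+k+m=3k+m$ edges, so $|M|=3k+m$.

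The core of the proof is the claim $cc(G[V_M])=k$, which I would establish by an incremental (edge-by-edge) analysis of $G[V_M]$. The first two groups create $2k$ pairwise disjoint two-vertex components: the \emph{row pairs} $\{v_i^L,u_i^L\}$ and the \emph{column pairs} $\{v_j^R,u_j^R\}$ (note $u_i^L$ and $u_j^R$ have degree one). Next, each third-group edge $u^{P_i}v^{P_i}_{i,\sigma(i)}$ adds two new vertices, and since $v^{P_i}_{i,\sigma(i)}$ is adjacent to $v_i^L$ and to $v_{\sigma(i)}^R$, it glues the still-untouched row pair $i$ to the current component containing the column pair $\sigma(i)$; as the row pair $i$ is a separate component immediately before this edge is processed, each of these $k$ edges decreases the number of components by exactly one, taking the count from $2k$ down to $k$. (This stays correct when $\sigma$ is not injective: several row pairs may end up merged into the same column pair, but every such merge still eliminates exactly one component, namely the fresh row pair being attached.) Finally, for each fourth-group edge $u^{S_s}v_{i_s,j_s}^{S_s}$, the vertex $v_{i_s,j_s}^{S_s}$ is adjacent to $v_{i_s}^L$ and to $v_{j_s}^R=v_{\sigma(i_s)}^R$, and both of these already lie in the \emph{same} component (the one into which row pair $i_s$ was merged with column pair $\sigma(i_s)=j_s$ at the previous stage), so this edge and its two new endpoints simply join an existing component and the count is unchanged. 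Hence $cc(G[V_M])=k$.

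The main obstacle I anticipate is exactly this connected-components bookkeeping, and in particular making the incremental argument airtight when $\sigma$ is not a permutation. Two points need care: that the third-group edges produce a net reduction of exactly $k$ with no "wasted" merges (which holds because the row pair attached by each such edge is always a fresh, previously isolated component), and that no fourth-group edge ever creates a new component or merges two distinct ones (which hinges on $v_{i_s}^L$ and $v_{\sigma(i_s)}^R$ being already connected — precisely the property supplied by the third-group edge for row $i_s$). Everything else — the existence of the named vertices and edges, the matching property, and the count $|M|=3k+m$ — is routine and reads off directly from Construction~\ref{dmc1}.
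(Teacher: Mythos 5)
Your proof is correct and constructs exactly the same matching $M$ as the paper's: the first two groups are the Type-I edges, and your third and fourth groups together are precisely the paper's $\{v_{i_X,j_X}^{X}u^{X}:X\in\mathcal{S}\}$, since $\widehat S\cap P_i=\{(i,\sigma(i))\}$ and your witness $(i_s,j_s)$ plays the role of $(i_{S_s},j_{S_s})$. The difference is only in how the component count is verified. The paper exhibits an explicit partition of the edges of $M$ into $k$ classes $C_j$ indexed by column, namely $C_{j}=\{u_{j}^{R}v_{j}^{R}\}\cup \{u_{i}^{L}v_{i}^{L}:(i,j)\in \widehat{S}\}\cup \{v_{i_{X},j_{X}}^{X}u^{X}:X\in \mathcal{S}, j_{X}=j\}$, and shows each $G[C_j]$ is connected and the $C_j$'s are pairwise vertex-disjoint; the non-injective case is handled automatically because a column $j$ not in the image of $\sigma$ yields $C_j=\{u_j^R v_j^R\}$. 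You instead run an incremental count: $2k$ pairs after Type-I, then each $P_i$-star edge merges exactly one component (the still-fresh row pair $i$) into a column component, then each $S_s$-star edge attaches inside the component that already joins $v_{i_s}^L$ and $v_{\sigma(i_s)}^R$. Both are valid; the partition argument is a bit more self-contained (each piece is checked once, with no ordering to maintain), while your incremental argument requires, and you correctly supply, the observation that the row pair being merged is always a previously untouched component, so no merges are wasted even when $\sigma$ is not a permutation.
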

\begin{proof}
Let $\widehat{S}$ be a solution to the initial \textsc{$k \times k$ Hitting Set} problem instance $(k,S_{1},\ldots, S_{m})$. For each $X\in \mathcal{S}$, fix an element
$(i_{X},j_{X})\in \widehat{S}\cap X$. Since $\widehat{S}$ is a solution, it is clear that $\widehat{S}\cap S_{s} \neq \emptyset$ for each $s\in[m]$. Also, as $\widehat{S}$ contains exactly one element from each row, $\widehat{S}\cap P_{i} \neq \emptyset$ for each $i\in [k]$. Let
us define a matching $M$ in $G$ as follows:
$$ M=\{u_{i}^{L}v_{i}^{L}:i\in [k]\}\cup \{u_{j}^{R}v_{j}^{R}:j\in [k]\}\cup \{v_{i_{X},j_{X}}^{X}u^{X}:X\in \mathcal{S}\}.$$

First, note that $|M|=3k+m$, as there are $k$ vertices $v_{i}^{L}$ ($i\in [k]$), $k$ vertices $v_{j}^{R}$ ($j\in [k]$), and $|\mathcal{S}|=k+m$ (as $\mathcal{S}$ consists of $m$ sets $S_s$ ($s\in [m]$) and $k$ sets $P_{i}$ ($i\in [k]$)). As the endpoints of all the edges in $M$ are distinct, $M$ is a matching in $G$. Now, it remains to show that $G[V_{M}]$ contains exactly $k$ connected components. 
For this purpose, for each $j\in [k]$, let us define 

$$C_{j}=\{u_{j}^{R}v_{j}^{R}\}\cup \{u_{i}^{L}v_{i}^{L}:(i,j)\in \widehat{S}\}\cup \{v_{i_{X},j_{X}}^{X}u^{X}:X\in \mathcal{S}, j_{X}=j\}.$$

For some fixed $j\in [k]$, if $(i,j) \notin \widehat{S}$ for any $i\in [k]$, then $C_{j}=\{u_{j}^{R}v_{j}^{R}\}$, and hence connected. Otherwise, as $\{v_{j}^{R}v_{i_{X},j_{X}}^{X}:X\in \mathcal{S}, j_{X}=j\} \cup \{v_{i}^{L}v_{i_{X},j_{X}}^{X}: (i,j)\in \widehat{S}, X\in \mathcal{S}, j_{X}=j\}\subseteq E(G)$, $G[C_{j}]$ is a connected graph. Therefore, $G[C_{j}]$ is connected for each $j\in [k]$. Also, since $\widehat{S}$
contains exactly one element from each row, $G[C_{j}]$ and $G[C_{j'}]$ are disjoint for $j\neq j'$. Thus, $G[V_{M}]$ has exactly $k$ connected components.  
\end{proof}
\subsection {From Disconnected Matching to Hitting Set}\label{disc}
Let $G$ be as defined in Construction \ref{dmc1}. First, we partition the edges of $G$ into the following three types (see Figure \ref{fig3} for an illustration):
\begin{itemize}
    \item \textcolor{red}{Type-I} $=\{v_{i}^{L}u_{i}^{L}:i\in [k]\}\cup \{v_{j}^{R}u_{j}^{R}:j\in [k]\}$.
    \item \textcolor{teal}{Type-II} $=\{v_{i,j}^{X}u^{X}:X\in \mathcal{S}\}.$
    \item \textcolor{blue}{Type-III} $=\{v_{i}^{L}v_{i,j}^{X}: i\in [k], X\in \mathcal{S}\} \cup \{v_{j}^{R}v_{i,j}^{X}: j\in [k], X\in \mathcal{S}\}$.
\end{itemize}

\begin{figure}[t]
    \centering
    \includegraphics[scale=0.9]{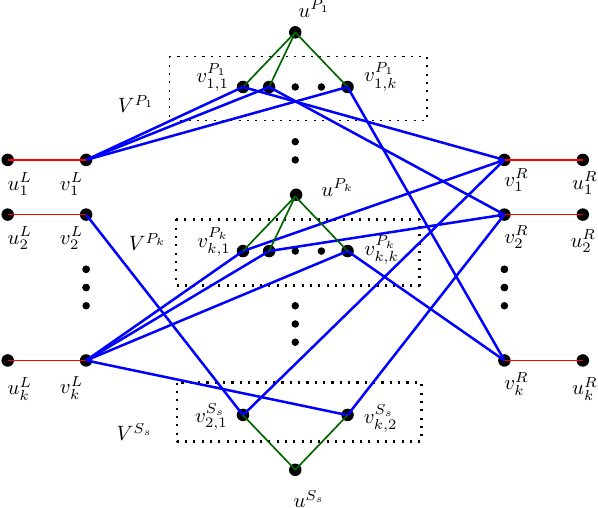}
    \caption{ Type-I, Type-II, and Type-III edges are represented by red, green, and blue colors, respectively. }
    \label{fig3}
\end{figure}
\medskip

From the definition of matching and the definitions of Type-I, Type-II, and Type-III edges, we have the following observation.
\begin {observation}\label{dmobs1}
Let $G$ be as defined in Construction \ref{dmc1}. For any matching $M$ in $G$, at most $2k$ edges from Type-I, at most $k+m$ edges from Type-II, and at most $2k$ edges from Type-III belong to $M$.  Furthermore, at most $2k$ edges from Type-I and Type-III combined belong to $M$.
\end {observation}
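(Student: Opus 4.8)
The plan is to derive all four bounds from a single elementary principle: in any matching, the number of selected edges incident to a fixed set $S$ of vertices is at most $|S|$, since each vertex of $S$ is saturated by at most one matching edge and each such edge gets counted. Combined with straightforward bookkeeping of the vertex classes produced by Construction~\ref{dmc1}, this will yield everything.

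First I would handle Type-I: by construction the set of Type-I edges is exactly $\{v_{i}^{L}u_{i}^{L}:i\in[k]\}\cup\{v_{j}^{R}u_{j}^{R}:j\in[k]\}$, which has cardinality $2k$, so trivially at most $2k$ of them can lie in $M$. Next, for Type-II, observe that every Type-II edge $v_{i,j}^{X}u^{X}$ is incident to the star-gadget vertex $u^{X}$, and that $u^{X}\neq u^{X'}$ for distinct $X,X'\in\mathcal{S}$. Since $M$ is a matching, at most one edge incident to each $u^{X}$ belongs to $M$; summing over $X\in\mathcal{S}$ and using $|\mathcal{S}|=m+k$ gives the bound $k+m$. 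For Type-III, I would introduce the $2k$-vertex set $W=\{v_{i}^{L}:i\in[k]\}\cup\{v_{j}^{R}:j\in[k]\}$ and note that every Type-III edge has exactly one endpoint in $W$ (the other endpoint being some $v_{i,j}^{X}\notin W$); mapping each Type-III edge of $M$ to its endpoint in $W$ is injective because $M$ is a matching, so at most $|W|=2k$ Type-III edges lie in $M$.

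Finally, for the combined Type-I/Type-III bound, I would observe that every Type-I edge is also incident to a vertex of $W$ (namely $v_{i}^{L}u_{i}^{L}$ to $v_{i}^{L}$ and $v_{j}^{R}u_{j}^{R}$ to $v_{j}^{R}$), and again this endpoint is the unique endpoint lying in $W$ since $u_{i}^{L},u_{j}^{R}\notin W$. Hence every edge of $M$ that is of Type-I or Type-III has exactly one endpoint in $W$, and the same injective-assignment argument shows that at most $2k$ such edges belong to $M$.

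I do not anticipate a real obstacle: the statement is essentially a degree-counting observation. The only point requiring mild care is confirming that the auxiliary vertices $u_{i}^{L}$, $u_{j}^{R}$, $u^{X}$ and the vertices $v_{i,j}^{X}$ are genuinely pairwise distinct and lie outside $W$, so that none of the injective maps above secretly double-counts; this is immediate from the fact that $G$ is obtained from $H$ by attaching vertex-disjoint star gadgets, each contributing one fresh vertex.
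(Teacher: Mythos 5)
Your proof is correct, and since the paper states this as an observation following immediately from the definitions (without giving an explicit argument), it is essentially the same degree-counting approach the paper intends. The unified bookkeeping via the set $W=\{v_{i}^{L}:i\in[k]\}\cup\{v_{j}^{R}:j\in[k]\}$, from which all four bounds drop out, is exactly the right way to see it.
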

\begin{lemma} \label{dm4} Let $G$ be as defined in Construction \ref{dmc1}. If there exists a matching $M$ in $G$ such that $|M|\geq 3k+m$ and $G[V_{M}]$ has at least $k$ connected components, then there exists a matching $\widetilde{M}$ in $G$ such that $|\widetilde{M}|\geq 3k+m$, $G[V_{\widetilde{M}}]$ has at least $k$ connected components, and $\widetilde{M}$ contains edges from Type-I and Type-II only.
\end{lemma}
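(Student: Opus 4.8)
The plan is to first use a counting argument to expose the rigid structure forced by the hypotheses, then to replace every Type-III edge of $M$ by a pendant (Type-I) edge while keeping all Type-II edges, and finally to check that this surgery cannot decrease the number of connected components.

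First I would invoke Observation \ref{dmobs1}: $M$ contains at most $2k$ edges from Type-I and Type-III combined and at most $k+m$ edges from Type-II, so $|M|\le 2k+(k+m)=3k+m$. Together with $|M|\ge 3k+m$ this forces $|M|=3k+m$, with exactly $k+m$ edges of Type-II and exactly $2k$ edges from Type-I $\cup$ Type-III. Since $|\mathcal{S}|=k+m$ and each vertex $u^{X}$ meets only Type-II edges, every $u^{X}$ is $M$-saturated; write $w^{X}$ for its $M$-mate, necessarily some $v_{i,j}^{X}$ with $(i,j)\in X$. Likewise, each edge of Type-I $\cup$ Type-III saturates exactly one vertex among the $2k$ vertices $\{v_{i}^{L}:i\in[k]\}\cup\{v_{j}^{R}:j\in[k]\}$, and a matching saturates each of them at most once, so all of them are $M$-saturated. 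I then set
\[
\widetilde{M}=\{v_{i}^{L}u_{i}^{L}:i\in[k]\}\cup\{v_{j}^{R}u_{j}^{R}:j\in[k]\}\cup\{w^{X}u^{X}:X\in\mathcal{S}\}.
\]
All listed endpoints are pairwise distinct, so $\widetilde{M}$ is a matching of size $2k+|\mathcal{S}|=3k+m$, and by construction it consists only of Type-I and Type-II edges.

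It remains to show $cc(G[V_{\widetilde{M}}])\ge k$, which I would obtain from $cc(G[V_{\widetilde{M}}])\ge cc(G[V_{M}])\ge k$. Let $Z$ be the set of internal vertices $v_{i,j}^{X}$ that are $M$-saturated via a Type-III edge (equivalently, the $M$-saturated internal vertices other than the $w^{X}$'s); then $V_{M}\setminus Z=\{v_{i}^{L}:i\in[k]\}\cup\{v_{j}^{R}:j\in[k]\}\cup\{u^{X}:X\in\mathcal{S}\}\cup\{w^{X}:X\in\mathcal{S}\}\cup P=:V'$, where $P$ is the set of $M$-saturated pendant vertices among the $u_{i}^{L}$'s and $u_{j}^{R}$'s. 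Deleting the vertices of $Z$ from $G[V_{M}]$ one at a time never decreases the number of connected components: each $v_{i,j}^{X}\in Z$ is adjacent in $G$ to $v_{i}^{L}$ and $v_{j}^{R}$, which lie in $V'$ and are therefore never deleted, so the vertex being removed is non-isolated at that moment, and deleting a non-isolated vertex from a graph cannot decrease its number of components. Hence $cc(G[V'])\ge cc(G[V_{M}])$. Finally $V'\subseteq V_{\widetilde{M}}$, and every vertex of $V_{\widetilde{M}}\setminus V'$ is a pendant $u_{i}^{L}$ or $u_{j}^{R}$ whose unique neighbour ($v_{i}^{L}$ or $v_{j}^{R}$) already lies in $V'$; adding such pendants does not change the component count, so $cc(G[V_{\widetilde{M}}])=cc(G[V'])\ge cc(G[V_{M}])\ge k$.

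I expect the main obstacle to be the last step: ensuring that swapping Type-III edges for pendant Type-I edges does not merge two connected components into one. This works because every internal vertex $v_{i,j}^{X}$ retains both of its ``outer'' neighbours $v_{i}^{L},v_{j}^{R}$ throughout the deletion process (so no step deletes an isolated vertex), combined with the elementary monotonicity fact that removing a non-isolated vertex cannot decrease the number of connected components; everything else is bookkeeping with the counting bounds of Observation \ref{dmobs1}.
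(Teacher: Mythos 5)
Your proof is correct and follows the same core strategy as the paper: replace every Type-III edge of $M$ by the corresponding Type-I pendant edge, keep all Type-II edges, and argue the component count cannot decrease. You add two things the paper leaves implicit: a counting argument (via Observation \ref{dmobs1}) that pins down $|M|=3k+m$ and saturation of all outer vertices $v_i^L,v_j^R$ and all hubs $u^X$, which lets you write $\widetilde{M}$ in closed form; and a cleaner monotonicity argument for the component count, namely first deleting the Type-III-saturated internal vertices $Z$ one at a time (each non-isolated, since its outer $M$-mate survives in $V'$) and then attaching pendants, rather than the paper's terser ``replacing a non-pendant vertex with a pendant vertex'' remark. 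A minor observation: for the deletion step you only need, for each $v_{i,j}^X\in Z$, the single outer vertex that is its $M$-mate to lie in $V'$, so full saturation of both $v_i^L$ and $v_j^R$ is slightly more than required there, though you do need all $u^X$ saturated to build $\widetilde{M}$.
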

\begin{proof}
Let $M$ be a matching in $G$ such that $|M|\geq 3k+m$ and $G[V_{M}]$ has at least $k$ connected components. If $M$ contains edges from Type-I and Type-II only, then we are done. So assume that $M$ contains at least one edge from Type-III. Without loss of generality, let $v_{q}^{L}v_{q,j}^{X}\in M$ for some fixed (but arbitrary) $q,j\in [k]$ and $X\in \mathcal{S}$. Observe that $u_{q}^{L}$ is not saturated by $M$ as its only neighbor ($v_{q}^{L}$) is already saturated. Next, define $\widetilde{M}=(M\setminus v_{q}^{L}v_{q,j}^{X})\cup \{v_{q}^{L}u_{q}^{L}\}$. It is easy to see that $\widetilde{M}$ is a matching, and $|\widetilde{M}|=|M|$. Do this for every Type-III edge in $M$, that is, replace every Type-III edge in $M$ with its corresponding Type-I edge. As there are at most $2k$ Type-III edges in $G$, this can be done in polynomial time. By abuse of notation, let us call the matching so obtained $\widetilde{M}$. Now, it remains to prove that the number of connected components in $G[V_{\widetilde{M}}]$ is at least $k$. In each iteration, since we are replacing a non-pendant vertex with a pendant vertex, the number of connected components in $G[V_{\widetilde{M}}]$ cannot decrease in comparison with the number of connected components present in $G[V_{M}]$. Hence, the proof is complete.  
\end{proof}

\begin{lemma} \label{dm5} Let $G$ be as defined in Construction \ref{dmc1}. If there exists a matching $M$ in $G$ such that $|M|\geq 3k+m$ and $G[V_{M}]$ has at least $k$ connected components, then the initial \textsc{$k \times k$ Hitting Set} instance is a Yes-instance.
\end{lemma}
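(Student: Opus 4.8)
The plan is to first invoke Lemma~\ref{dm4} to assume a nice structure on the matching, then pin down exactly which edges it uses by a tight counting argument, and finally analyze the connected components of $G[V_{\widetilde M}]$ by collapsing it to a small bipartite auxiliary graph. Concretely, given $M$ as in the hypothesis, Lemma~\ref{dm4} yields a matching $\widetilde M$ with $|\widetilde M|\geq 3k+m$, with at least $k$ connected components in $G[V_{\widetilde M}]$, and using only Type-I and Type-II edges. By Observation~\ref{dmobs1}, $\widetilde M$ has at most $2k$ Type-I and at most $k+m$ Type-II edges, so $|\widetilde M|\geq 3k+m$ forces both bounds to be tight. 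Since $G$ has exactly $2k$ Type-I edges in total, all of them are in $\widetilde M$; in particular every $v_i^L,u_i^L,v_j^R,u_j^R$ is $\widetilde M$-saturated. Likewise, since there are exactly $k+m$ vertices $u^X$ (one per $X\in\mathcal S$) and each is matched at most once, for every $X\in\mathcal S$ exactly one Type-II edge $v_{i_X,j_X}^X u^X$ with $(i_X,j_X)\in X$ lies in $\widetilde M$. I then define $\widehat S=\{(i_{P_i},j_{P_i}):i\in[k]\}$; since $(i_{P_i},j_{P_i})\in P_i=\{i\}\times[k]$ we have $i_{P_i}=i$, so $\widehat S$ contains exactly one element from each row, and it remains only to show $\widehat S\cap S_s\neq\emptyset$ for every $s\in[m]$.

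The key step is to compute the number of components of $G[V_{\widetilde M}]$. Each $u_i^L$, $u_j^R$, and $u^X$ is a pendant vertex (on $v_i^L$, $v_j^R$, and $v_{i_X,j_X}^X$ respectively), and each $v_{i_X,j_X}^X$ lies on a path $v_{i_X}^L\,v_{i_X,j_X}^X\,v_{j_X}^R$ carrying that pendant; hence, after contracting pendants and suppressing the resulting degree-two vertices, $cc(G[V_{\widetilde M}])$ equals $cc(\Gamma)$, where $\Gamma$ is the bipartite graph on $\{L_1,\dots,L_k\}\cup\{R_1,\dots,R_k\}$ with one edge $L_{i_X}R_{j_X}$ for each $X\in\mathcal S$. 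Let $\Gamma'$ be the spanning subgraph of $\Gamma$ that keeps only the $k$ edges $L_iR_{j_{P_i}}$ arising from the sets $P_i$. In $\Gamma'$ every $L_i$ has degree exactly one, so $\Gamma'$ is a disjoint union of stars centered at the R-vertices occurring as some $j_{P_i}$, together with the remaining isolated R-vertices; counting yields $cc(\Gamma')=k$ for any choice of the values $j_{P_i}$. Since $\Gamma\supseteq\Gamma'$, we get $cc(\Gamma)\leq k$, and combined with $cc(G[V_{\widetilde M}])=cc(\Gamma)\geq k$ this forces $cc(\Gamma)=cc(\Gamma')=k$, i.e.\ none of the remaining edges of $\Gamma$ (those coming from the sets $S_s$) merges two components of $\Gamma'$.

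Finally, fix $s\in[m]$ and look at the edge $L_{i_{S_s}}R_{j_{S_s}}$ of $\Gamma$. Its endpoint $L_{i_{S_s}}$ lies in the star of $\Gamma'$ centered at $R_{j_{P_{i_{S_s}}}}$, and the only R-vertex in that star is its center; since adding this edge merges no components of $\Gamma'$, the vertex $R_{j_{S_s}}$ must lie in the same star, which forces $j_{S_s}=j_{P_{i_{S_s}}}$. Therefore $(i_{S_s},j_{S_s})=(i_{S_s},j_{P_{i_{S_s}}})\in\widehat S$, while $(i_{S_s},j_{S_s})\in S_s$ by construction, so $\widehat S\cap S_s\neq\emptyset$; as this holds for all $s$, the original $k\times k$ Hitting Set instance is a Yes-instance. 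The main obstacle I anticipate is the bookkeeping in passing from $G[V_{\widetilde M}]$ to $\Gamma$ — verifying carefully that contracting the pendants and suppressing the degree-two vertices $v_{i_X,j_X}^X$ neither creates nor destroys components — together with the (easy but choice-independent) computation $cc(\Gamma')=k$; everything after that is a short combinatorial deduction.
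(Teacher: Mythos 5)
Your proof is correct, and the overall strategy is the same as the paper's: invoke Lemma~\ref{dm4}, use the tightness of the count $|\widetilde M|\geq 3k+m$ (against Observation~\ref{dmobs1}) to force all $2k$ Type-I edges and exactly one Type-II edge per $X\in\mathcal{S}$ into $\widetilde M$, define $\widehat{S}$ from the $P_i$ gadgets, and then show that the bound of $k$ on the number of components prevents the $S_s$ gadgets from ``merging'' across the row assignment. Where you differ is in how you establish the upper bound $cc(G[V_{\widetilde M}])\le k$. The paper argues directly in $G[V_{\widetilde M}]$: it defines $C_j$ as the component containing $v_j^R$, observes that every vertex of $V_{\widetilde M}$ is adjacent (possibly via a pendant and a degree-two vertex) to some $v_j^R$, and hence the $C_j$'s exhaust all components, so there are at most $k$ of them. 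You instead contract the pendants and suppress the degree-two vertices $v_{i_X,j_X}^X$ to obtain the auxiliary bipartite graph $\Gamma$ on $\{L_i\}\cup\{R_j\}$, isolate the spanning subgraph $\Gamma'$ carrying only the $P_i$-edges, prove the neat choice-independent identity $cc(\Gamma')=k$ (each $L_i$ has degree one, so stars plus isolated $R$'s), and use monotonicity $cc(\Gamma)\le cc(\Gamma')$. Both arguments are short and correct; yours packages the counting a bit more transparently, at the cost of the extra bookkeeping step (which you handle correctly) verifying that the contraction and suppression preserve the component count. The final deduction from ``no merging'' to $j_{S_s}=j_{P_{i_{S_s}}}$, and hence $(i_{S_s},j_{S_s})\in\widehat S\cap S_s$, matches the paper's exactly.
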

\begin{proof}
By Lemma \ref{dm4}, let $\widetilde{M}$ be a matching in $G$ such that $|\widetilde{M}|\geq 3k+m$, $G[V_{\widetilde{M}}]$ has at least $k$ connected components, and $\widetilde{M}$ contains edges from Type-I and Type-II only. First, we claim that $v_{i}^{L}$ and $v_{j}^{R}$ are saturated by $\widetilde{M}$ for each $i,j\in [k]$. To the contrary, without loss of generality, let $v_{q}^{L}$ is not saturated by $\widetilde{M}$ for some $q\in[k]$. Then, by Observation \ref{dmobs1}, at most $2k-1$ edges from Type-I and at most $k+m$ edges from Type-II can belong to $\widetilde{M}$. It implies that $|\widetilde{M}|\leq 3k+m-1$, a contradiction. The same arguments hold when we assume $v_{q}^{R}$ is not saturated by $\widetilde{M}$ for some $q\in[k]$.  Thus, $v_{i}^{L}$ and $v_{j}^{R}$ are saturated by $\widetilde{M}$ for each $i,j\in [k]$. 

Next, we claim that $|\widetilde{M}\cap E^{X}|=1$ for each $X\in \mathcal{S}$. Else, if for some $X'\in \mathcal{S}$, $\widetilde{M}\cap E^{X'}=\emptyset$, then by Observation \ref{dmobs1}, $|\widetilde{M}|\leq 3k+m-1$, a contradiction (to the fact that $|M|\geq 3k+m$).

Now, for each $i\in [k]$, let $v_{i,f(i)}^{P_{i}}$ be the unique vertex in $V_{\widetilde{M}}\cap V^{P_{i}}$. Let $\widehat{S}=\{(i,f(i)): i\in [k]\}$. We claim that
$\widehat{S}$ is a solution to the initial \textsc{$k \times k$ Hitting Set} instance. First, note that $\widehat{S}$ contains exactly one element from each row. Next, let $C_{j}$ be the connected component of $G[V_{\widetilde{M}}]$ that contains $v_{j}^{R}$. Note that $v_{i}^{L}\in C_{j}$ whenever
$j = f(i)$, i.e., $(i,j) \in \widehat{S}$. This implies that
$\bigcup_{j=1}^{k}C_{j}$ contains all vertices $v_{i}^{L}$. Moreover, as each vertex in $V^{X}$ for
$X\in \mathcal{S}$ is adjacent to some vertex $v_{j}^{R}$, $C_{j}'s$ are the only connected components of $G[V_{\widetilde{M}}]$. As $G[V_{\widetilde{M}}]$ contains at
least $k$ connected components, $C_{j}\neq C_{j'}$ for distinct $j,j' \in [k]$.

Next, let us consider those sets in $\mathcal{S}$, which correspond to the sets $S_{s}$, $s\in [m]$. Let $v_{i,j}^{S_{s}}$ be the unique vertex in $V_{\widetilde{M}}\cap V^{S_{s}}$. Note that $v_{i,j}^{S_{s}}$  connects $v_{i}^{L}\in C_{f(i)}$ with $v_{j}^{R}\in C_{j}$. As components $C_{j}'s$ are pairwise distinct, this implies that $j=f(i)$ and $(i,j)\in \widehat{S}\cap S_{s}$. Thus, the connected
components of $G[V_{\widetilde{M}}]$ are exactly the components $C_{j}$, $j\in [k]$.  
\end{proof}

By Proposition \ref{dmthm}, Lemma \ref{dm2}, Lemma \ref{dm3}, and Lemma \ref{dm5}, we have the following theorem.
\disconnectedmatching*

\section{Conclusions}
\label{sec:conclusions}
In this paper, we have studied \textsc{Induced Matching}, \textsc{Acyclic Matching}, $c$-\textsc{Disconnected Matching}, and \textsc{Disconnected Matching}, which are $\mathsf{NP}$-complete variants of the classical \textsc{Maximum Matching} problem, from the viewpoint of Parameterized
Complexity. We analyzed these problems with respect to the parameter treewidth, being, perhaps, the most well-studied parameter in the field.
\medskip

\noindent \textbf{$\mathsf{SETH}$-based Lower Bounds.} 
It would be of interest to prove or disprove whether the following is true:
\begin{conjecture}
	Unless the Strong Exponential Time Hypothesis (\textsf{SETH}) is false, there does not exist a constant $\epsilon>0$ and an
	algorithm that, given an instance $(G,\ell)$ together with a path decomposition of $G$ of width $\pw$, solves \textsc{Induced Matching} in $(3-\epsilon)^{\pw}\cdot n^{\mathcal{O}(1)}$ time.
\end{conjecture}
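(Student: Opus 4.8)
We describe a plan of attack, following the standard template for \textsf{SETH}-based lower bounds parameterized by pathwidth (see, e.g.,~\cite{cygan}). The starting observation is that the dynamic programming algorithm of Section~\ref{IM} keeps, for each vertex of a bag, exactly one of the three states \textbf{white}, \textbf{black}, \textbf{gray}; the conjecture asserts that these three states are unavoidable, so the plan is to reduce from $q$-CNF-SAT and to encode Boolean information through gadgets whose optimal local behaviour is genuinely ternary. Fix a large constant $t$, partition the $n$ variables of the input formula into $\lceil n/t\rceil$ groups of size at most $t$, set $p=\lceil t/\log_{2}3\rceil$ so that $3^{p}\ge 2^{t}$, and fix, for each group, an injection $\iota$ from its $\le 2^{t}$ truth assignments into $\{0,1,2\}^{p}$.

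\medskip
\noindent\textbf{The construction.} For every group $F_{i}$ we create $p$ vertex-disjoint ``tracks'', each a long path-like gadget with $\Theta(m)$ ``columns'', so that $G$ contains $p\lceil n/t\rceil$ tracks running in parallel. The basic building block is a \emph{propagation gadget} joining two consecutive track vertices $v$ and $v'$: in any induced matching of maximum size the state of $v$ (white/black/gray) equals that of $v'$, while the number of matching edges the gadget contributes is a fixed constant independent of this common state; hence along a track the encoded value in $\{0,1,2\}$ is constant and ``free of charge''. For each group $F_{i}$ we add a \emph{legality gadget} reading the joint state of its $p$ tracks, designed so that the global target $\ell$ can be attained only when this joint state lies in the image of $\iota$, i.e.\ encodes a genuine assignment of $F_{i}$. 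For each clause $C_{j}$ we place at the $j$-th column a \emph{clause gadget}: for each of its $\le q$ literals it contains a small decoder reading the $p$ track states of the relevant group and computing the truth value of the literal's variable under $\iota$, together with an OR-gadget that attains its maximum matching contribution if and only if at least one literal is satisfied. Taking $\ell$ to be the sum of the (state-independent) contributions of all propagation, legality, decoder and OR gadgets, a careful case analysis should give that $G$ admits an induced matching of size at least $\ell$ if and only if the formula is satisfiable. For the pathwidth, a path decomposition that sweeps the columns from left to right keeps in each bag the $p\lceil n/t\rceil$ ``current'' track vertices plus the $O(qp)=O(1)$ vertices of the current column's gadgets, so $\pw(G)\le p\lceil n/t\rceil+O(1)$, and this decomposition is produced along with $G$.

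\medskip
\noindent\textbf{The implication.} Suppose some algorithm solves \textsc{Induced Matching} in $(3-\epsilon)^{\pw}\cdot n^{\mathcal{O}(1)}$ time. Running it on $G$ with the decomposition above, and using $p\lceil n/t\rceil=(n/\log_{2}3)(1+o_{t}(1))+O_{t}(1)$, its running time is
\[
(3-\epsilon)^{\,p\lceil n/t\rceil+O(1)}\cdot|V(G)|^{\mathcal{O}(1)}=2^{\,\log_{3}(3-\epsilon)\cdot n\,(1+o_{t}(1))}\cdot|V(G)|^{\mathcal{O}(1)},
\]
where $|V(G)|=\mathrm{poly}(n,m)$ since $t$ and $q$ are constants (the gadget sizes are bounded by $\mathrm{poly}(2^{t})$ and $\mathrm{poly}(q)$). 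Let $\lambda_{\epsilon}=\log_{3}(3-\epsilon)<1$, depending only on $\epsilon$. Choose $\delta>0$ with $\log_{2}(2-\delta)>\lambda_{\epsilon}$ and then $t$ large enough that $\lambda_{\epsilon}(1+o_{t}(1))<\log_{2}(2-\delta)$; by \textsf{SETH} there is a $q$ for which $q$-CNF-SAT cannot be solved in $(2-\delta)^{n}\cdot n^{\mathcal{O}(1)}$ time, yet the reduction from that $q$-CNF-SAT would solve it in $2^{(1-\Omega(1))n}\cdot n^{\mathcal{O}(1)}$ time, a contradiction.

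\medskip
\noindent\textbf{Main obstacle.} The crux is realizing the ternary propagation gadget — and the decoder and OR-gadgets — for \emph{induced} matchings, where inserting even a single vertex or edge changes the induced subgraph globally, so the usual ``local'' reasoning must be replaced by a careful accounting of how every extra vertex can extend or destroy an induced matching. Two further points need attention: showing that the ``illegal'' joint states of the $p$ tracks of a group (those outside the image of $\iota$) cannot attain, let alone exceed, the target $\ell$; and verifying that neighbouring clause gadgets sharing a group's tracks cost only an additive $O(1)$ in pathwidth. Designing a single such gadget — essentially an induced-matching analogue of the ``path gadgets'' used in the corresponding lower bound for \textsc{Dominating Set} — is the natural first milestone; the remainder of the template is routine.
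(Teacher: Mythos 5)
The statement you are trying to prove is stated in the paper as an open \emph{conjecture}: the authors explicitly write that it would be of interest to prove or disprove it, and the only lower bound they actually establish for \textsc{Induced Matching} under \textsf{SETH} is the weaker $(\sqrt{6}-\epsilon)^{\pw}$ bound, obtained by composing the known $(6-\epsilon)^{\pw}$ lower bound for \textsc{Upper Dominating Set} with the Moser--Sikdar reduction from \textsc{Irredundant Set} (which doubles the pathwidth, hence the square root). So there is no proof in the paper for you to match, and your submission must stand on its own.

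It does not. What you have written is the generic Lokshtanov--Marx--Saurabh template for base-$c$ \textsf{SETH} lower bounds (group variables, encode assignments in $\{0,1,2\}^{p}$ with $3^{p}\ge 2^{t}$, propagate states along tracks, check clauses column by column), and the \textbf{implication} paragraph is the routine arithmetic that works for \emph{any} problem once the gadgets exist. The entire mathematical content of such a proof lies in constructing the propagation, legality, decoder and OR gadgets and proving (i) that in every optimum induced matching each track really is forced into one of exactly three indistinguishable-cost states that propagate unchanged, and (ii) that no ``cheating'' configuration (an illegal joint state, or a locally suboptimal gadget compensated elsewhere) reaches the target $\ell$. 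You supply none of this; your own ``Main obstacle'' paragraph concedes that designing even one such gadget is an unresolved first milestone. For induced matchings this is precisely where the difficulty sits: the three DP states (unmatched; matched within the processed part; saturated but with the matching edge still to come) must be realized by a static gadget in which all three options contribute the same number of edges yet interact differently with the next column, and the non-local nature of the inducedness constraint (a single edge between two saturated vertices invalidates the matching globally) makes the usual local exchange arguments fail. Until those gadgets are exhibited and their state analysis carried out, the proposal is a restatement of the conjecture plus a standard wrapper, not a proof.
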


Given a graph $G$ and a positive integer $\ell$, in \textsc{Upper Dominating Set}, the aim is to find a minimal dominating set (i.e., a dominating set that is not a proper subset of any other dominating set) of cardinality at least $\ell$. Note that using the following proposition, one can achieve a lower bound for \textsc{Induced Matching} under the \textsf{SETH}.
\begin{proposition} [\cite{dublois}] \label{prop:uds}
Unless the \textsf{SETH} is false, there does not exist a constant $\epsilon>0$ and an algorithm that, given an instance $(G,\ell)$ together with a path decomposition of $G$ of width $\pw$, solves \textsc{Upper Dominating Set} in $(6-\epsilon)^{\pw}\cdot n^{\mathcal{O}(1)}$ time.
\end{proposition}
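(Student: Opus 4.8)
The plan is to obtain the bound by a fine-grained reduction from $q$-CNF-SAT, following the block-encoding methodology that is standard for \textsf{SETH}-based pathwidth lower bounds. Fix $\epsilon>0$. By \textsf{SETH} there is a constant $q=q(\epsilon)$ and a $\delta=\delta(\epsilon)>0$ such that $q$-CNF-SAT on $n$ variables (which has $m\le n^{\mathcal{O}(q)}$ clauses) cannot be solved in time $(2-\delta)^{n}\cdot n^{\mathcal{O}(1)}$. From such a formula $\phi$ I would build, in polynomial time, a graph $G$, an integer $\ell$, and a path decomposition of $G$ of width $w$, such that (i) $\phi$ is satisfiable if and only if $G$ has a minimal dominating set of size at least $\ell$, and (ii) $w\le n\log_{6}2+\gamma n+\mathcal{O}_{\epsilon}(1)$ for a $\gamma>0$ that can be made as small as we wish. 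Since $\log_{6}(6-\epsilon)<1$, picking $\gamma$ small enough gives $(6-\epsilon)^{w}\le(2-\delta)^{n}\cdot 2^{\mathcal{O}_{\epsilon}(1)}$, so a $(6-\epsilon)^{\pw}\cdot n^{\mathcal{O}(1)}$-time algorithm for \textsc{Upper Dominating Set}, run on $G$ together with the decomposition produced by the reduction, would solve $q$-CNF-SAT within the forbidden time bound, contradicting \textsf{SETH}.

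The reduction is tailored to the fact that the canonical dynamic program for \textsc{Upper Dominating Set} over a path decomposition maintains, for each vertex of the current bag, one of six states: in the solution with no solution-neighbour seen so far (hence provisionally its own private neighbour); in the solution with a solution-neighbour and an external private neighbour already assigned; in the solution with a solution-neighbour but still lacking a private neighbour; or outside the solution and undominated, dominated exactly once, or dominated at least twice. I would partition the $n$ variables into $\lceil n/t\rceil$ blocks of $t$ variables, for a large constant $t=t(\epsilon)$, and for each block build a \emph{register}: a set of $p_{0}=\lceil t\log_{6}2\rceil$ distinguished vertices whose joint six-ary state has $6^{p_{0}}\ge 2^{t}$ possibilities, together with an invalidation gadget penalising every joint state outside a fixed injective image of the $2^{t}$ assignments of the block, so that in a maximum minimal dominating set each register encodes exactly one block assignment. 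These registers are threaded through $m$ consecutive \emph{columns}, one per clause, by propagation gadgets designed so that any maximum minimal dominating set is forced to keep each register in the same state across all columns; a maximum solution of $G$ thus induces a single global truth assignment of $\phi$.

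For each clause $C_{j}$, which mentions at most $q$ variables and therefore touches at most $q$ registers, I would attach at column $j$ a checking gadget to the relevant register vertices. The gadget contributes a fixed quantity to the solution size exactly when it can be completed while respecting both domination and minimality, and it is wired as an OR over the at most $q$ literals of $C_{j}$ (using, e.g., an auxiliary vertex that must be dominated from outside, or a solution vertex that must acquire a private neighbour) so that completion is possible iff at least one literal of $C_{j}$ is true under the assignment read off the registers. Setting $\ell$ to be the exact sum of all per-gadget quotas then makes a minimal dominating set of size $\ell$ exist iff every clause gadget can be completed simultaneously, i.e.\ iff $\phi$ is satisfiable. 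For the width bound one takes the path decomposition that sweeps the columns left to right; a generic bag holds the $p_{0}$ vertices of every register (in total $\lceil n/t\rceil p_{0}\le n\log_{6}2+n/t+\mathcal{O}_{\epsilon}(1)$) plus the $\mathcal{O}_{q}(1)$ auxiliary vertices of the clause gadget being processed and of the propagation transition at that column, so choosing $t\ge 1/\gamma$ drives the overhead below $\gamma n$.

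I expect the main obstacle to be the actual design and the two-directional verification of the three gadget families for \textsc{Upper Dominating Set} rather than for an easier domination-type problem: because minimality is governed by private neighbours, each of the register, invalidation, propagation, and clause-checking gadgets must simultaneously control which vertices are dominated, how many times, and which solution vertices still owe a private neighbour, all while the total solution size lands on $\ell$ exactly. The delicate part is the ``hard'' direction — showing that every minimal dominating set of size $\ell$ must behave like a consistent satisfying assignment (no register in a penalised state, registers consistent along every column, every clause gadget completed) — which is precisely the careful accounting carried out in \cite{dublois}; a self-contained proof would have to reproduce it.
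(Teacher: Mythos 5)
This statement is not proved in the paper at all: it is Proposition~\ref{prop:uds}, imported verbatim from Dublois, Lampis, and Paschos \cite{dublois} and used as a black box to derive the $(\sqrt{6}-\epsilon)^{\pw}$ lower bound for \textsc{Induced Matching}. So there is no proof in the paper to compare yours against; the only question is whether your blind attempt constitutes a proof on its own.

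It does not, and you essentially say so yourself in your last paragraph. What you have written is the correct \emph{shape} of the argument --- the arithmetic skeleton ($p_{0}=\lceil t\log_{6}2\rceil$ register vertices per block of $t$ variables, so that $(6-\epsilon)^{\pw}<(2-\delta)^{n}$ for suitable $t$), the identification of the six per-vertex states of the natural dynamic program for \textsc{Upper Dominating Set}, and the division of labour into register, invalidation, propagation, and clause-checking gadgets --- but none of those four gadget families is actually constructed, and neither direction of the equivalence ($\phi$ satisfiable $\Leftrightarrow$ a minimal dominating set of size $\ell$ exists) is verified. For this particular problem the gadgets are genuinely delicate, because minimality is a global condition mediated by private neighbours, so one must certify for every gadget which vertices are dominated, how often, and which solution vertices still owe a private neighbour; this is the entire content of the result. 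A further concrete gap: your claim that propagation gadgets force each register to keep \emph{the same} state across all $m$ columns is typically not achievable directly in such reductions; the standard fix is to repeat the column sequence enough times and argue that the state can change only boundedly often, so that some long window is constant and it suffices to check the clauses there. Without that (or an actual mechanism enforcing exact preservation), even the high-level plan has a hole. As it stands, your text is a research plan for reproving \cite{dublois}, not a proof.
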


 Note that a dominating set is minimal if and only if it is also an irredundant set (a set of vertices $S$ in a graph such that for every $v\in S$, $
 N[S-\{v\}]\neq N[S])$. Given a graph $G$ and a positive integer $\ell$, \textsc{Irredundant Set} asks whether $G$ has an irredundant set of size at least $\ell$. To establish the $\mathsf{W}[1]$-hardness of \textsc{Induced Matching} (with respect to solution size as the parameter), Moser and Sikdar \cite{moser} gave a reduction from \textsc{Irredundant Set} to \textsc{Induced Matching} as follows: Given a graph $G$, where $V(G)=\{v_{1},\ldots,v_{n}\}$, construct a graph $H$ by making two copies $V'=\{v'_{1},\ldots,v'_{n}\}$ and $V''=\{v''_{1},\ldots,v''_{n}\}$ of $V(G)$ in $H$. Define $E(H) = \{ u'u'':u\in V(G)\} \cup \{u'v'', v'u'': uv\in E(G)\}$. Observe that if the pathwidth of $G$ is \pw, then the pathwidth of $H$ is at most $2\pw$. Also, note that $G$ has
an irredundant set of size $\ell$ if and only if $H$ has an induced matching of size $\ell$.  Therefore, by Proposition \ref{prop:uds}, we have the following theorem.

\begin{theorem} 
Unless the \textsf{SETH} is false, there does not exist a constant $\epsilon>0$ and an
	algorithm that, given an instance $(G,\ell)$ together with a path decomposition of $G$ of width $\pw$, solves \textsc{Induced Matching} in $(\sqrt{6}-\epsilon)^{\pw}\cdot n^{\mathcal{O}(1)}$ time.
\end{theorem}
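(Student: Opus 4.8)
The plan is to derive the $(\sqrt6-\epsilon)^{\pw}$ lower bound by composing two pathwidth-controlled reductions and then performing an elementary change of variables. I would work along the chain \textsc{Upper Dominating Set} $\to$ \textsc{Irredundant Set} $\to$ \textsc{Induced Matching}, where the second arrow is exactly the Moser--Sikdar construction recalled above: it sends a graph $G$ equipped with a path decomposition of width $\pw$ to a graph $H$ with a path decomposition of width at most $2\pw$ (which is produced explicitly, in polynomial time, from the one for $G$), and satisfies that $G$ has an irredundant set of size $\ell$ if and only if $H$ has an induced matching of size $\ell$.

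The first arrow needs to be argued. Using that a dominating set is minimal if and only if it is irredundant, every minimal dominating set of $G$ is an irredundant set, so the maximum size of a minimal dominating set of $G$ is at most the maximum size of an irredundant set of $G$; this already yields the ``only if'' direction of a reduction from \textsc{Upper Dominating Set} to \textsc{Irredundant Set} via the identity map. For the converse one needs that a maximum-size irredundant set may be assumed to be dominating, which fails for arbitrary graphs (the upper irredundance number can strictly exceed the upper domination number), so here I would either attach a gadget of constant pathwidth that forces every maximum irredundant set of the modified graph to be a minimal dominating set, or invoke the fact that the \textsc{Upper Dominating Set} instances produced by the hardness reduction underlying Proposition~\ref{prop:uds} are themselves ``rigid'', i.e.\ their maximum irredundant sets are already dominating. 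Either route gives: unless \textsf{SETH} fails, there is no $(6-\epsilon)^{\pw}\cdot n^{\mathcal{O}(1)}$-time algorithm for \textsc{Irredundant Set} when a path decomposition of width $\pw$ is supplied.

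Granting this, the remaining step is the arithmetic. Suppose, for contradiction, that for some $\epsilon\in(0,\sqrt6)$ there is a $(\sqrt6-\epsilon)^{\pw}\cdot n^{\mathcal{O}(1)}$-time algorithm for \textsc{Induced Matching}. Given an \textsc{Irredundant Set} instance $(G,\ell)$ with a path decomposition of width $p$, apply Moser--Sikdar to get $(H,\ell)$ with a path decomposition of width at most $2p$ and run the assumed algorithm; the running time is $(\sqrt6-\epsilon)^{2p}\cdot n^{\mathcal{O}(1)}=\bigl((\sqrt6-\epsilon)^{2}\bigr)^{p}\cdot n^{\mathcal{O}(1)}$. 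Since $(\sqrt6-\epsilon)^{2}=6-(2\sqrt6\,\epsilon-\epsilon^{2})=6-\epsilon'$ with $\epsilon'=2\sqrt6\,\epsilon-\epsilon^{2}>0$ (as $0<\epsilon<\sqrt6<2\sqrt6$), this is a $(6-\epsilon')^{p}\cdot n^{\mathcal{O}(1)}$-time algorithm for \textsc{Irredundant Set}. Composing with the first reduction, which raises the pathwidth by at most an absolute constant $c$ and hence contributes only a multiplicative factor $(6-\epsilon')^{c}=\mathcal{O}(1)$, produces a $(6-\epsilon')^{p}\cdot n^{\mathcal{O}(1)}$-time algorithm for \textsc{Upper Dominating Set}, contradicting Proposition~\ref{prop:uds} and thus refuting \textsf{SETH}.

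The main obstacle is the \textsc{Upper Dominating Set}-to-\textsc{Irredundant Set} step: the gap between irredundance and minimal domination must be closed without inflating the pathwidth beyond an additive constant, and it is exactly here that the possibility of the upper irredundance number exceeding the upper domination number forces either a gadget or a structural guarantee on the source instances. By contrast, the Moser--Sikdar reduction and its pathwidth bound are already available, and the concluding observation --- that squaring $\sqrt6-\epsilon$ stays strictly below $6$ --- is precisely what turns the factor-$2$ blow-up in pathwidth into the base $\sqrt6$.
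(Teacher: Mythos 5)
Your chain of reductions and the concluding arithmetic match the paper's proof exactly: start from the Dublois et al.\ $(6-\epsilon)^{\pw}$ lower bound for \textsc{Upper Dominating Set} (Proposition~\ref{prop:uds}), pass through the Moser--Sikdar map $G\mapsto H$ with $\pw(H)\le 2\pw(G)+O(1)$ and with the maximum induced matching of $H$ equal to the maximum irredundant set of $G$, and then observe $(\sqrt6-\epsilon)^{2}=6-\epsilon'$ with $\epsilon'>0$.

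The one step you rightly single out as delicate is also the one the paper handles most casually. The paper only remarks that ``a dominating set is minimal if and only if it is also an irredundant set'' before invoking Proposition~\ref{prop:uds}, but this fact by itself yields only $IR(G)\ge\Gamma(G)$ (every minimal dominating set is irredundant), whereas the Moser--Sikdar construction preserves $IR$, not $\Gamma$, and $IR(G)>\Gamma(G)$ is possible in general. So the lower bound does not automatically transfer from \textsc{Upper Dominating Set} to \textsc{Irredundant Set}, the problem the Moser--Sikdar reduction actually starts from. Exactly as you observe, one needs either a bounded-pathwidth gadget forcing maximum irredundant sets to be minimal dominating sets, or a verification that the hard instances from Dublois et al.\ already satisfy $IR=\Gamma$ (equivalently, that their $(6-\epsilon)^{\pw}$ bound holds for \textsc{Irredundant Set} as well). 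Your proposal correctly identifies this obstruction but leaves both repairs unexecuted, and the paper's own text does likewise; this is the one place in the argument that calls for an explicit verification rather than the stated one-line implication.
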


Given a graph $G$, a vertex set $S \subseteq V(G)$ is an \textit{acyclic set} if $G[S]$ is an acyclic graph. Given a graph $G$ and a positive integer $\ell$, in \textsc{Maximum Induced Forest}, the aim is to find an acyclic set of cardinality at least $\ell$. Observe that \textsc{Maximum Induced Forest} is the complement of \textsc{Feedback Vertex Set}. Furthermore, it is known that unless the \textsf{SETH} is false, there does not exist a constant $\epsilon>0$ and an
	algorithm that, given an instance $(G,\ell)$ together with a path decomposition of $G$ of width \pw, solves \textsc{Feedback Vertex Set} in $(3-\epsilon)^\pw\cdot n^{\mathcal{O}(1)}$ time \cite{cygan}. Thus, we have the following corollary.
 \begin{corollary} 
Unless the \textsf{SETH} is false, there does not exist a constant $\epsilon>0$ and an
	algorithm that, given an instance $(G,\ell)$ together with a path decomposition of $G$ of width $\pw$, solves \textsc{Maximum Induced Forest} in $(3-\epsilon)^{\pw}\cdot n^{\mathcal{O}(1)}$ time.
\end{corollary}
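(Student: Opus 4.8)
The plan is to use the trivial complementation relation that the excerpt already alludes to, reducing from \textsc{Feedback Vertex Set}. First I would record the exact equivalence: for any graph $G$ on $n$ vertices, a set $S\subseteq V(G)$ is an acyclic set (i.e., $G[S]$ is a forest) if and only if $V(G)\setminus S$ is a feedback vertex set of $G$. Indeed, deleting a feedback vertex set leaves an acyclic graph by definition, and conversely if $G[S]$ is acyclic then $V(G)\setminus S$ hits every cycle. Consequently, $G$ has an induced forest of size at least $\ell$ if and only if $G$ has a feedback vertex set of size at most $n-\ell$. This equivalence is as parameter-friendly as possible: the underlying graph is literally unchanged, so its pathwidth, and indeed any given path decomposition of it, is unchanged as well.

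Next I would argue by contraposition. Suppose there were a constant $\epsilon>0$ and an algorithm $\mathcal{A}$ that, given $(G,\ell)$ together with a path decomposition of $G$ of width $\pw$, solves \textsc{Maximum Induced Forest} in $(3-\epsilon)^{\pw}\cdot n^{\mathcal{O}(1)}$ time. Given an instance $(G,k)$ of \textsc{Feedback Vertex Set} together with a path decomposition of width $\pw$, I would run $\mathcal{A}$ on $(G,\, n-k)$ with the \emph{same} path decomposition; by the equivalence above, $\mathcal{A}$ returns yes precisely when $G$ admits a feedback vertex set of size at most $k$. Since $n-k\le n$ and the path decomposition is reused verbatim, the running time remains $(3-\epsilon)^{\pw}\cdot n^{\mathcal{O}(1)}$, yielding such an algorithm for \textsc{Feedback Vertex Set} parameterized by pathwidth.

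Finally I would invoke the known \textsf{SETH} lower bound for \textsc{Feedback Vertex Set} recorded in the excerpt (from \cite{cygan}): unless the \textsf{SETH} fails, no algorithm solves \textsc{Feedback Vertex Set} in $(3-\epsilon)^{\pw}\cdot n^{\mathcal{O}(1)}$ time given a path decomposition of width $\pw$. This contradiction completes the proof. I do not expect any real obstacle here; the only points meriting a sentence of care are that the complement relation is exact (no off-by-one issue, since the standard notion of feedback vertex set — deleting vertices so that the remainder, possibly empty or edgeless, is a forest — matches the ``acyclic set'' definition used here), and that the decision parameter $\ell$ for \textsc{Maximum Induced Forest} translates cleanly to $n-k$ for \textsc{Feedback Vertex Set} while leaving the instance graph and its path decomposition untouched.
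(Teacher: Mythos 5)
Your proposal is correct and follows exactly the paper's route: it observes that \textsc{Maximum Induced Forest} is the complement of \textsc{Feedback Vertex Set} on the very same graph (so the path decomposition carries over unchanged), and then invokes the known $(3-\epsilon)^{\pw}\cdot n^{\mathcal{O}(1)}$ \textsf{SETH} lower bound for \textsc{Feedback Vertex Set} from \cite{cygan}. The paper states this more tersely, but the argument is identical.
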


Given an instance $(G,\ell)$ of \textsc{Maximum Induced Forest}, we construct an instance $(H,\ell)$ of \textsc{Acyclic Matching} by adding a pendant edge to every vertex of $G$. Observe that if the pathwidth of $G$ is \pw, then the pathwidth of $H$ is at most $\pw+1$. It is easy to see that $G$ has an acyclic set of size at least $\ell$ if and only if there exists an acyclic matching in $H$ of size at least $\ell$. Thus, we have the following theorem.

\begin{theorem} 
Unless the \textsf{SETH} is false, there does not exist a constant $\epsilon>0$ and an
	algorithm that, given an instance $(G,\ell)$ together with a path decomposition of $G$ of width $\pw$, solves \textsc{Acyclic Matching} in $(3-\epsilon)^{\pw}\cdot n^{\mathcal{O}(1)}$ time.
\end{theorem}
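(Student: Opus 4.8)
The plan is to reduce from \textsc{Maximum Induced Forest}, for which the preceding corollary supplies a $(3-\epsilon)^{\pw}\cdot n^{\mathcal{O}(1)}$ lower bound under \textsf{SETH} (itself obtained from the \textsc{Feedback Vertex Set} lower bound via complementation). Given an instance $(G,\ell)$ of \textsc{Maximum Induced Forest} together with a path decomposition of $G$ of width $\pw$, I would build $H$ by attaching a pendant edge to every vertex of $G$: for each $v\in V(G)$ introduce a fresh vertex $v'$ and the edge $vv'$. The output instance of \textsc{Acyclic Matching} is $(H,\ell)$, and the reduction is clearly polynomial time.

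First I would verify the pathwidth bound $\pw(H)\le\pw(G)+1$. Starting from the given path decomposition of $G$, for each $v$ I locate a bag $\mathcal{B}$ containing $v$ and insert immediately after it a new bag $\mathcal{B}\cup\{v'\}$. This covers the edge $vv'$, preserves condition $T.3)$ since each $v'$ occurs in a single bag, and raises the maximum bag size by at most one; since the $v'$ are pairwise distinct and each touches only its own new bag, iterating over all $v$ yields a path decomposition of $H$ of width at most $\pw(G)+1$, computable in polynomial time.

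Next I would prove the equivalence. For the forward direction, if $S\subseteq V(G)$ with $G[S]$ acyclic and $|S|\ge\ell$, then $M=\{vv':v\in S\}$ is a matching of size $|S|\ge\ell$ with $V_M=S\cup\{v':v\in S\}$; the graph $H[V_M]$ is $G[S]$ with a pendant vertex hung off each of its vertices, hence still a forest, so $M$ is an acyclic matching. For the backward direction, let $M$ be an acyclic matching in $H$ with $|M|\ge\ell$ and set $S=V_M\cap V(G)$. Every edge of $M$ is either an original edge $uv$ of $G$, contributing two vertices to $S$, or a pendant edge $vv'$, contributing $v$ to $S$; hence $|S|\ge|M|\ge\ell$. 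Moreover $H[S]=G[S]$, since the only edges of $H$ among vertices of $G$ are the edges of $G$, and $H[S]$ is an induced subgraph of the forest $H[V_M]$, so $G[S]$ is acyclic. Thus $(G,\ell)$ is a Yes-instance of \textsc{Maximum Induced Forest} if and only if $(H,\ell)$ is a Yes-instance of \textsc{Acyclic Matching}.

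Finally I would assemble the lower bound: a hypothetical algorithm solving \textsc{Acyclic Matching} in $(3-\epsilon)^{\pw}\cdot n^{\mathcal{O}(1)}$ time, fed the instance $(H,\ell)$ and the path decomposition of $H$ of width $\le\pw(G)+1$ constructed above, would solve \textsc{Maximum Induced Forest} in $(3-\epsilon)^{\pw(G)+1}\cdot n^{\mathcal{O}(1)}=(3-\epsilon)^{\pw(G)}\cdot n^{\mathcal{O}(1)}$ time, contradicting the corollary under \textsf{SETH}. I do not expect a genuine obstacle here; the only points requiring care are the constant-factor bookkeeping that absorbs the $+1$ in the exponent, and the backward direction, where one must check that passing from $V_M$ to $V_M\cap V(G)$ both keeps the count at least $\ell$ and yields an induced subforest of $H[V_M]$.
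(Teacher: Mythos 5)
Your proof is correct and follows exactly the same route as the paper: reduce from \textsc{Maximum Induced Forest} by attaching a pendant edge to every vertex, bound the pathwidth by $\pw+1$, and observe that acyclic vertex sets of $G$ correspond to acyclic matchings of $H$. The paper's own argument is just a two-sentence sketch; you have supplied exactly the details (the bag-insertion argument for the pathwidth bound, the forward map $S\mapsto\{vv':v\in S\}$, and the backward counting/induced-subgraph argument) that it leaves implicit.
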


\noindent \textbf{Other Remarks.} Regarding \textsc{Acyclic Matching}, we note that the \emph{rank-based} method introduced by Bodlaender et al. \cite{bodl} can be used to derandomize our algorithm for \textsc{Acyclic Matching}, presented in Section \ref{AM}, in the standard way in which it is used to derandomize algorithms based on  Cut $\&$ Count. However, the dependence on the treewidth (specifically, the constant in the exponent) in the running time will become slightly worse. 

It is also noteworthy that the algorithms presented in Sections \ref{IM}-\ref{CDM} can be used to solve the optimization versions of their respective problems as well. Given a Yes-instance of \textsc{$\mathcal{P}$ Matching}, where $\mathcal{P}\in$ $\{\textsc{Induced}, \textsc{Acyclic}, c$-$\textsc{Disconnected}\}$, the idea is to remove an arbitrary vertex from the input graph and run the algorithm (presented in this paper) on the modified graph. If the modified graph becomes a No-instance, then the chosen vertex belongs to every solution of the input graph (here, a vertex belonging to a solution means that the respective matching saturates the vertex). Otherwise, we proceed with the modified graph and repeat the process. Each time, either we can identify a vertex that belongs to every solution, or we can reduce the size of the input graph by one vertex. Since we repeat the process at most $n$ times, where $n$ is the number of vertices in the input graph, our algorithm remains $\mathsf{FPT}$ (with a linear overhead in the running time). Future research directions could explore other structural parameterizations, such as vertex cover or feedback vertex set, with the aim of achieving faster running times or polynomial kernels. Additionally, there is room for improving the running time of the algorithms presented in this paper.

\begin{center}\textbf{Acknowledgments}\end{center}
 The authors are supported by the European Research Council (ERC) project titled PARAPATH.

\end{document}